\newtheorem{theorem}{Theorem}[section]
\newtheorem{lemma}[theorem]{Lemma}
\newtheorem{proposition}[theorem]{Proposition}
\newtheorem{corollary}[theorem]{Corollary}
\newtheorem*{theorem*}{Theorem}
\theoremstyle{remark}
\newtheorem{definition}[theorem]{Definition}
\newenvironment{dispmathenv}%
{\stepcounter{equation}

 \medskip

}{%
\medskip%
\hfill(\theequation)\newline
}
\numberwithin{equation}{section}
\newcommand{\N}{\mathbb{N}}
\newcommand{\R}{\mathbb{R}}
\newcommand{\s}{\mathbb {S}}
\newcommand{\E}{\mathbb{E}}
\newcommand{\prob}{\mathbb{P}}
\newcommand{\W}{\mathcal {W}}
\newcommand{\K}{\mathcal {K}}
\newcommand{\p}{\mathcal {P}}
\newcommand{\Ordo}{\mathcal {O}}
\begin{document}
\title[Propagation of Chaos for the Thermostatted Kac Master
equation]{Propagation of Chaos for the Thermostatted Kac Master
  equation }
\author{Eric Carlen, Dawan Mustafa, Bernt Wennberg}
%
%
%
%
%
%
%
\begin{abstract}
 The Kac model is a simplified model of an $N$-particle system in which the collisions of a real particle system are modeled by
 random jumps of pairs of particle velocities. Kac proved propagation of chaos for this model, and hence provided a rigorous
 validation of the corresponding Boltzmann equation. Starting with the same model we consider an $N$-particle system in which the particles
 are accelerated between the jumps by a constant uniform force field which conserves the total energy of the system.
 We show propagation of chaos for this model.
\end{abstract}
\maketitle


\section{Introduction}
\noindent
The most fundamental equation in the kinetic theory of gases is perhaps the Boltzmann equation, which was derived
by Ludwig Boltzmann in 1872. This equation describes the time evolution of the density of a single particle in a gas
consisting of a large number of particles and reads
\begin{equation}\label{boltzeq}
\frac{\partial}{\partial t} f(x,v,t)+v \cdot \nabla_x f(x,v,t)=Q(f,f)\,
\end{equation}
where $f(x,v,t)$ is a density function of a single particle, $x, v\in \R^3$ represent the position and
velocity of the particle, and $t \geq 0$ represents time.
The collision operator $Q$ is given by
\begin{equation}\label{colboltzmann}
Q(f,f)=\int_{\R^3}\int_{\s^2}[f(x,v',t)f(x,v_*',t)-f(x,v,t)f(x,v_*,t)]
B(v-v_*,\sigma)\mathrm{d}v_*\mathrm{d}\sigma.
\end{equation}
The case where $f$ is independent of $x$ is called the spatially homogeneous Boltzmann equation.
In equation (\ref{colboltzmann}), the pair $(v,v_*)$ represents the
velocities of two particles before a collision and $(v',v_*')$ the
velocities of these particles after the collision.

The fact that the collision operator $Q(f,f)$ involves products of the density
$f$ rather than a two particle density $f_2(x_1,v_1,x_2,v_2,t)$ is a consequence of Boltzmann's~ \emph{stosszahlansatz},
the assumption that two particles engaging in a collision are independent before the interaction.
It is a very challenging problem to improve on Landford's result from $1975$~\cite{lanford}, which essentially states
that the stosszahlansats holds for a time interval of the order of one fifth of the time mean time between collisions of an individual particle.
In an attempt to address the fundamental questions concerning the derivation of spatially homogeneous the Boltzmann equation,
Mark Kac introduced  a stochastic particle process consisting of $N$
particles from which he obtained an equation like the Boltzmann
equation (\ref{boltzeq}) as a mean field limit when the numbers of the
particles $N\rightarrow \infty$, (see~\cite{kac}): Consider the
\emph{master vector}  $\textbf{V}=(v_1,\dots,v_N)$, $v_i\in \R$, where each
coordinate represents the velocity of a particle. The spatial
distribution of the particles is ignored in this model, and the
velocities are one dimensional.
The state space of the particles is
the sphere in $\R^N$ with radius $\sqrt{N}$, that is the velocities
are restricted to satisfy the equation
\begin{equation}
v_1^2+\dots+v_N^2= N.
\end{equation}
The binary collisions in a the  gas are  represented by jumps
involving pairs of velocities from the master vector, with
exponentially distributed time intervals with intensity $1/N$. At each
collision time, the pair of velocities $(v_i, v_j)$ are chosen
randomly from the master vector and changes to $(v_i',v_j')$ according
to
\begin{align*}
v_i'=v_i\cos\theta+v_j\sin\theta,\\
v_j'=v_j\cos\theta-v_i\sin\theta.
\end{align*}
The parameter $\theta$ is chosen according to a law
$b(\theta)\mathrm{d}\theta$. In~\cite{kac}, for the sake of
simplicity, Kac chooses $b(\theta)=(2\pi)^{-1}$. Any bounded
$b(\theta)$ can be treated in the same way. The post-collision master vector is denoted by
$R_{ij}(\theta)\textbf{V}$. Note that  the collision process does not
conserve both momentum and
energy (only trivial collisions can conserve both invariants in this one
dimensional case). Hence
$$v_i^2 + v_j^ 2=v_i'^2 + v_i'^2,$$
but in general
$$v_i+ v_j\ne v_i' + v_i'.$$
The equation governing the evolution of this process is called Kac's master equation
(a Kolmogorov forward equation for Markov processes). It is given  by
\begin{equation}\label{kacmastereq}
      \frac{\partial}{\partial t}W_N(\textbf{V},t)=\K
      W_N(\textbf{V},t).   \qquad
      W_N(\textbf{V},0) = W_{N,0}(\textbf{V})
  \end{equation}
where the collision operator $\K$ has the form
\begin{equation}\label{coloperator}
\K W_N(\textbf{V},t)=\frac{2}{N-1}\sum_{1\leq i<j\leq N}(Q_{(i,j)}-I)W_N(\textbf{V},t),
\end{equation}
with
\begin{equation}\label{Qij}
Q_{(i,j)}W_N(\textbf{V},t)=\int_{-\pi}^{\pi}W_N((R_{ij}(\theta)\textbf{V}),t)\frac{\mathrm{d}\theta}{2\pi}.
\end{equation}
The particles are assumed to be identical and this corresponds to the initial density being symmetric:
\begin{definition}
A probability density $W(\textbf{V})$ on $\R^N$ is said to be
symmetric if for any bounded continuous function $\phi$ on $\R^N$
\begin{equation}
\int_{\R^N}\phi(\textbf{V})
W(\textbf{V})\mathrm{d}m^{(N)}=\int_{\R^N}\phi(\textbf{V}_{\sigma})
W(\textbf{V})\mathrm{d}m^{(N)}
\end{equation}
where for any permutation $\sigma \in \{1,\dots,N\}$
$$V_\sigma=(v_{\sigma(1)},\dots,v_{\sigma(N)}).$$
\end{definition}
\noindent We note that the master equation (\ref{kacmastereq})
preserves symmetry. To obtain an equation like (\ref{boltzeq}) which
describes the time  evolution
of a one-particle density, Kac studied the $k$-th marginal $f_k^N$ of $W_N(\textbf{V},t)$, where
\begin{equation}
f_k^N(v_1,\dots,v_k,t)=\int_{\Omega_k}W_N(\textbf{V},t)\mathrm{d}\sigma^{(k)}.
\end{equation}
Here, $\sigma^{(k)}$ is the spherical measure on
$\Omega_k=\s^{N-1-k}\left(\sqrt{N-(v_1^2+\dots+v_k^2)}\right)$.
Since $W_N$ is symmetric, the $k$-th marginal is also symmetric, and the
time evolution for the first marginal $f_1^N$ is obtained by
integrating the master equation (\ref{kacmastereq}) over the variables
$v_2 \dots v_N$. This yields
\begin{equation}\label{evfirstmargkac}
\frac{\partial}{\partial
  t}f_1^N(v_1,t)=2\int_{-\sqrt{N-v_1^2}}^{\sqrt{N-v_1^2}}\int_{-\pi}^{\pi}
[f_2^N(v_1',v_2',t)-f_2^N(v_1,v_2,t)]\frac{\mathrm{d}\theta}{2\pi}\mathrm{d}v_2,
\end{equation}
where
$$ v_1'=v_1\cos\theta + v_2\sin\theta,\hspace{0.2in} v_2'=v_2\cos\theta - v_1\sin\theta.$$
If we had $f_2^N(v_1,v_2,t)\approx f_1^N(v_1,t)f_1^N(v_2,t)$ in a weak
sense (which is defined later) then the evolution equation (\ref{evfirstmargkac}) for the first marginal
would look like the spatially homogenous Boltzmann equation, i.e., equation (\ref{boltzeq}) without the position variable $x$. Kac
suggested in~\cite{kac} that one should take a sequence of initial
densities $W_{N,0}(\textbf{V})$ which have the ``Boltzmann property'' that is,
$$\lim_{N\rightarrow
  \infty}f_k^N(v_1,\dots,v_k,0)=\prod_{j=1}^{k}\lim_{N\rightarrow
  \infty}f_1^N(v_j,0),$$
weakly in the sense of measures on $\R^k$. The Boltzmann property means that for each fixed $k$, the joint
probability densities of the first $k$ coordinates tend to product
densities when
$N\rightarrow \infty$.
By analyzing how the collision operator acts on functions depending
on finitely many variables  and a combinatorial argument Kac showed
that for all $t>0$, the sequence $W_N(\textbf{V},t)$ also has the
Boltzmann property, that is, the Boltzmann property propagates in
time. In this case the limit of the first marginal
$f(v,t)=\lim_{N\rightarrow \infty}f_1^N(v,t)$ satisfies the
Boltzmann-Kac equation
\begin{equation*}
\frac{\partial}{\partial t}f(v,t)=\mathcal{Q}(f,f),
\end{equation*}
where
\begin{equation}\label{colkacboltzmann}
\mathcal{Q}(f,f)(v)=2 \int_\R\int_{-\pi}^{\pi}[f(v',t)f(u',t)-f(v,t)f(u,t)]
  \frac{\mathrm{d}\theta}{2\pi}\mathrm{d}u.
\end{equation}
What Kac refereed to as the 'Boltzmann property' is nowadays often called
\emph{chaos}. More precisely, we have the following definition:
\begin{definition}
Let $f$ be a given probability density on $\R$ with respect to the
Lebesgue measure $m$. For each $N\in \N$, let $W_N$
be a probability density on $\R^N$ with respect to the product measure
$m^{(N)}$. Then the sequence $\{W_N\}_{N\in \N}$ of probability
densities on $\R^N$ is said to be $f$-{\bf chaotic} if
\begin{enumerate}
  \item Each $W_N$ is a symmetric function of the variables $v_1,v_2,\cdots,v_N$.
  \item For each fixed $k\in \N$ the k-th marginal
    $f_k^N(v_1,\dots,v_k)$ of $W_N$ converges to
    $\prod_{i=1}^{k}f(v_i) $, as $N\rightarrow \infty$,
   ($f(v)=\lim_{N\rightarrow \infty}f_1^N(v)$) in the sense of weak
   convergence, that is, if $\phi(v_1,v_2,\dots,v_k)$ is bounded
   continuous function on $\R^k$, then
      $$\lim_{N\rightarrow \infty}\int_{\R^N}\phi(v_1,v_2,\dots,v_k)W_N(\textbf{V})\mathrm{d}m^{(N)}
       =\int_{\R^k} \phi(v_1,v_2,\dots,v_k) \prod_{i=1}^{k}f(v_i)\mathrm{d}m^{(k)}.$$
\end{enumerate}
\end{definition}
The aim of this paper is to show propagation of chaos for a new many particle model with the
same collision process, but where between the collisions, the particles are accelerated
by a force field which always keep the total energy constant. In the
next subsection we describe this process.  In the original problem
considered by  Kac~\cite{kac}, correlations between particles were
only introduced through the binary collisions. In our case, the force
field will introduce correlations as well, but of a different character.\\
\indent Our proof of propagation of chaos for this model with two
distinct sources of correlation  builds on recent work on propagation
of chaos, but also includes a {\em quantitative}
development
of Kac's original argument  which we apply to control the correlations
introduced by the collisions. We must quantify these correlations in
order to
control the correlating effects of the force field.
\subsection{The Thermostatted Kac master equation.\\ \\}
\noindent In the Kac model, the particles interact via random jumps
which correspond to random collisions between pairs of particles. We
now consider
a stochastic model where the particles have the same jump process as
in the Kac model, but are now also accelerated between the jumps under
a constant uniform force field $\textbf{E}=E(1,1,\dots,1)$ which
interacts with a Gaussian thermostat in order to keep the total energy
of the system constant.
For a detailed discussion see~\cite{yosief}. Consider the master
vector $\textbf{V}=(v_1,\dots,v_N)$ on the sphere $\s^{N-1}(\sqrt{N})$. The vector
$\textbf{V}$ clearly depends on time , and when needed  we
write $\textbf{V}(t)$ instead of $\textbf{V}$. It is also convenient
to use a coordinate-system in which $E>0$. The Gaussian thermostat is
implemented as the projection of $\textbf{E}$ into the tangent plane
of $\s^{N-1}(\sqrt{N})$ at the point $\textbf{V}$. The time evolution
of the master vector between collisions is then given by :
\begin{equation}\label{Nparticled}
\frac{d}{dt} \textbf{V} = \textbf{F}(\textbf{V}),
\end{equation}
where
\begin{equation}\label{Nparticle_Ffield}
\textbf{F}(\textbf{V})=E \left (\textbf{1}-\frac{J(\textbf{V})}{U(\textbf{V})}\textbf{V}\right)
\end{equation}
and $\textbf{1}=(1,\dots,1)$. The quantities $J(\textbf{V})$ and
$U(\textbf{V})$ represent the \emph{average momentum} per particle and
the \emph{average energy} per particle, respectively, and are given by
\begin{equation}
J(\textbf{V})=\frac{1}{N}\sum_{i=1}^{N}v_i,
\end{equation}
\begin{equation}
U(\textbf{V})=\frac{1}{N}\sum_{i=1}^{N}v_i^2.
\end{equation}
If  $W_N(\textbf{V},t)=W_N(v_1,v_2,\dots,v_N,t)$ is the probability
density of the $N$ particles at time $t$, it satisfies the so called
Thermostatted Kac master equation (see~\cite{yosief})
\begin{equation}\label{Nparticlemaster}
\frac{\partial}{\partial
  t}W_N(\textbf{V},t)+\nabla_{\textbf{V}}\cdot(\textbf{F}(\textbf{V})
W_N(\textbf{V},t))=\K(W_N)(\textbf{V},t).
\end{equation}
We see that (\ref{Nparticlemaster}), in the absence of the force field
reduces to the master equation of the Kac model. Under the
assumption that the sequence of probability densities
$\{W_N(\textbf{V},t) \}_{N\in \N}$ propagates chaos it is shown
in~\cite[Theorem 2.1]{yosief} that
$f(v,t)=\lim_{N\rightarrow \infty}f_1^N(v,t)$ where $f_1^N(v,t)$ is
the first marginal of $W_N(\textbf{V},t)$ satisfies the Thermostatted
Kac equation
\begin{equation}\label{thermostat_boltzmann}
\frac{\partial}{\partial t}f(v,t)+E\frac{\partial}{\partial
  v}\left((1-\zeta(t)v)f(v,t)\right)=\mathcal{Q}(f,f),
\end{equation}
where
\begin{equation}
\zeta(t)=\int_\R vf(v,t)\mathrm{d}v,
\end{equation}
and $\mathcal{Q}(f,f)$ is given by (\ref{colkacboltzmann}).
For the investigation of equation (\ref{thermostat_boltzmann}) we refer to
Wennberg, Wondmagegne~\cite{berntyosef} and Bagland~\cite{vb}.
\\ \\
The interest in studying thermostatted kinetic equations comes from attempts to fully understand Ohm's law.
Many of the ideas of this paper come from~\cite{bcelm}, which presents a more
realistic model where  the positions of the particles are also taken
into account. However, the collision term is easier,
and  the main difficulty comes from analyzing a spatially homogenous model.\\
\indent The proof of propagation of chaos is in many
ways similar to that of Kac, but whereas
his proof is carried out entirely by analyzing the collision operator
(\ref{coloperator}), the proof presented here (and in~\cite{bcelm})
requires a more detailed analysis of the underlying stochastic jump
process, and thus approaches
Gr\"{u}nbaum's method for proving propagation of
chaos (see~\cite{grunbaum}), which is based on studying the empirical
measure $\mu_N$ generated by the $N$ velocities, and proving that the
sequence $\{\mu_N\}_{N=1}^\infty$ converges weakly to a measure, which
is the solution to the Boltzmann equation. While essentially all
ingredients of the proof are present in~\cite{grunbaum}, there are
many technical difficulties that were treated rigorously only later
in~\cite{cm}, and in a much greater generality in~\cite{berntmm} and
other papers by the same authors. A standard reference addressing many
aspects of the propagation of chaos is~\cite{berntmm}.
\\ \\
\indent The structure of the paper is as follows:
In Section $2$ we introduce a master equation (a ``quenched equation")
which is an approximation to the master equation
(\ref{Nparticlemaster}). In Section $3$, we show that the quenched
master equation propagates chaos with a quantitative rate. In Section
$4$ we make pathwise comparison of the stochastic processes
corresponding to the master equation (\ref{Nparticlemaster}) and the
approximation master equation. The main result is that, for large $N$,
the paths of the two stochastic processes are close to each other.
Finally, in Section $5$ we show that the second marginal of
$W_N(\textbf{V},t)$ converges as $N\rightarrow \infty$
to a product of two one marginals of $W_N(\textbf{V},t)$.
\section{An approximation process}
\noindent To show propagation of chaos for the evolution described by
the master equation (\ref{Nparticlemaster}), we consider the two
particle marginal $f_2(v_1,v_2,t)$ of $W_N(\textbf{V},t)$ and show
that it can be written as a product of $2$ one particle marginals of
$W_N(\textbf{V},t)$ when $N\rightarrow \infty$. In~\cite{bcelm}, by
introducing an approximation master equation which propagates
independence, it is shown that for large $N$, the path described by this
approximate master equation is close to the path described by the
original master equation. This in turn implies propagation of chaos.
The independence property is not crucial, and the ideas in~\cite{bcelm}
can be adapted and further developed so as to apply to the model we consider here. If one tries to directly
show propagation of chaos for the master equation
(\ref{Nparticlemaster}) using the classical method by Kac~\cite{kac},
one encounters difficulties, even with the master
equation in~\cite{bcelm}. The difficulty lies in the nature of the
force field $\textbf{F}(\textbf{V})$ which depends on $J(\textbf{V})$ and
$U(\textbf{V})$.\\
\indent To overcome this difficulty in~\cite{bcelm} a modified force
field is introduced in which the random quantities $J(\textbf{V})$ and
$U(\textbf{V})$ are replaced by their expectations which only depend
on time. This gives rise to a new master equation. In the next section we
introduce this modified problem and related properties.

\subsection{ The modified force field and the quenched master equation.\\ \\}
\noindent Following the lines in~\cite{bcelm}, given a probability
density on $\R^N$ we define the  \emph{quenched current} and the
\emph{quenched energy} approximation as:
\begin{equation}\label{defJhatUhat}
\widehat{J}_{W_N}(t) =\frac{1}{N}\sum_{j=1}^{N}<v_j>_{W_N(\textbf{V},t)} \hspace{0.2in}\mbox{and}\hspace{0.2 in}
\widehat{U}_{W_N}(t)
=\frac{1}{N}\sum_{j=1}^{N}<v_j^2>_{W_N(\textbf{V},t)},
\end{equation}
where $<\cdot>_{W_N}$ denotes the expectation with respect to a given
density $W_N$, i.e., for an arbitrary continuous function $\phi$,
\begin{equation*}
<\phi(\textbf{V})>_{W_N}=\int_{\R^N}\phi(\textbf{V}) W_N(\textbf{V})\mathrm{d} m^{(N)}
\end{equation*}
with $\mathrm{d}m^{(N)}$ denoting the Lebesgue measure on $\R^N$. The
modified force field which now depends on the quenched current and
energy is defined as
\begin{equation}\label{modffield}
\widehat{\textbf{F}}_{W_N}(t)=E \left
  (\textbf{1}-\frac{\widehat{J}_{W_N}(t)}{\widehat{U}_{W_N}(t)}{\textbf{V}}(t)
\right ).
\end{equation}
We note that with given $\widehat{J}_{W_N}(t)$ and
$\widehat{U}_{W_N}(t)$, the particles move independently when subject
to (\ref{modffield}), while in (\ref{Nparticle_Ffield}) all particles
interact through the force field $\textbf{F}$. With this modified
force field, we consider the following \emph {quenched master
  equation}
\begin{equation}\label{quenchmaster}
\frac{\partial}{\partial t}\widehat
W_N(\textbf{V},t)+\nabla\cdot(\widehat{\textbf{F}}_{\widehat
  W_N}(t)\widehat W_N(\textbf{V},t))
=\K \widehat W_N(\textbf{V},t),
\end{equation}
where now the modified force is the one corresponding to the density
$\widehat W_N(\textbf{V},t)$. Besides the difference in force
fields, the quenched master equation (\ref{quenchmaster}) is
non-linear ($\widehat{\textbf{F}}(t)$ depends on $\widehat{W}(t)$)
compared to the master equation (\ref{Nparticlemaster}) but they both
have the same collision process.  \\

The motivation for introducing the quenched process is that {\em if} there is propagation of chaos,
then the different particle velocities will be approximately independent, and then for large $N$, the Law of Large Numbers
will imply that almost surely,
\[
J(\textbf{V})=\frac{1}{N}\sum_{i=1}^{N}v_i(t)   \approx
 \widehat{J}_{\widehat W_N}(t) \ ,
\]
and likewise for the energy, to a very good approximation.  In this
case, there will be a negligible difference
between the quenched force field and the thermostatting force field,
and thus we might expect the two processes to be pathwise close.
To follow the strategy of~\cite{bcelm}, we shall need quantitative estimates on the
propagation of chaos by the quenched process, which shall justify using the Law of Large Numbers
to show that for large $N$ the two force fields are indeed close.\\

\noindent Henceforth, to simplify notations, let
$$\widehat J_N(t):=\widehat{J}_{\widehat W_N}(t)
\hspace{0.3in}\text{and}\hspace{0.3in} \widehat
U_N(t):=\widehat{U}_{\widehat W_N}(t).$$
In the next lemma we describe the time evolution of $\widehat J_N(t)$
and $\widehat U_N(t)$ in terms of differential equations.
\begin{lemma}\label{evJ}
Given initial distribution $\widehat W_{N,0}(\textbf{V})$,
$\widehat{J}_N(t)$ and
$\widehat{U}_N(t)$ satisfy the differential equations, both independent of $N$:
\begin{equation}\label{eqevJ}
\frac{\mathrm{d}}{\mathrm{d}t}\widehat{J}_N(t)=E-E
\frac{\widehat{J}_N(t)^2}{\widehat{U}_N(t)}-2\widehat{J}_N(t),
\end{equation}
and
\begin{equation}\label{eqevU}
\frac{\mathrm{d}}{\mathrm{d}t}\widehat{U}_N(t)=0.
\end{equation}
\end{lemma}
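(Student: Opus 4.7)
The plan is to derive both ODEs by testing the quenched master equation (\ref{quenchmaster}) against the linear observable $\phi_1(\textbf{V})=\frac{1}{N}\sum_{k=1}^{N}v_k$ and the quadratic observable $\phi_2(\textbf{V})=\frac{1}{N}\sum_{k=1}^{N}v_k^2$, so that $\widehat J_N(t)=\langle\phi_1\rangle_{\widehat W_N(t)}$ and $\widehat U_N(t)=\langle\phi_2\rangle_{\widehat W_N(t)}$. Multiplying (\ref{quenchmaster}) by $\phi_a$, integrating over $\R^N$, and integrating by parts in the transport term (boundary terms vanishing since $\widehat W_N(t)$ is a probability density decaying at infinity, or is supported on a compact sphere) produces the Kolmogorov-type identity
\[
\frac{d}{dt}\langle\phi_a\rangle_{\widehat W_N(t)}
=\langle\K^{\ast}\phi_a\rangle_{\widehat W_N(t)}
+\langle\widehat{\textbf{F}}_{\widehat W_N}(t)\cdot\nabla\phi_a\rangle_{\widehat W_N(t)},
\]
where $\K^{\ast}$ is the formal adjoint of $\K$ in $L^2(dm^{(N)})$.

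For the collision contribution, the change of variables $\textbf{V}\mapsto R_{ij}(-\theta)\textbf{V}$ inside the $\theta$-integral in (\ref{Qij}), together with the isometry of $R_{ij}(\theta)$ and the invariance of the $\theta$-integral under $\theta\mapsto-\theta$, gives $Q_{(i,j)}^{\ast}\psi(\textbf{V})=\int_{-\pi}^{\pi}\psi(R_{ij}(\theta)\textbf{V})\,d\theta/(2\pi)$. Tested on $v_k$, only pairs with $k\in\{i,j\}$ contribute, and $\int_{-\pi}^{\pi}(v_i\cos\theta+v_j\sin\theta)\,d\theta/(2\pi)=0$, so $(Q_{(i,j)}^{\ast}-I)v_k=-v_k$ for each of the $N-1$ such pairs; combined with the prefactor $2/(N-1)$ this gives $\K^{\ast}\phi_1=-2\phi_1$. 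For $v_k^2$, the pointwise energy conservation $(v_i')^2+(v_j')^2=v_i^2+v_j^2$ in every $\theta$ yields $(Q_{(i,j)}^{\ast}-I)(v_i^2+v_j^2)=0$, hence $\K^{\ast}\phi_2=0$.

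For the transport piece, $\nabla\phi_1=\frac{1}{N}\textbf{1}$ gives $\widehat{\textbf{F}}_{\widehat W_N}(t)\cdot\nabla\phi_1=E-(E\widehat J_N(t)/\widehat U_N(t))\phi_1(\textbf{V})$, whose $\widehat W_N(t)$-expectation is $E-E\widehat J_N(t)^2/\widehat U_N(t)$; and $\nabla\phi_2=\frac{2}{N}\textbf{V}$ gives $\widehat{\textbf{F}}_{\widehat W_N}(t)\cdot\nabla\phi_2=2E\phi_1(\textbf{V})-(2E\widehat J_N(t)/\widehat U_N(t))\phi_2(\textbf{V})$, with expectation $2E\widehat J_N(t)-2E\widehat J_N(t)=0$. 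Adding the collision and transport pieces yields (\ref{eqevJ}) and (\ref{eqevU}); the equations contain no explicit $N$ because the per-particle normalization of $\phi_a$ is exactly compensated by the $2/(N-1)$ prefactor in (\ref{coloperator}) against the $N-1$ pairs containing a given index. Nothing is a genuine obstacle here; the only points that deserve a little care are the change-of-variables identification of $Q_{(i,j)}^{\ast}$ and the remark that, although $\widehat{\textbf{F}}_{\widehat W_N}(t)$ is not pointwise tangent to $\s^{N-1}(\sqrt{N})$ (unlike the true thermostatted field $\textbf{F}(\textbf{V})$), the average energy $\widehat U_N(t)$ is nevertheless conserved, which is precisely the content of (\ref{eqevU}).
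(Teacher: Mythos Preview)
Your computation is correct and is precisely the formal derivation the paper itself mentions in the first sentence of its proof (``Formally the result can be obtained by multiplying eq.~(\ref{quenchmaster}) by $v_i$ or $v_i^2$, integrating (partially) and summing over $i$''). The paper, however, then deliberately takes a different route: it fixes an \emph{a priori} force field $\widetilde{\textbf{F}}(t)=E(\textbf{1}-(\xi(t)/\bar u)\textbf{V})$ with $\xi$ solving the target ODE~(\ref{eqevJ}) and $\bar u$ constant, solves the resulting \emph{linear} transport--collision equation in mild form via the explicit flow $\widetilde S_{t,s}$, computes the \emph{a posteriori} moments $\widetilde J_N(t),\widetilde U_N(t)$ of that solution, and checks that they coincide with $\xi(t),\bar u$. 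This closes a fixed-point/consistency argument showing that the linear solution actually solves the nonlinear quenched equation~(\ref{quenchmaster}), and the ODEs follow.

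The trade-off is this: your approach is shorter and transparent, but it presupposes that a solution $\widehat W_N(t)$ to the nonlinear equation exists with enough decay to justify the integration by parts and differentiation under the integral (your parenthetical ``boundary terms vanishing since $\widehat W_N(t)$ is a probability density decaying at infinity'' is exactly the step the authors want to avoid taking for granted, and note that the quenched dynamics is genuinely on $\R^N$, not on the sphere). The paper's linearization route is longer but is self-contained: by working with the mild formulation and the explicit affine flow it sidesteps the integration-by-parts issue entirely and, as a by-product, constructs the solution to~(\ref{quenchmaster}) rather than assuming it.
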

\begin{proof}
Formally the result can be obtained by multiplying
  eq.~(\ref{quenchmaster}) by $v_i$ or $v_i^2$, integrating
  (partially) and   summing over $i$. To avoid any difficulties in the
  formal manipulations that lead to eq.~(\ref{eqevJ}) and
  eq.(\ref{eqevU}), we consider first a linear equation, with a force
  field a priori determined by solutions to (\ref{eqevJ}) and
  (\ref{eqevU}), and observe that the solutions to this linear
  equation actually solve~(\ref{quenchmaster}). Cf. also ref.\cite{vb}.

Let $\pi_i$  be the continuous function on $\R^N$ defined by
$$ \pi_i(\textbf{V})=\pi_i(v_1,\dots,v_i,\dots,v_N)=v_i, \hspace{0.2in} i=1,\dots,N. $$
From (\ref{coloperator}), we get
\begin{align}\label{calcforJ}
 &\K \pi_i=
\frac{2}{N-1}\sum_{\substack{1\leq j\leq N \\j \neq i}} \int_0^{2\pi}
[\pi_i(R_{ij}(\theta)\textbf{V})-\pi_i(\textbf{V})]
\frac{\mathrm{d\theta}}{2\pi}=-2v_i.
\end{align}
Consider the force field
\begin{equation*}
\widetilde{\textbf{F}}(t)=E\left (\textbf{1}-\frac{\xi(t)}{\bar{u}}\textbf{V} \right ),
\end{equation*}
where $\bar{u}=\widehat{U}_N(0)$ is a constant and $\xi(t)$ satisfies the differential equation
\begin{equation}\label{diffeqforxi}
\frac{\mathrm{d}}{\mathrm{d}t}\xi(t)=E-E\frac{\xi(t)^2}{\bar{u}}-2\xi(t)
\end{equation}
with initial condition $\xi(0)=\widehat{J}_N(0)$. The dynamics of each
particle under the force field $\widetilde{\textbf{F}}$ is given by
\begin{equation*}
\frac{\mathrm{d}}{\mathrm{d}t} v_i(t)=E\left
  (1-\frac{\xi(t)}{\bar{u}}v_i(t) \right ),\hspace{0.3 in}
i=1,\dots,N.
\end{equation*}
or
\begin{equation}\label{dynamic_tilde}
\frac{\mathrm{d}}{\mathrm{d}t} \textbf{V}(t)=\widetilde{\textbf{F}}(t).
\end{equation}
Abbreviating $\gamma(t)=E \xi(t)/ \bar{u}$, we see that
\begin{equation}\label{dynamic_ptilde}
v_i(t)=\alpha_{ts}v_i(s)+\beta_{ts},
\end{equation}
where
\begin{equation*}
\alpha_{ts}=e^{-\int_s^t\gamma(\tau)\mathrm{d}\tau} \hspace{0.2 in} \text{and} \hspace{0.2 in}
\beta_{ts}=E \alpha_{ts} \int_s^t\alpha_{s\tau} \mathrm{d}\tau.
\end{equation*}
Given $\textbf{V}$ at time $s$, let $\widetilde S_{t,s}$ be the flow
such that $\widetilde S_{t,s}(\textbf{V})$ is the unique solution
of~(\ref{dynamic_tilde}) at time $t$ with each component given by
(\ref{dynamic_ptilde}). Let $\widetilde W_N(\textbf{V},t)$ be a
solution to~(\ref{quenchmaster}) with $\widehat{\textbf{F}}_{\widehat
  W_N}(t)=\widetilde{ \textbf{F}}(t)$.
Moreover, let $\widetilde{\mathcal{J}}_{t,s}(\textbf{V})$
be the determinant of the Jacobian of $\widetilde
S_{t,s}(\textbf{V})$, that is,
\begin{equation*}
\widetilde{\mathcal{J}}_{t,s}(\textbf{V})=\left |
  \frac{\mathrm{d}\widetilde
    S_{t,s}(\textbf{V})}{\mathrm{d}\textbf{V}} \right
|=\alpha_{ts}^N.
\end{equation*}
Next, the master equation (\ref{quenchmaster}) with this {\em a priori} determined force field
can be written in mild form as
\begin{equation}
\label{quenchmastertilde}
\frac{1}{\widetilde{\mathcal{J}}_{t,s}}\frac{\mathrm{d}}{\mathrm{d}t}\left(\widetilde
  W_N(\widetilde
  S_{t,s}(\textbf{V}),t)\widetilde{\mathcal{J}}_{t,s}
\right)
=\K\widetilde W_N(\widetilde S_{t,s}(\textbf{V}),t).
\end{equation}
Multiplying both sides of the last equality by $\widetilde{\mathcal{J}}_{t,s}$ and integrating yields
\begin{equation*}
\widetilde W_N(\widetilde
S_{t,s}(\textbf{V}),t)\widetilde{\mathcal{J}}_{t,s} = \widetilde
W_N(\textbf{V},s)
+\int_s^t \K\widetilde W_N(\widetilde
S_{\tau,s}(\textbf{V}),\tau)\widetilde{\mathcal{J}}_{\tau,s}
\mathrm{d}\tau.
\end{equation*}
Now, we multiply both sides of the last equality by $\widetilde
S_{t,s}(v_i)$ and integrate over $\R^N$ with
$\mathrm{d}m^{(N)}=\mathrm{d}v_1\dots \mathrm{d}v_N$. We get
\begin{align*}
&\int_{\R^N}\widetilde W_N(\widetilde S_{t,s}(\textbf{V}),t)
\widetilde S_{t,s}(v_i)\widetilde{\mathcal{J}}_{t,s}
\mathrm{d}m^{(N)}\\
&\hspace{0.15in}=\int_{\R^N}\widetilde W_N(\textbf{V},s)\widetilde S_{t,s}(v_i) \mathrm{d}m^{(N)}+
\int_{\R^N}\widetilde S_{t,s}(v_i) \int_s^t \K\widetilde
W_N(\widetilde
S_{\tau,s}(\textbf{V}),\tau)\widetilde{\mathcal{J}}_{\tau,s}
\mathrm{d}\tau \mathrm{d}m^{(N)}.
\end{align*}
By a change of variables and the property that $\widetilde
S_{t,s}(v_i)=\widetilde S_{t,\tau}(\widetilde S_{\tau,s}(v_i))$ we can
write the last equality as
\begin{align*}
&\int_{\R^N}\widetilde W_N(\textbf{V},t) v_i \mathrm{d}m^{(N)}\\
&\hspace{0.15in}=\int_{\R^N}\widetilde W_N(\textbf{V},s)\widetilde S_{t,s}(v_i) \mathrm{d}m^{(N)}+
\int_s^t \int_{\R^N}\widetilde S_{t,\tau}(v_i)  \K\widetilde
W_N(\textbf{V},\tau) \mathrm{d}m^{(N)} \mathrm{d}\tau.
\end{align*}
Using (\ref{calcforJ}), summing both sides of the last equality from
$i=1$ to $i=N$ and dividing by $N$ leads to the following relation
\begin{align*}
 \widetilde{J}_N(t)=\alpha_{ts}\widetilde{J}_N(s)+\beta_{ts}-2\int_s^t
 \alpha_{t\tau}\widetilde{J}_N(\tau)\mathrm{d}\tau.
\end{align*}
Differentiating the last equality with respect to $t$ yields
\begin{eqnarray*}
  \frac{\mathrm{d}}{\mathrm{d}t}\widetilde{J}_N(t)&=&-\gamma(t)\alpha_{ts}
  \widetilde{J}_N(s)-\gamma(t)\beta_{ts}+E-2\widetilde{J}_N(t)+
  2\int_s^t \gamma(t)\alpha_{t\tau}\widetilde{J}_N(\tau)\mathrm{d}\tau \\
   &=&-\gamma(t)\left (
     \alpha_{ts}\widetilde{J}_N(s)+\beta_{ts}-2\int_s^t
     \alpha_{t\tau}\widetilde{J}_N(\tau)\mathrm{d}\tau \right )+
   E-2\widetilde{J}_N(t)  \\
   &=&-\gamma(t)\widetilde{J}_N(t)+E-2\widetilde{J}_N(t)\\
   &=&E \frac{\xi(t)}{\bar{u}}\widetilde{J}_N(t)+E-2\widetilde{J}_N(t).
\end{eqnarray*}
We now see that $\widetilde{J}_N(t)$ satisfies the differential
equation (\ref{diffeqforxi}) and hence is equal to $\xi(t)$.
A similar calculation also yields
\begin{align*}
 \widetilde{U}_N(t)=\alpha_{ts}^2\widetilde{U}_N(s)+2\alpha_{ts}\beta_{ts}\widetilde{J}_N(s)+\beta_{ts}^2
 -4\int_s^t
 \alpha_{t\tau}\beta_{t\tau}\widetilde{J}_N(\tau)\mathrm{d}\tau.
\end{align*}
Differentiating the last equality with respect to $t$ yields
\begin{align*}
 \frac{\mathrm{d}}{\mathrm{d}t}\widetilde{U}_N(t)=-2E
 \frac{\xi(t)}{\bar{u}}\widetilde{U}_N(t)+2E\widetilde{J}_N(t).
\end{align*}
Using $\xi(t)=\widetilde{J}_N(t)$, we find that
\begin{equation}\label{uU}
\frac{\mathrm{d}}{\mathrm{d}t}\widetilde{U}_N(t)=2E\xi(t)\left(
  1-\frac{\widetilde{U}_N(t)}{\bar{u}}\right),
\end{equation}
and hence $\widetilde{U}_N(t)=\bar{u}$ as the (unique) solution to (\ref{uU}),
\begin{equation*}
\frac{\mathrm{d}}{\mathrm{d}t}\widetilde{U}_N(t)=0.
\end{equation*}
Hence the {\em a posteriori} determined $\widetilde{J}_N(t)$ and
$\widetilde{U}_N(t)$ coincide with $\xi(t)$ and $\bar{u}$, and hence the
$\widetilde{W}_N$ solves~(\ref{quenchmaster}), and the conclusions of
the lemma holds.
\end{proof}
The consequence of the last lemma is that, given initial data, at time
$t$, we can obtain $\widehat{J}_N(t)$ and $\widehat{U}_N(t)$  using
the differential equations (\ref{eqevJ}) and (\ref{eqevU}). This is
independent of knowing $\widehat{W}(\textbf{V},t) $ which is required
when using (\ref{defJhatUhat}) to obtain $\widehat{J}_N(t)$ and
$\widehat{U}_N(t)$. Since $\widehat{U}_N(t)$ is constant in time, in
the remaining of the paper we abbreviate
$$\widehat{U}_N=\widehat{U}_N(0)=\widehat{U}_N(t).$$
Using the new notations, the evolution of each particle is given by
\begin{equation}\label{dynamic_quenchcomp}
\frac{\mathrm{d}}{\mathrm{d}t}\widehat
v_i(t)=E-\frac{E\widehat{J}_N(t) }{\widehat{U}_N}\widehat v_i
(t),\hspace{0.3in} i=1,\cdots,N
\end{equation}
which we also can write as
\begin{equation}\label{dynamic_quench}
\frac{\mathrm{d}}{\mathrm{d}t}\widehat{\textbf{V}}(t)=\widehat{\textbf{F}}_{\widehat W_N}(t).
\end{equation}
Given $\textbf{V}_0$, let $\widehat S_{t,0}$ be the flow such that
$\widehat S_{t,0}(\textbf{V}_0)$ is the unique solution of
(\ref{dynamic_quench}) at time $t$. In what follows we shall need a
bound on a sixth moment of $\widehat W_N$ which is defined as
\begin{equation}
\widehat m_{6,N}(t)=\int_{\R^N} \widehat W_N(\textbf{V},t) v_i^6
\mathrm{d} m^{(N)}, \hspace{0.3in} i=1,\dots,N\,.
\end{equation}
Because $\widehat W_N$ is symmetric, the definition does not depend on
the index $i$.

\begin{lemma}\label{momentconserv}
Assume that $\widehat m_{6,N}(0)<\infty $. For all $t>0$ we have
\begin{equation}
\widehat m_{6,N}(t) \leq C_{\widehat m_{6,N}(0),t}
\end{equation}
where $C_{\widehat m_{6,N}(0),t}$ is a positive constant which depends on $\widehat m_{6,N}(0)$ and $t$.
\end{lemma}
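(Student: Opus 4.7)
The plan is a Grönwall-type argument. I differentiate $\widehat m_{6,N}(t)$ in time using the quenched master equation (\ref{quenchmaster}) and show
$$\frac{\mathrm{d}}{\mathrm{d}t}\widehat m_{6,N}(t)\leq C(E,\widehat U_N)\bigl(1+\widehat m_{6,N}(t)\bigr)$$
with $C$ independent of $N$, so that Grönwall yields the claim with $C_{\widehat m_{6,N}(0),t}=(1+\widehat m_{6,N}(0))e^{C(E,\widehat U_N)t}$. Writing
$$\frac{\mathrm{d}}{\mathrm{d}t}\widehat m_{6,N}(t) = -\int_{\R^N} v_1^6\,\nabla_{\textbf{V}}\!\cdot\!\bigl(\widehat{\textbf{F}}_{\widehat W_N}\widehat W_N\bigr)\,\mathrm{d}m^{(N)} + \int_{\R^N} v_1^6\,\mathcal{K}\widehat W_N\,\mathrm{d}m^{(N)},$$
I estimate the two terms separately.

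For the transport term, integration by parts together with the explicit component form (\ref{dynamic_quenchcomp}) of the quenched drift gives the contribution $6E\,\widehat m_{5,N}(t) - \tfrac{6E\widehat J_N(t)}{\widehat U_N}\widehat m_{6,N}(t)$. Because $\widehat U_N$ is constant in time by (\ref{eqevU}), Cauchy--Schwarz bounds $|\widehat J_N(t)|\leq \sqrt{\widehat U_N}$ uniformly, and Young's inequality gives $\widehat m_{5,N}(t)\leq 1+\widehat m_{6,N}(t)$, so the transport term is controlled by a constant depending only on $E$ and $\widehat U_N$, times $1+\widehat m_{6,N}(t)$.

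For the collision term, since each $R_{ij}(\theta)$ is an isometry of $\R^N$ the operators $Q_{(i,j)}$ are self-adjoint on $L^2(\R^N,\mathrm{d}m^{(N)})$, so I can move $\mathcal{K}$ onto $v_1^6$. Only the $N-1$ pairs containing index $1$ contribute, and a direct evaluation of the angular integral via the binomial expansion of $(v_1\cos\theta+v_j\sin\theta)^6$ identifies $\mathcal{K}(v_1^6)$ as an $N$-independent linear combination of $v_1^6$, $v_j^6$, and the cross terms $v_1^2 v_j^2(v_1^2+v_j^2)$. After integrating against the symmetric density $\widehat W_N$, every $\int v_j^6\widehat W_N\,\mathrm{d}m^{(N)}$ equals $\widehat m_{6,N}(t)$ by symmetry, and the cross-integrals $\int v_1^4 v_j^2 \widehat W_N\,\mathrm{d}m^{(N)}$ and $\int v_1^2 v_j^4 \widehat W_N\,\mathrm{d}m^{(N)}$ are bounded by $\widehat m_{6,N}(t)$ using Young's inequality $a^4b^2\leq \tfrac23 a^6+\tfrac13 b^6$. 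The $2/(N-1)$ prefactor cancels with the $N-1$ pairs, leaving a bound $c\,\widehat m_{6,N}(t)$ with $c$ an absolute constant.

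The main technical obstacle is to justify rigorously the differentiation under the integral and the integration by parts applied to the unbounded polynomial $v_1^6$ on a nonlinear equation. I would handle this in the same spirit as the proof of Lemma \ref{evJ}: replace the nonlinear force field by the \emph{a priori} determined $\widetilde{\textbf{F}}(t)$ obtained from the ODE system (\ref{eqevJ})--(\ref{eqevU}), work with the resulting linear equation and its mild formulation (\ref{quenchmastertilde}), carry out the above estimates first with the truncated test function $\min(v_1^6,R)$ where integration by parts is unproblematic, and pass to the limit $R\to\infty$ using monotone convergence and the uniform-in-$R$ Grönwall bound derived from the finiteness of $\widehat m_{6,N}(0)$.
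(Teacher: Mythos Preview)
Your argument is correct and is essentially the same Gr\"onwall strategy as the paper's: bound the drift contribution using $|\widehat J_N(t)|\le\sqrt{\widehat U_N}$, move $\K$ onto the test function by self-adjointness, use symmetry and Young's inequality on the cross-moments, and conclude by Gr\"onwall. The only difference is organisational: the paper works from the mild formulation~(\ref{quenchmastertilde}) from the outset, writing
\[
\widehat m_{6,N}(t)=\int_{\R^N}\widehat W_N(\textbf{V},0)\,\widehat S_{t,0}(v_i)^6\,\mathrm{d}m^{(N)}
+\int_0^t\!\int_{\R^N}\widehat W_N(\textbf{V},s)\,\K\bigl(\widehat S_{t,s}(v_i)^6\bigr)\,\mathrm{d}m^{(N)}\,\mathrm{d}s,
\]
and then applies the integral form of Gr\"onwall, whereas you differentiate in $t$ first and arrive at a differential inequality. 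This is precisely the ``fix'' you propose in your last paragraph, so in effect you end up at the paper's proof; the paper simply skips the formal differential step and the truncation argument by starting from the integrated identity, which makes the justification issues you flag disappear automatically.
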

\begin{proof}
Making computations similar to those in the proof of the last lemma we have
\begin{align*}
&\int_{\R^N}\widehat W_N(\textbf{V},t) v_i^6 \mathrm{d}m^{(N)}\\
&\hspace{0.15in}=\int_{\R^N}\widehat W_N(\textbf{V},0)\widehat S_{t,0}(v_i)^6 \mathrm{d}m^{(N)}+
\int_0^t \int_{\R^N}  \K \widehat W_N(\textbf{V},s) \widehat S_{t,s}(v_i)^6 \mathrm{d}m^{(N)} \mathrm{d}s\\
&\hspace{0.15in}=: C_1(t)+ C_2(t).
\end{align*}
The left hand side of the last equality is by definition $\widehat
m_{6,N}(t)$. To estimate $C_1(t)$, we  first note that
$|\widehat{J}_N(t)| \leq \sqrt{\widehat{U}_N}$ by using the Cauchy
Schwartz inequality. Furthermore, a crude estimate on the differential
equation
(\ref{dynamic_quenchcomp}) for the evolution of the particle $\widehat v_i$  yields
\begin{align*}
|\widehat v_i(t)|&=\left|\widehat v_i(0)+\int_0^t
  \left(E-\frac{E\widehat{J}_N(\tau) }{\widehat{U}_N}\widehat v_i
    (\tau) \right)\mathrm{d}\tau \right|\\
&\leq |\widehat v_i(0)|+\int_0^t \left(E+
  E\frac{1}{\sqrt{\widehat{U}_N} }|\widehat v_i
  (\tau)|\right)\mathrm{d}\tau.
\end{align*}
Let
\begin{equation*}
\mu(t)=\frac{Et}{\sqrt{\widehat{U}_N}}.
\end{equation*}
Straightforward estimation yields
\begin{equation*}
\widehat S_{t,0}(v_i)^6 \leq 16e^{8\mu(t)} \left(v_i^6+\widehat{U}_N^3 \right).
\end{equation*}
Hence,
\begin{equation}
C_1(t)\leq 16e^{8 \mu(t)}\left (\widehat m_{6,N}(0)+\widehat{U}_N^3 \right).
\end{equation}
To estimate $C_2(t)$, using that $\K$ is self-adjoint, we can write
\begin{equation*}
C_2(t)=\int_0^t \int_{\R^N}  \widehat W_N(\textbf{V},s)\K \widehat
S_{t,s}(v_i)^6 \mathrm{d}m^{(N)} \mathrm{d}s.
\end{equation*}
A calculation similar to (\ref{calcforJ}) on  $\K \widehat
S_{t,s}(v_i)^6$ and the inequality \\
$(a+b)^2\leq 2(a^2+b^2)$ for $a,b \in \R$ yield
\begin{equation*}
C_2(t) \leq 64 \int_0^t \int_{\R^N}   \widehat W_N(\textbf{V},s)
\widehat S_{t,s}(v_i)^6 \mathrm{d}m^{(N)} \mathrm{d}s.
\end{equation*}
Since $\widehat{S}_{t,s}(v_i)$ is of the form
$\widehat{S}_{t,s}(v_i)=\widehat{\alpha}_{ts}v_i+\widehat{\beta}_{ts}$,
it follows that
\begin{equation*}
C_2(t) \leq C \left( A_1 \int_0^t \widehat m_{6,N}(s)\mathrm{d}s +A_2\right ),
\end{equation*}
where $C$ is a positive constant and
\begin{equation*}
A_1=\sup_{0\leq s\leq t}\widehat{\alpha}_{ts}^6, \hspace{0.2in}
A_2=\sup_{0\leq s\leq t}\widehat{\beta}_{ts}^6.
\end{equation*}
Combining the estimates above, we have
\begin{equation}
\widehat m_{6,N}(t)\leq 16e^{8\mu(t)}\left (\widehat m_{6,N}(0)+\widehat{U}_N^3\right) +
C \left( A_1 \int_0^t \widehat m_{6,N}(s)\mathrm{d}s +A_2\right ).
\end{equation}
Rewriting this inequality slightly we can apply Gronwall's lemma to obtain
\begin{equation}
\widehat m_{6,N}(t)\leq C_{\widehat m_{6,N}(0),t}.
\end{equation}
where $C_{\widehat m_{6,N}(0),t}$ is constant depending on $\widehat
m_{6,N}(0)$ and $t$. Note that the Gronwall's lemma gives that
$C_{\widehat m_{6,N}(0),t}$ depends on $N$ only from the initial condition $m_{6,N}(0)$.  \\
\end{proof}
\section{Propagation of chaos for the quenched master equation}
\noindent In the previous section we defined the quenched master equation (\ref{quenchmaster}). The goal of this section is to show that it propagates chaos.
However, in order to take advantage of this to show that the master equation (\ref{Nparticlemaster}) propagates chaos, we need to know at which rate
(\ref{quenchmaster}) propagates chaos. The reason that propagation of chaos holds for (\ref{quenchmaster}) is that the particles
driven by the quenched force field evolve independently between collisions. In~\cite{kac} it is shown that given chaotic initial data the master equation (\ref{quenchmaster}) without the term $\nabla\cdot(\widehat{\textbf{F}}_{\widehat W_N}(t)\widehat W_N(\textbf{V},t))$ propagates chaos.
The main idea in the proof of Kac is that as $N$ tends to infinity,
the probability that any given particle collides with some other particle more than once tends to zero. By isolating the
contribution of ``recollisions'' to the evolution, and showing that their contribution is negligible in the limit, Kac deduced his
asymptotic factorization property. To state this precisely, and to state our quantitative version, we first introduce some
notation, defining the marginals of $\widehat W_N(\textbf{V},t)$. Let
\begin{equation*}
\widehat{f}_1^N(v_1,t)=\int_{\R^{N-1}} \widehat W_N(\textbf{V},t)\mathrm{d}m^{(N-1)}
\end{equation*}
be the one-particle marginal of $\widehat W_N(\textbf{V},t)$ at time $t$. Since $\widehat W_N$ is symmetric under permutation of the variables $v_1,\dots,v_n$, it does not matter which variables we integrate over. Similarly, the $k$-th marginal of $\widehat W_N(\textbf{V},t)$ at time $t$ is defined as
\begin{equation}\label{kmarg-quench}
\widehat{f}_k^N(v_1,\dots,v_k ,t)=\int_{\R^{N-k}} \widehat W_N(\textbf{V},t)\mathrm{d}m^{(N-k)}.
\end{equation}
The qualitative result of Kac is that
\[
\lim_{N\to\infty} \int_{\R^2} \left[ \widehat{f}_2^N(v_1,v_2,t) -
  \widehat{f}_1^N(v_1,t)\widehat{f}_1^N(v_,t)\right]\phi(v_1,v_2){\rm
  d}m^{(2)} = 0\
\]
for all bounded, continuous functions $\phi$.\\
\indent The main result of this section is:
\begin{theorem}\label{thm_quant_thermostat}
Let $\{\widehat W_N(\textbf{V},0)\}_{N\in \N}$ be a sequence of
symmetric probability densities on $\R^N$ such that
\begin{align}\label{intialdataassumption}
\nonumber &\int_{\R^k} \widehat f_k^N(v_1,\dots,v_k,0)\phi(v_1,\dots,v_k)\mathrm{d}m^{(k)}\\
&\hspace{0.4in} =\int_{\R^k} \widehat f_1^N(v_1,0) \cdots
f_1^N(v_k,0)\phi(v_1,\dots,v_k)\mathrm{d}m^{(k)}+ R_{0,N},
\end{align}
for all $k\in\N$, where $\phi(v_1,\dots,v_k)$ is a
bounded continuous function on $\R^k$ and
\begin{equation*}
R_{0,N}\leq C_0\frac{k}{N}||\phi||_{\infty}\,,
\end{equation*}
with $C_0$ being a positive constant. Then, we have for $0\leq t \leq T$, where $T<\infty$ that
\begin{align}
\nonumber &\int_{\R^k} \widehat f_k^N(v_1,\dots,v_k,t)\phi(v_1,\dots,v_k)\mathrm{d}m^{(k)}\\
&\hspace{0.3in}=\int_{\R^k} \widehat f_1^N(v_1,t) \cdots
f_1^N(v_k,t)\phi(v_1,\dots,v_k)\mathrm{d}m^{(k)}+ R_{T,N},
\end{align}
where
\begin{equation*}
R_{T,N}\leq C(T)\frac{k}{N}||\phi||_{\infty}
\end{equation*}
and $C(T)$ is constant depending only on $T$.
\end{theorem}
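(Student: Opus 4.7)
The strategy is to follow Kac's original combinatorial propagation-of-chaos proof---analyzing the collision ``genealogies'' of individual particles and showing that ``recollisions'' are rare---but quantitatively, so as to extract the explicit $O(k/N)$ rate. A convenient starting point is that, by Lemma \ref{evJ}, the quenched force field $\widehat{\textbf{F}}_{\widehat W_N}(t)$ is a \emph{deterministic} function of time; consequently (\ref{quenchmaster}) is a linear equation whose associated $N$-particle Markov process consists of (i) a deterministic affine single-particle flow $\widehat S_{t,s}$ acting independently on each coordinate between collisions, and (ii) pairwise random rotations at Poisson times of total rate $N$ (each of the $\binom{N}{2}$ pairs colliding at rate $2/(N-1)$). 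For any bounded continuous $\phi$ one has $\int_{\R^k}\widehat f_k^N(\cdot,t)\phi\,\mathrm{d}m^{(k)}=\E[\phi(\widehat v_1(t),\ldots,\widehat v_k(t))]$ with the expectation taken over this process started from $\widehat W_N(\cdot,0)$.

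The first substantive step is to decompose the expectation according to the \emph{collision ancestry} $A_i(t)\subset\{1,\ldots,N\}$ of each tagged particle $i\in\{1,\ldots,k\}$, i.e.\ the random set of particle labels whose initial velocities enter, through a chain of collisions, into the formula for $\widehat v_i(t)$. Let $\mathcal{A}$ be the ``good event'' on which the sets $A_1(t),\ldots,A_k(t)$ are pairwise disjoint. On $\mathcal{A}$ the vector $(\widehat v_1(t),\ldots,\widehat v_k(t))$ is a measurable function of $k$ disjoint blocks of independent collision data (Poisson clocks, pair selections, angles) applied to disjoint slices of initial velocity data; hence, by the quantitative chaoticity hypothesis (\ref{intialdataassumption}) applied to the restricted initial law on these blocks, one may replace the joint initial distribution by its product of marginals at a cost of at most $C_0\,k/N\,\|\phi\|_\infty$. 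Since the marginal on each block is by construction $\widehat f_1^N(\cdot,t)$, this produces the desired factorization on $\mathcal{A}$ modulo the claimed remainder.

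The core probabilistic estimate is the bound $\prob(\mathcal{A}^c)\le C(T)\,k/N$. Since each tagged particle participates in collisions at rate $2$, a branching argument yields $\E[|A_i(t)|]\le e^{2t}$, where the moment bound of Lemma \ref{momentconserv} is needed to legitimize the integrations required by the affine (and therefore non-$L^\infty$-bounded) single-particle flow, the phase space being the non-compact $\R^N$ rather than the sphere $\s^{N-1}(\sqrt N)$. A naive union bound over the $\binom{k}{2}$ pairs of ancestries would give only $k^2/N$; to extract the sharper $k/N$ rate I plan to implement a hybrid coupling that replaces the tagged particles one at a time by independent copies---each evolving under (\ref{dynamic_quenchcomp}) with its own independent Poisson clock and colliding only with ``fresh'' partners whose velocities are sampled from the current $\widehat f_1^N$---so that each single replacement modifies the joint law by only $O(1/N)$ and the $k$ replacements summed produce $O(k/N)$. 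A Gronwall step in time then converts the infinitesimal per-unit-time overlap rate into the uniform bound on $[0,T]$.

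Combining the two estimates, the contribution from $\mathcal{A}$ equals $\int_{\R^k}\prod_i\widehat f_1^N(v_i,t)\phi\,\mathrm{d}m^{(k)}$ up to $C_0\,k/N\,\|\phi\|_\infty$, while $\mathcal{A}^c$ contributes at most $2\|\phi\|_\infty\prob(\mathcal{A}^c)\le C(T)\,k/N\,\|\phi\|_\infty$, yielding the stated bound on $R_{T,N}$. The principal obstacle is the hybrid coupling delivering the sharp $k/N$ rate: since the deterministic affine transport between collisions does not alter the combinatorial ancestry structure, the heart of the problem reduces to a quantitative Kac-tree recollision estimate on the Poisson collision graph, and the delicate point is to exploit the symmetry of $\widehat W_N$ and the independence of the collision data so that the replacement of a single tagged particle genuinely costs only $O(1/N)$ rather than $O(k/N)$.
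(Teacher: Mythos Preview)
Your approach is genuinely different from the paper's. The paper works analytically: it expands $\widehat W_N(\cdot,t)$ as a Duhamel series in the collision operator against the free flow, writes both the $k$-marginal and the product of $1$-marginals as series of integrals against the \emph{same} initial density $\widehat W_N(\cdot,0)$, and then compares the two series term by term. The decisive step (Step~5) is a tree-combinatorics argument showing that the ``non-recollision'' terms of the $2$-marginal series coincide \emph{exactly} with the terms in the product of the two $1$-marginal series, so that only the recollision remainders (bounded by $O(1/N)$ via Lemma~\ref{lem_3.2}) and the initial-data error survive. No auxiliary single-particle process is ever introduced.

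Your probabilistic ancestry picture captures the same intuition, but the factorisation step contains a genuine gap. On the good event $\mathcal{A}$ you correctly replace the initial data on $\bigcup_i A_i(t)$ by a product of one-particle marginals. The problem is the next sentence: ``the marginal on each block is by construction $\widehat f_1^N(\cdot,t)$.'' It is not. After replacing the initial data by $\prod_j\widehat f_1^N(v_j,0)$ and conditioning on $\mathcal{A}$, the law of $\widehat v_i(t)$ is that of particle $i$ in the $N$-particle system started from \emph{product} initial data, conditioned on $\mathcal{A}$; whereas $\widehat f_1^N(\cdot,t)$ is the unconditioned $1$-marginal of the system started from $\widehat W_N(\cdot,0)$. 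Identifying these two up to $O(1/N)$ is, via the BBGKY relation for $\widehat f_1^N$, essentially the $k=2$ case of the theorem you are proving, so the argument is circular as written. The same circularity infects the hybrid coupling: the ``ideal'' particle you describe---colliding with fresh partners drawn from the current $\widehat f_1^N$---has a law satisfying a closed nonlinear equation, not the BBGKY relation satisfied by $\widehat f_1^N$; the two agree only \emph{after} the theorem is established. The paper's series method avoids this because both sides of the comparison are expanded over the same $\widehat W_N(\cdot,0)$, so no surrogate process is needed. If you want to salvage the coupling route, compare instead to the first $k$ coordinates of $k$ independent copies of the full $N$-particle system (which have law $(\widehat f_1^N)^{\otimes k}$ by definition); but then, as you already observe, the telescoping gives $k^2/N$ and you have not supplied the extra idea that would yield $k/N$. (The paper, for its part, writes out the details only for $k=2$, which is all that is used downstream.) A minor point: Lemma~\ref{momentconserv} is not needed here, since $\phi$ is bounded and $\widehat P^*_{t,s}$ preserves $L^\infty$; the moment bound enters only later, in Corollary~\ref{corthmquench}.
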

This result provides the means to adapt the strategy developed
in~\cite{bcelm} for controlling the effects of correlations
that are introduced by the thermostatting force
field. In~\cite{bcelm}, the collisions were not binary collisions, but
were a model of collisions
with background scatterers. These collisions did not introduce any
correlations at all, and in that work the analogous quenched
process exactly propagated {\em independence}. This facilitated appeal
to the Law of Large Numbers. When we need to apply the Law of Large Numbers here,
the individual velocities in our quenched process will
not be independent, and we must
quantify the lack of independence. Theorem~\ref{thm_quant_thermostat}
provides the means to do this, and may be of independent interest.
The proof Theorem~\ref{thm_quant_thermostat} is build on ideas from the original proof of
~\cite{kac}, in particular, on his idea of controlling the effect of
recollisions, and the refined combinatoric arguments
from~\cite{BCP}. However,
the proof  is rather long. We divide it into $5$ steps.

\begin{proof}
\underline{\textbf{STEP 1}}\\
In this step we express the solution $\widehat W_N(\textbf{V},t)$ of
(\ref{quenchmaster}) as a series depending on $\widehat
W_N(\textbf{V},0)$ and use this to find an expression for the k-th
marginal of $\widehat W_N(\textbf{V},t)$ at time $t$. Recall that the
quenched master equation is given by
\begin{equation}
\frac{\partial}{\partial t}\widehat
W_N+\nabla\cdot(\widehat{\textbf{F}}_{\widehat W_N}\widehat W_N)=\K
\widehat W_N,
\end{equation}
with initial data $\widehat W_N(\textbf{V},0)=\widehat W_{N,0}(\textbf{V})$, and where
$\widehat{\textbf{F}}_{\widehat W_N} $ is given  by
(\ref{modffield}).
Let $\widehat P_{t,0}\widehat W_N(\textbf{V},0)$ denote the solution to its homogenous part where
the operator $\widehat P_{t,s}:L^1\rightarrow L^1$ transforms the
density $\widehat W_N$ from time $s$ to time $t$. Explicitly
\begin{equation}
\widehat P_{t,s} \widehat W_N(\textbf{V},s)= \widehat W_N(\widehat
S_{t,s}^{-1}(\textbf{V}),s) \widehat{\mathcal{J}}_{t,s}^{-1},
\end{equation}
where $\widehat{\mathcal{J}}_{t,s}$ is the determinant of
the Jacobian of $\widehat S_{t,s}(\textbf{V})$, i.e.,
$$\widehat{\mathcal{J}}_{t,s}=\left |\frac{\mathrm{d}\widehat
    S_{t,s}(\textbf{V})}{\mathrm{d}\textbf{V}}\right |.$$
By the Duhamel formula,
\begin{equation}\label{solinhomdiff}
\widehat W_N(\textbf{V},t)=\widehat P_{t,0}\widehat W_N(\textbf{V},0)
+\int_0^{t} \widehat P_{t,s} \K \widehat W_N(\textbf{V},s)\mathrm{d}s.
\end{equation}
Iterating (\ref{solinhomdiff}) expresses $\widehat W_N(\textbf{V},t)$  as a series:
\begin{equation}\label{iterationsol}
\widehat W_N(\textbf{V},t)=\widehat P_{t,0}\widehat W_N(\textbf{V},0)+
\sum_{j=1}^{\infty}\int_{A_j} \widehat P_{t,t_j}\left(\prod_{i=0}^{j-1}\K \widehat P_{t_{j-i},t_{j-i-1}}\right)
\widehat W_N(\textbf{V},0)\mathrm{d}t_1 \dots \mathrm{d}t_j
\end{equation}
where
$$A_j=\{0=t_0<t_1<t_2,\hspace{0.05in}\dots \hspace{0.05in} , 0<t_j<t \}.$$
%
%
%
For a continuous function $\phi$ of $k$ variables $v_1,\dots,v_k$, it
follows from (\ref{kmarg-quench}) and (\ref{iterationsol}) that
\begin{align}\label{differentfofmarginal}
 \nonumber   &\int_{\R^k}
    \widehat
    f_k^N(v_1,\dots,v_k,t)\phi(v_1,\dots,v_k)\mathrm{d}m^{(k)}=\int_{\R^N}
    \widehat W_N(\textbf{V},t)\phi(v_1,\dots,v_k)\mathrm{d}m^{(N)}=\\
 \nonumber   &=\int_{\R^N}\widehat W_N(\textbf{V},0)\widehat
 P_{t,0}^{*}\phi \hspace{0.02in}\mathrm{d}m^{(N)}\\
    &\hspace{0.3in}+\sum_{j=1}^{\infty}\int_{A_j} \int_{\R^N} \widehat W_N(\textbf{V},0)\left(\prod_{i=0}^{j-1}
    \K \widehat P_{t_{j-i},t_{j-i-1}}\right)^{*} \widehat
  P_{t,t_j}^{*} \phi \hspace{0.02in} \mathrm{d}m^{(N)} \mathrm{d}t_1
  \dots \mathrm{d}t_j.
\end{align}
The operator $\widehat P_{t,s}^*:L^\infty \rightarrow L^\infty$ is the
adjoint of the operator $\widehat P_{t,s}$. Explicitly,
$$\widehat P_{t,s}^*\phi(v_1,\dots,v_k)= \phi(\widehat S_{t,s}(v_1), \dots , \widehat S_{t,s}(v_k) ),$$
and $\widehat P_{t,s}^*\phi$ is still a function of $v_1,\dots, v_k$
but also depends on $\widehat{J}_N(t)$ and $\widehat{U}_N$. Moreover,
the operator $\widehat P_{t,s}^*$ preserves the $L^{\infty}$ norm,
i.e., $||\widehat P_{t,s}^*\phi||_\infty =||\phi||_\infty$.
\\ \\
\noindent \underline{\textbf{STEP 2}}\\
Observing what we obtained in (\ref{differentfofmarginal}), we now need to see how the operator
\begin{equation}\label{powersofg}
\left(\prod_{i=0}^{j-1}
    \K \widehat P_{t_{j-i},t_{j-i-1}}\right)^{*}\widehat P_{t,t_j}^{*}
\end{equation}
acts on a bounded continuous function $\phi$ depending on finitely
many variables. Introducing the notation
$$\Gamma_{t,t_1}^1=\K \widehat
P^*_{t,t_1},\hspace{0.2in}\Gamma_{t,t_2,t_1}^2=\K \widehat
P^*_{t_2,t_1} \K \widehat P^*_{t,t_2},\hspace{0.2in} \dots $$
we can now write (\ref{powersofg}) as
\begin{equation}\label{notgamma}
\widehat
P^*_{t_1,0}\Gamma^{j}_{t,t_j,\dots,t_1}=\left(\prod_{i=0}^{j-1}\K
  \widehat P_{t_{j-i},t_{j-i-1}}\right)^{*}\widehat P_{t,t_j}^{*}.
\end{equation}
Following~\cite{kac}, let us see how  $\widehat
P^*_{t_1,0}\Gamma^{j}_{t,t_j,\dots,t_1}$ acts on a function
$\phi_1(v_1)$ depending only on one variable. For $j=1$, we have
\begin{align*}
&\widehat P^*_{t_1,0}\Gamma^1_{t,t_1}
\phi_1=\frac{2}{N-1}\sum_{j=2}^{N}\widehat
P^*_{t_1,0}(Q_{(1,j)}-I)\widehat P^*_{t,t_1}\phi_1(v_1)\\
&=\frac{2}{N-1}\sum_{j=2}^{N}\widehat P^*_{t_1,0}(Q_{(1,j)}-I)\phi_{1;1}(v_1;t,t_1), \\
\end{align*}
where $\phi_{1;1}(v_1;t,t_1)=\widehat P^*_{t,t_1}\phi_1(v_1)$. The
operator $Q$ adds a new variable $v_j$ to $\phi_{1;1}(v_1;t,t_1)$ at
time $t_1$. Setting
\begin{align*}
\phi_{2;1}(v_1,v_2;t,t_1)=2(Q_{(1,2)}-I)\phi_{1;1}(v_1;t,t_1),
\end{align*}
we have
\begin{equation*}
\widehat P^*_{t_1,0}\Gamma^1_{t,t_1}
\phi_1=\frac{1}{N-1}\sum_{j=2}^{N}\widehat P^*_{t_1,0}
\phi_{2;1}(v_1,v_j;t,t_1).
\end{equation*}
For $j=2$ we get
\begin{align*}
\widehat P^*_{t_1,0}\Gamma^2_{t,t_2,t_1}
\phi_1=\frac{1}{N-1}\sum_{j=2}^{N}\widehat
P^*_{t_1,0}\Gamma_{t_2,t_1}^1 \phi_{2;1}(v_1,v_j;t,t_2).
\end{align*}
Setting $\phi_{2;2}(v_1,v_2;t,t_2,t_1)= \widehat P^*_{t_2,t_1}\phi_{2;1}(v_1,v_2;t,t_2)$, we see that
\begin{align}\label{calccolcol}
\nonumber &\widehat P^*_{t_1,0}\Gamma_{t_2,t_1}^1 \phi_{2;1}(v_1,v_2;t,t_2)\\
\nonumber &=\frac{2}{N-1}\widehat P^*_{t_1,0}(Q_{(1,2)}-I)\phi_{2;2}(v_1,v_2;t,t_2,t_1)\\
\nonumber &+\frac{2}{N-1}\sum_{j=3}^{N}\widehat P^*_{t_1,0}(Q_{(1,j)}-I)\phi_{2;2}(v_1,v_2;t,t_2,t_1)\\
 &+\frac{2}{N-1}\sum_{j=3}^{N}\widehat P^*_{t_1,0}(Q_{(2,j)}-I)\phi_{2;2}(v_1,v_2;t,t_2,t_1).
\end{align}
When $Q$ acts on $\phi_{2;2}$ in two last expressions above again a
new  velocity variable is created in $\phi_{2;2}$ leading to
$\phi_{3;2}$. In this fashion each time $\Gamma^1$ acts on a function,
a  new time variable and a new velocity variable is created .
Since $\widehat P^*_{t,s}$ preserves the $L^{\infty}$ norm , it
follows that $||\phi_{2;2}||_{\infty}\leq 4||\phi_1 ||_{\infty}$. This
in turn implies that
\begin{equation*}
||\widehat P^*_{t_1,0}\Gamma^1_{t,t_1} \phi_1||_{\infty} \leq \hspace{0.05in} 4||\phi_1||_{\infty}.
\end{equation*}
From (\ref{calccolcol}) it also follows that
\begin{align*}
||\widehat P^*_{t_1,0}\Gamma^2_{t,t_2,t_1} \phi_1||_{\infty} \leq
\left(\frac{2}{N-1}+\frac{4(N-2)}{N-1}\right)8||\phi_1||_{\infty}\leq
2!\hspace{0.05in} 4^2||\phi_1||_{\infty}.
\end{align*}
It is tedious but straightforward to show that in general we have
\begin{equation}\label{bound_series_terms}
||\widehat P^*_{t_1,0}\Gamma^j_{t,t_j,\dots,t_1} \phi_1||_{\infty}
\leq j! \hspace{0.05in}  4^j ||\phi_1||_{\infty}.
\end{equation}
A detailed proof of (\ref{bound_series_terms}) in the
case $\widehat P^*_{t_1,0}\Gamma^j_{t,t_j,\dots,t_1}=\K^j$, i.e, $\widehat P^*_{t,s}=Id$
for all $t$ and $s$ can be found in~\cite{BCP}, for $j+1<N$ as well as for $j+1\ge N$ (the
latter has to be handled a little differently). In our
case the proof follows along the same lines since $\widehat P^*_{t,s}$
preserves the $L^\infty$ norm. More generally, if $\phi_m$ is function
of $m$ variables, $m\geq2 $, it can be shown by induction that
\begin{eqnarray}\label{estimateg}
  ||\widehat P^*_{t_1,0}\Gamma^j_{t,t_j,\dots,t_1} \phi_m||_{\infty}
  &\leq& 4^j j! \left( \begin{array}{c} m+j-1 \\ j \end{array}
  \right)||\phi_m||_{\infty}. 
\end{eqnarray}
An important feature of the estimate (\ref{estimateg}) is that it is
independent of $N$, and we note that in (\ref{calccolcol}) for large
$N$ the first term is small but the two last terms where new velocity
variables are added have an impact.
\\ \\
\underline{\textbf{STEP 3}}\\
In the  previous step we found that the action of
$P^*_{t_1,0}\Gamma_{t,t_1}$ on $\phi_1$ results in a sum in which the
terms consist of  functions $\phi_{2;1}$ depending on two velocity
variables by adding a velocity variable and a time variable to
$\phi_1$, and $P^*_{t_1,0}\Gamma^2_{t,t_2,t_1}$ acting on $\phi_1$
results in a sum in which the terms consist of functions $\phi_{3;2}$
depending on three velocity variables  by adding two velocity
variables and two time variables to~$\phi_1$. In this step we look at
the action of $\widehat P^*_{t_1,0}\Gamma^1_{t_2,t_1}$ on a
function $\phi_{k;l-1}$ depending on $k$ velocity variables and $l-1$
time variables to see that for large $N$, only the terms where a new
velocity variable is added make a significant contribution. To be more
precise, we have the following lemma which corresponds to Lemma $3.5$
in~\cite{BCP} which deals with the case $\widehat
P^*_{t_1,0}\Gamma^j_{t,t_j,\dots,t_1}=\K^j$.
\begin{lemma}
\label{lem_3.2}
Assume that $\widehat{W}_N(\textbf{V},0)$ is a symmetric probability density on $\R^N$.
For $l\geq 2$, let $\phi_{k;l-1}=\phi_{k;l-1}(v_1,\dots,v_k;t,t_l,\dots,t_2)$, where
$\phi_{k;0}=\phi_k(v_1,\dots,v_k)$ is a bounded continuous function of the variables $v_1\dots,v_k$.
Define $\widehat P^*_{t_1,0}\phi_{k+1;l}$ as
\begin{align}\label{formulagplus}
\nonumber &\widehat P^*_{t_1,0}\phi_{k+1;l}(v_1,\dots,v_{k+1};t,t_l,\dots,t_1)\\
&\hspace{0.3in}=2\sum_{i=1}^{k}\widehat P^*_{t_1,0}(Q_{(i,k+1)}-I)\widehat P^*_{t_2,t_1}\phi_{k;l-1}(v_1,\dots,v_i,\dots,v_k;t,t_l,\dots,t_2).
\end{align}
Then, we have
\begin{align*}
&\int_{\R^N} \widehat{W}_N(\textbf{V},0)\widehat
P^*_{t_1,0}\Gamma^1_{t_2,t_1} \phi_{k;l-1}
\hspace{0.05in}\mathrm{d}m^{(N)}\\
&\hspace{0.1in}=\int_{\R^N} \widehat{W}_N(\textbf{V},0) \widehat
P^*_{t_1,0} \big
[\phi_{k+1;l}(v_1,\dots,v_{k+1};t,t_l,\dots,t_2,t_1)\\
&\hspace{1.5in}+\phi^R_{k+1;l}(v_1, \dots ,v_{k+1};t,t_l,\dots,t_2,t_1)\big ] \mathrm{d}m^{(N)}.
\end{align*}
where $\phi^R_{k+1;l}$ is a function depending $v_1,\dots,v_{k+1}$ and $t_1,\dots,t_l$, and
\begin{align}\label{restestimate}
||&\phi^R_{k+1;l}||_\infty \leq \frac{4}{N-1}\left(
  k(k-1)+\frac{k(k-1)}{2} \right ) ||\phi_{k;l-1}||_\infty= 6
\frac{k(k-1)}{(N-1)}||\phi_{k;l-1}||_\infty.
\end{align}
\end{lemma}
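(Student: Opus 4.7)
The plan is to expand $\widehat P^*_{t_1,0}\Gamma^1_{t_2,t_1}\phi_{k;l-1} = \widehat P^*_{t_1,0}\K\bigl(\widehat P^*_{t_2,t_1}\phi_{k;l-1}\bigr)$ by splitting the sum defining $\K$ according to whether the colliding pair $(i,j)$ sits entirely inside, partly inside, or entirely outside the active index set $\{1,\dots,k\}$. First I would set $\psi := \widehat P^*_{t_2,t_1}\phi_{k;l-1}$, observe that $\psi$ depends only on $v_1,\dots,v_k$, and conclude that all pairs with $k<i<j\le N$ contribute zero to $\K\psi$ because $(Q_{(i,j)}-I)$ annihilates functions independent of $v_i,v_j$. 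That leaves the mixed pairs $i\in\{1,\dots,k\}$, $j\in\{k+1,\dots,N\}$, which introduce a new velocity variable and will form the main term, and the recollision pairs $1\le i<j\le k$, which will only contribute to the error.

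Second, I would use the symmetry of $\widehat W_N(\textbf V,0)$ in $v_{k+1},\dots,v_N$ together with the fact that $\widehat P^*_{t_1,0}$ acts identically on every coordinate (by the componentwise affine form of $\widehat S_{t_1,0}$ in~\eqref{dynamic_ptilde}) to rename each silent index $j$ in the mixed sum to $k+1$. This collapses the mixed sum, under the integral against $\widehat W_N(\textbf V,0)$, into $(N-k)$ identical copies and produces the coefficient $2(N-k)/(N-1)$ in front of $\widehat P^*_{t_1,0}\sum_{i=1}^{k}(Q_{(i,k+1)}-I)\psi$. Writing $\tfrac{2(N-k)}{N-1} = 2 - \tfrac{2(k-1)}{N-1}$, the leading "$2$" piece is exactly $\widehat P^*_{t_1,0}\phi_{k+1;l}$ from~\eqref{formulagplus}, and the leftover $-\tfrac{2(k-1)}{N-1}$ piece becomes one component of $\widehat P^*_{t_1,0}\phi^R_{k+1;l}$, together with the recollision sum $\tfrac{2}{N-1}\sum_{1\le i<j\le k}(Q_{(i,j)}-I)\psi$.

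Third, to bound $\phi^R_{k+1;l}$ in $L^\infty$ I would use that $\widehat P^*_{t,s}$ preserves the $L^\infty$ norm and that $Q_{(i,j)}-I$ has operator norm at most $2$ on $L^\infty$. The coefficient-mismatch contribution is then controlled by $\tfrac{2(k-1)}{N-1}\cdot k\cdot 2\|\phi_{k;l-1}\|_\infty = \tfrac{4k(k-1)}{N-1}\|\phi_{k;l-1}\|_\infty$, and the recollision contribution by $\tfrac{2}{N-1}\cdot\binom{k}{2}\cdot 2\|\phi_{k;l-1}\|_\infty = \tfrac{2k(k-1)}{N-1}\|\phi_{k;l-1}\|_\infty$; the two add to the claimed bound $6k(k-1)/(N-1)\|\phi_{k;l-1}\|_\infty$.

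The combinatorics are essentially those of Lemma~$3.5$ of~\cite{BCP}, so there is no real obstacle; the only point requiring care is that the renaming step legitimately survives the presence of $\widehat P^*_{t_1,0}$ sandwiched between $\widehat W_N(\textbf V,0)$ and $(Q_{(i,j)}-I)$. This is where I would invoke the componentwise flow action established in the proof of Lemma~\ref{evJ}: since $\widehat P^*_{t_1,0}$ commutes with permutations of coordinates, permuting the silent variables under the $\widehat W_N(\textbf V,0)$ integral is still justified, and the rest is straightforward bookkeeping.
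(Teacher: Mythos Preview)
Your proposal is correct and follows essentially the same route as the paper: split the $\K$ sum into pairs outside $\{1,\dots,k\}$ (which vanish), mixed pairs (use symmetry of $\widehat W_N$ to rename the silent index to $k+1$, producing the coefficient $2(N-k)/(N-1)$), and recollision pairs inside $\{1,\dots,k\}$; then write $2(N-k)/(N-1)=2-2(k-1)/(N-1)$ to peel off $\phi_{k+1;l}$ and bound the remainder via $\|Q_{(i,j)}-I\|_{\infty\to\infty}\le 2$ and $\|\widehat P^*_{t,s}\|_{\infty\to\infty}=1$. Your explicit remark that $\widehat P^*_{t_1,0}$ commutes with coordinate permutations (because of the componentwise affine flow~\eqref{dynamic_ptilde}) is a point the paper uses implicitly but does not state, so if anything your version is slightly more careful here.
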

\begin{proof}
By the definition of $\K$ and since $\phi_{k;l-1}$ depends on $v_1,\dots,v_k$, we have
\begin{align*}
&\int_{\R^N} \widehat{W}_N(\textbf{V},0)\widehat
P^*_{t_1,0}\Gamma^1_{t_2,t_1}
\phi_{k;l-1}(v_1,\dots,v_k;t,t_l,\dots,t_2)\hspace{0.05in}\mathrm{d}m^{(N)}=\\
&\frac{2}{N-1}\sum_{i=1}^{k}\sum_{j=i+1}^{N}\int_{\R^N}
\widehat{W}_N(\textbf{V},0)\widehat P^*_{t_1,0}(Q_{(i,j)}-I)\widehat
P^*_{t_2,t_1}\phi_{k;l-1} \hspace{0.05in} \mathrm{d}m^{(N)}\\
&=\frac{2}{N-1}\sum_{i=1}^{k}\sum_{j=k+1}^{N}\int_{\R^N}
\widehat{W}_N(\textbf{V},s)\widehat P^*_{t_1,0}(Q_{(i,j)}-I)\widehat
P^*_{t_2,t_1}\phi_{k;l-1} \hspace{0.05in} \mathrm{d}m^{(N)}\\
&\hspace{0.1in}+\frac{2}{N-1}\sum_{i=1}^{k}\sum_{j=i+1}^{k}\int_{\R^N}
\widehat{W}_N(\textbf{V},0)\widehat P^*_{t_1,0}(Q_{(i,j)}-I)
\widehat P^*_{t_2,t_1}\phi_{k;l-1}\hspace{0.05in} \mathrm{d}m^{(N)}.
\end{align*}
Because $ \widehat{W}_N(\textbf{V},0)$ is a symmetric probability density, we find that
\begin{align*}
&\frac{2}{N-1}\sum_{i=1}^{k}\sum_{j=k+1}^{N}\int_{\R^N}
\widehat{W}_N(\textbf{V},0)\widehat
P^*_{t_1,0}(Q_{(i,j)}-I)P^*_{t_2,t_1}\phi_{k;l-1} \hspace{0.05in}
\mathrm{d}m^{(N)}\\
&\hspace{0.2in}=2\frac{N-k}{N-1}\sum_{i=1}^{k}\int_{\R^N}
\widehat{W}_N(\textbf{V},0)\widehat P^*_{t_1,0}(Q_{(i,k+1)}-I)\widehat
P^*_{t_2,t_1}\phi_{k;l-1} \hspace{0.05in} \mathrm{d}m^{(N)}.
\end{align*}
Using this, it follows that
\begin{align}\label{formeulanextcol}
\nonumber &\int_{\R^N}\widehat{W}_N(\textbf{V},0)\widehat
P^*_{t_1,0}\Gamma^1_{t_2,t_1}
\phi_{k;l-1}\hspace{0.05in}\mathrm{d}m^{(N)}=\\
\nonumber
&=2\frac{N-k}{N-1}\sum_{i=1}^{k}\int_{\R^N}\widehat{W}_N(\textbf{V},0)\widehat
P^*_{t_1,0}(Q_{(i,k+1)}-I)\widehat P^*_{t_2,t_1}\phi_{k;l-1}
\hspace{0.05in} \mathrm{d}m^{(N)}\\
\nonumber
&+\frac{2}{N-1}\sum_{i=1}^{k}\sum_{j=i+1}^{k}\int_{\R^N}\widehat{W}_N(\textbf{V},0)\widehat
P^*_{t_1,0}(Q_{(i,j)}-I)\widehat P^*_{t_2,t_1}\phi_{k;l-1}
\hspace{0.05in} \mathrm{d}m^{(N)}\\
\nonumber &=: \int_{\R^N}\widehat{W}_N(\textbf{V},0)\widehat
P^*_{t_1,0}\big [\phi_{k+1;l}(v_1,\dots,v_{k+1};t,t_l,\dots,t_1)+\\
&\hspace{1in}+\phi^R_{k+1;l}(v_1,\dots,v_{k+1};t,t_l,\dots,t_1)\big ]\mathrm{d}m^{(N)},
\end{align}
where
\begin{align*}
&\phi^R_{k+1;l}(v_1,\dots,v_{k+1};t,t_l,\dots,t_1,0)\\
&=2\frac{1-k}{N-1}\sum_{i=1}^{k}\widehat
P^*_{t_1,0}(Q_{(i,k+1)}-I)\widehat P^*_{t_2,t_1}\phi_{k;l-1}+\\
&\hspace{0.5in}+\frac{2}{N-1}\sum_{i=1}^{k}\sum_{j=i+1}^{k}\widehat
P^*_{t_1,0}(Q_{(i,j)}-I)\widehat P^*_{t_2,t_1}\phi_{k;l-1},
\end{align*}
and finally we obtain the estimate
\begin{align}\label{restestimate}
||&\phi^R_{k+1;l}||_\infty \leq \frac{4}{N-1}\left(
  k(k-1)+\frac{k(k-1)}{2} \right ) ||\phi_{k;l-1}||_\infty= 6
\frac{k(k-1)}{(N-1)}||\phi_{k;l-1}||_\infty.
\end{align}
\end{proof}
The construction in step 2 and 3 shows that $\widehat
P^*_{t_1,0}\phi_{k+1;l}(v_1,\dots,v_{k+1};t,t_l,\dots,t_1)$, up
to an error term that vanishes in the limit of large $N$ like $1/N$, can be
written as a sum of terms of which each can be represented by a binary tree,
which determines in which order new velocities are added.
For example, starting with one velocity $v_1$ at time $t$ and adding
three new velocities as described above, we could find the following
sequence of graphs:



\begin{dispmathenv}
\label{eq:tree1}
\begin{tikzpicture}
\draw[black,fill] (0,0) circle (0.1)  node[below] {$v_1$};
\draw[black] (0,4) circle (0.1) node[right] {$(v_1,t)$};
\draw[black,thick] (0,0) -- (0, 3.9);
\draw[black,fill] (2,0) circle (0.1) node[below] {$v_1$};
\draw[black,fill] (3,0) circle (0.1) node[below] {$v_2$};
\draw[black] (2.5,4) circle (0.1) node[right] {$(v_1,t)$};
\draw[black,thick] (2.5,3.1) node[right] {$(v_2,t_1)$} -- (2.5, 3.9) ;
\draw[black,thick] (2,0) -- (2.5,3.1);
\draw[black,thick] (3,0) -- (2.5,3.1);
\draw[black,fill] (5,0) circle (0.1)  node[below] {$v_1$};
\draw[black,fill] (6,0) circle (0.1)  node[below] {$v_2$};
\draw[black,fill] (7,0) circle (0.1)  node[below] {$v_3$};
\draw[black] (5.5,4) circle (0.1) node[right] {$(v_1,t)$};
\draw[black,thick] (5.5,3.1) node[right] {$(v_2,t_2)$} -- (5.5, 3.9) ;
\draw[black,thick] (5,0) -- (5.5,3.1);
\draw[black,thick] (6.5,2.3) node[right] {$(v_3,t_1)$} -- (5.5, 3.1) ;
\draw[black,thick] (6,0) -- (6.5,2.3);
\draw[black,thick] (7,0) -- (6.5,2.3);
\draw[black,fill] (9,0) circle (0.1)  node[below] {$v_1$};
\draw[black,fill] (10,0) circle (0.1)  node[below] {$v_2$};
\draw[black,fill] (11,0) circle (0.1)  node[below] {$v_3$};
\draw[black,fill] (12,0) circle (0.1)  node[below] {$v_4$};
\draw[black] (9.5,4) circle (0.1) node[right] {$(v_1,t)$};
\draw[black,thick] (9.5,3.1) node[right] {$(v_2,t_2)$} -- (9.5, 3.9) ;
\draw[black,thick] (9,0) -- (9.5,3.1);
\draw[black,thick] (10.5,2.3) node[right] {$(v_3,t_3)$} -- (9.5, 3.1) ;
\draw[black,thick] (10,0) -- (10.5,2.3);
\draw[black,thick] (11.5,1.2) node[right] {$(v_4,t_1)$}  -- (10.5,2.3);
\draw[black,thick] (11,0) -- (11.5,1.2);
\draw[black,thick] (12,0) -- (11.5,1.2);
\end{tikzpicture}
\end{dispmathenv}
where the new velocities are always added to the right branch of the tree,
or
\begin{dispmathenv}
\label{eq:tree2}
\begin{tikzpicture}
\draw[black,fill] (0,0) circle (0.1)  node[below] {$v_1$};
\draw[black] (0,4) circle (0.1) node[right] {$(v_1,t)$};
\draw[black,thick] (0,0) -- (0, 3.9);
\draw[black,fill] (2,0) circle (0.1) node[below] {$v_1$};
\draw[black,fill] (3,0) circle (0.1) node[below] {$v_2$};
\draw[black] (2.5,4) circle (0.1) node[right] {$(v_1,t)$};
\draw[black,thick] (2.5,3.1) node[right] {$(v_2,t_1)$} -- (2.5, 3.9) ;
\draw[black,thick] (2,0) -- (2.5,3.1);
\draw[black,thick] (3,0) -- (2.5,3.1);
\draw[black,fill] (5,0) circle (0.1)  node[below] {$v_1$};
\draw[black,fill] (6,0) circle (0.1)  node[below] {$v_2$};
\draw[black,fill] (7,0) circle (0.1)  node[below] {$v_3$};
\draw[black] (5.5,4) circle (0.1) node[right] {$(v_1,t)$};
\draw[black,thick] (5.5,3.1) node[right] {$(v_2,t_2)$} -- (5.5, 3.9) ;
\draw[black,thick] (5,0) -- (5.5,3.1);
\draw[black,thick] (6.5,2.3) node[right] {$(v_3,t_1)$} -- (5.5, 3.1) ;
\draw[black,thick] (6,0) -- (6.5,2.3);
\draw[black,thick] (7,0) -- (6.5,2.3);
\draw[black,fill] (9,0) circle (0.1)  node[below] {$v_1$};
\draw[black,fill] (10,0) circle (0.1)  node[below] {$v_4$};
\draw[black,fill] (11,0) circle (0.1)  node[below] {$v_2$};
\draw[black,fill] (12,0) circle (0.1)  node[below] {$v_3$};
\draw[black] (10.5,4) circle (0.1) node[right] {$(v_1,t)$};
\draw[black,thick] (10.5,3.1) node[right] {$(v_2,t_3)$} -- (10.5, 3.9) ;
\draw[black,thick] (11.5,2.3) node[right] {$(v_3,t_2)$} -- (10.5, 3.1) ;
\draw[black,thick] (11,0) -- (11.5,2.3);
\draw[black,thick] (12,0) -- (11.5,2.3);
\draw[black,thick] (9.5,1.2) node[right] {$(v_4,t_1)$} -- (10.5, 3.1) ;
\draw[black,thick] (9,0) -- (9.5,1.2);
\draw[black,thick] (10,0) -- (9.5,1.2);
\end{tikzpicture}
\end{dispmathenv}
where the tree is built symmetrically. The terms of order $1/N$ that are
deferred to the rest term can be represented by trees very much in the same way,
but may have two or more leafs with the same velocity variable. This
is exactly as in the original Kac paper as far as the collisions go:
the collision process is independent of the force field and of the
state of the $N$-particle system, and hence the number of terms
represented by a particular tree, and the distribution of time points
where new velocities are added (giving a new branch of the tree) are
exactly the same in our setting as in the original one.
But what happens between the collisions is important, and hence the
added velocities are noted together with the time of addition.
In this construction the velocities are added in increasing order of
indices, but it is important to understand that any set of four different variables out of
$v_1,...,v_N$, or any permuation of $v_1,...,v_4$ would give the same result.
\\ \\
\noindent \underline{\textbf{STEP 4}}\\
In (\ref{differentfofmarginal}) we obtained an expression for the
$k$-th marginal $f_k^N$ of $\widehat W_N(\textbf{V},t)$ at time $t$ as
a series. In this step we check that this series representation is
uniformly convergent in $N$. We will only consider the case $k=1,2$,
the other cases being similar but more tedious.
Setting $\phi_{1;0}(v_1)=\phi_1(v_1)$ and defining
$P^*_{t_1,0}\phi_{j+1;j}$ inductively by (\ref{formulagplus}) we have
that (\ref{differentfofmarginal}) without the first term equals
(recall also notation (\ref{notgamma}))
\begin{align}\label{seriesnew}
\nonumber &\sum_{j=1}^{\infty}\int_{\R^N}\int_{A_j}\widehat
W_N(\textbf{V},0)\widehat P^*_{t_1,0}\Gamma^j_{t,t_j,\dots,t_1}
\phi_{1;0}\mathrm{d}m^{(N)}\mathrm{d}t_1 \dots \mathrm{d}t_j \\
\nonumber &=\sum_{j=1}^{\infty} \int_{\R^N}\int_{A_j}\widehat
W_N(\textbf{V},0)\widehat
P^*_{t_1,0}\phi_{j+1;j}(v_1,\dots,v_{j+1};t,t_j,\dots,t_1)
\mathrm{d}m^{(N)}\mathrm{d}t_1 \dots \mathrm{d}t_j\\
\nonumber &\hspace{0.2in}+ \sum_{j=1}^{\infty}
\int_{\R^N}\int_{A_j}\widehat W_N(\textbf{V},0)\widehat
P^*_{t_1,0}\left(\sum_{i=1}^{j}
\Gamma^{j-i}_{t_{j-(i-1)},\dots,t_1} \phi^R_{i+1;i}(v_1,\dots,v_{i+1};t,\dots,t_{j-(i-1)}) \right)\\
&\hspace{2.3in} \mathrm{d}m^{(N)}\mathrm{d}t_1 \dots \mathrm{d}t_j.
\end{align}
By induction on (\ref{formulagplus}) it follows that
\begin{align*}
||\phi_{j+1;j}||_{\infty}\leq 4^j \hspace{0.015in} j! \hspace{0.015in} ||\phi_1||_{\infty}.
\end{align*}
Noting that
\begin{equation*}\label{times}
 \int_{A_j}\mathrm{d}t_j \cdots \mathrm{d}t_1= \frac{t^j}{j!},
\end{equation*}
we have
\begin{align*}
\left | \int_{\R^N}\int_{A_j}\widehat W_N(\textbf{V},0)\widehat
  P^*_{t_1;0}\phi_{j+1;j}\mathrm{d}m^{(N)}\mathrm{d}t_1 \dots
  \mathrm{d}t_j \right |
\leq (4t)^j \hspace{0.015in} ||\phi_1||_{\infty}.
\end{align*}
This is an estimate of a general term in the first series in the right
hand side of (\ref{seriesnew}). Hence, that series is uniformly
convergent in $N$ if $t<1/4$. For the second series in the right hand
side of (\ref{seriesnew}) using (\ref{estimateg}) and
(\ref{restestimate}), we first obtain
\begin{align*}
||\Gamma^{j-i}_{t_{j-(i-1)},\dots,t_1} \phi^R_{i+1;i}||_{\infty}\leq
4^{j-i} \frac{j!}{i!} ||\phi^R_{i+1;i}||_{\infty}\leq 4^{j-i}
\hspace{0.015in} \frac{j!}{i!}\hspace{0.015in} \frac{6
  \hspace{0.015in} i^2}{N-1}
\hspace{0.015in}||\phi_{i;i-1}||_{\infty}.
\end{align*}
Since $||\phi_{i:i-1}||\leq 4^{i-1} \hspace{0.015in} (i-1)! \hspace{0.015in} ||\phi_1||_{\infty}$, we have
\begin{align*}
||\Gamma^{j-i}_{t_{j-(i-1)},\dots,t_1} \phi^R_{i+1;i}||_{\infty}\leq
\frac{3i}{2(N-1)}\hspace{0.015in} 4^j\hspace{0.015in} j!
\hspace{0.015in} \hspace{0.015in} ||\phi_1||_{\infty}.
\end{align*}
Hence,

\begin{align*}
\left |\left|\sum_{i=1}^{j}\Gamma^{j-i}_{t_{j-(i-1)},\dots,t_1} \phi^R_{i+1;i} \right |\right|_{\infty}
\leq \frac{3}{2}\hspace{0.015in}\frac{4^j\hspace{0.015in}
  j!}{(N-1)}\left(\sum_{i=1}^{j} i\right) ||\phi_1||_{\infty}
\leq 3\frac{4^{j-1}\hspace{0.015in} j! \hspace{0.015in}(j+1)^2}{N-1}||\phi_1||_{\infty}.
\end{align*}
Finally, we arrive at
\begin{align}\label{esttermsseriers1}
& \left| \int_{\R^N}\int_{A_j}\widehat W_N(\textbf{V},0)\widehat
  P^*_{t_1,0}\left(\sum_{i=1}^{j}\Gamma^{j-i}_{t_{j-(i-1)},\dots,t_1}
    \phi^R_{i+1;i} \right)\mathrm{d}m^{(N)}\mathrm{d}t_1\dots
  \mathrm{d}t_j \right| \notag \\
& \hspace{0.3in}\leq \frac{3}{4}\hspace{0.015in} (4t)^j
\hspace{0.015in}\frac{(j+1)^2}{N-1}\hspace{0.015in}||\phi_1||_{\infty}.
\end{align}
We also get that the second series in the right hand side of
(\ref{seriesnew}) is uniformly convergent in $N$ if $t<1/4$.

Similarly to the computation above, for $\phi_{2;0}=\phi_2$ where
$\phi_2$ a function of the two variables $v_1,v_2$, and defining
inductively $\widehat P^*_{t_1,0}\phi_{j+2,j}$ by
(\ref{formulagplus}), again it follows that
\begin{align}\label{seriesnew1}
\nonumber &\sum_{j=1}^{\infty} \int_{\R^N}\int_{A_j}\widehat
W_N(\textbf{V},0)\widehat P^*_{t_1,0}\Gamma^j_{t,t_j,\dots,t_1}
\phi_{2;0} \mathrm{d}m^{(N)}\mathrm{d}t_1 \dots \mathrm{d}t_j \\
\nonumber &=\sum_{j=1}^{\infty} \int_{\R^N}\int_{A_j}\widehat
W_N(\textbf{V},0) \widehat
P^*_{t_1;0}\phi_{j+2;j}(v_1,\dots,v_{j+2};t,t_j,\dots,t_1)\mathrm{d}m^{(N)}\mathrm{d}t_1
\dots \mathrm{d}t_j\\
\nonumber &\hspace{0.3in}+\sum_{j=1}^{\infty}
\int_{\R^N}\int_{A_j}\widehat W_N(\textbf{V},0)\widehat
P^*_{t_1,0}\left(\sum_{i=1}^{j}
\Gamma^{j-i}_{t_{j-(i-1)},\dots,t_1} \phi^R_{i+2;i}(v_1,\dots,v_{i+2};t,\dots,t_{j-(i-1)}) \right)\\
&\hspace{2.3in}\mathrm{d}m^{(N)}\mathrm{d}t_1 \dots \mathrm{d}t_j.
\end{align}
By induction and (\ref{formulagplus}) we get
\begin{align*}
||\phi_{j+2;j}||_{\infty}\leq 4^j \hspace{0.015in} (j+1)! \hspace{0.015in} ||\phi_2||_{\infty},
\end{align*}
which together with (\ref{times}) yields
\begin{align*}
\left | \int_{\R^N}\int_{A_j}\widehat W_N(\textbf{V},0)\widehat
  P^*_{t_1,0}\phi_{j+2;j}\mathrm{d}m^{(N)}\mathrm{d}t_1 \dots
  \mathrm{d}t_j \right |
\leq (4t)^j \hspace{0.015in} (j+1) \hspace{0.015in} ||\phi_2||_{\infty}.
\end{align*}
This implies that the first series in the right hand side of
(\ref{seriesnew1}) is uniformly convergent in $N$ if $t<1/4$. Now,
using (\ref{estimateg}) and (\ref{restestimate}) we get
\begin{align*}
||\Gamma^{j-i}_{t_{j-(i-1)},\dots,t_1} \phi^R_{i+2;i}||_{\infty}\leq
4^{j-i} \frac{(j+1)!}{(i+1)!} ||\phi^R_{i+2;i}||_{\infty}
\leq 4^{j-i} \hspace{0.015in} \frac{(j+1)!}{(i+1)!} \hspace{0.015in}
\frac{6(i+1)^2}{N-1} \hspace{0.015in} ||\phi_{i+1;i-1}||_{\infty}.
\end{align*}
Using $||\phi_{i+1;i-1}||\leq 4^{i-1} \hspace{0.015in} i! \hspace{0.015in} ||\phi_2||_{\infty}$, we get
\begin{align*}
\left |\left |\Gamma^{j-i}_{t_{j-(i-1)},\dots,t_1} \phi^R_{i+2;i}\right |\right |_{\infty}
\leq \frac{3}{2}\hspace{0.015in} 4^j \hspace{0.015in}(j+1)!
\hspace{0.015in} \frac{(i+1)}{N-1} ||\phi_2||_{\infty},
\end{align*}
which implies that
\begin{align*}
||\sum_{i=1}^{j}\Gamma^{j-i}_{t_{j-(i-1)},\dots,t_1}
\phi^R_{i+2;i}||_{\infty}&\leq \frac{3}{2} \hspace{0.015in} 4^j
\hspace{0.015in} \frac{(j+1)!}{N-1} \left(\sum_{i=1}^{j} (i+1)\right)
||\phi_2||_{\infty}\\
&\leq \frac{3}{4(N-1)}\hspace{0.015in} 4^j \hspace{0.015in} j!
\hspace{0.015in} (j+1)^3 \hspace{0.015in} ||\phi_2||_{\infty}.
\end{align*}
Finally, using the last estimate, we have
\begin{align}\label{esttermsseriers2}
&\left| \int_{\R^N}\int_{A_j}\widehat W_N(\textbf{V},0)\widehat
  P^*_{t_1,0}\left(\sum_{i=1}^{j}\Gamma^{j-i}_{t_{j-(i-1)},\dots,t_1}
    \phi^R_{i+2;i} \right)\mathrm{d}m^{(N)}\mathrm{d}t_1 \dots
  \mathrm{d}t_j \right| \notag\\
&\hspace{0.3in}\leq \frac{3}{4}\hspace{0.015in} (4t)^j\hspace{0.015in}
\frac{(j+1)^3}{N-1}\hspace{0.015in} ||\phi_2||_{\infty}.
\end{align}
Therefore, the second series in (\ref{seriesnew1}) is also uniformly convergent in $N$ if $t<1/4$.
\\ \\
\underline{\textbf{STEP 5}}\\
In this last step we use the series representation
(\ref{differentfofmarginal}) to obtain that the second marginal of
$\widehat W_N(\textbf{V},t)$ can be written as product of two first
marginals of $\widehat W_N(\textbf{V},t)$ as $N$ tends to
infinity. From (\ref{differentfofmarginal}), (\ref{seriesnew}) and the
estimates in STEP 4 it follows for $0 \leq t <T$ where $T<1/4$ that
\begin{align}\label{firstmargt}
 \nonumber   &\int_{\R} \widehat
 f_1^N(v_1,t)\phi_{1;0}(v_1)\mathrm{d}v_1=\int_{\R} \widehat
 f_1^N(v_1,0)\widehat P^*_{t;0}\phi_{1;0}\mathrm{d}v_1+ \\
 \nonumber   &+\sum_{j=1}^{\infty}\int_{\R^{j+1}}\int_{A_j}\widehat
 f^{N}_{j+1}(v_1,\dots,v_{j+1},0)\\
    &\hspace{1in} \widehat
    P^*_{t_1,0}\phi_{j+1;j}(v_1,\dots,v_{j+1};t,t_j,\dots,t_1)\mathrm{d}m^{(j+1)}\mathrm{d}t_1
    \dots \mathrm{d}t_j+ R_{1,T,N}
\end{align}
where the series is absolutely convergent uniformly in $N$ and using (\ref{esttermsseriers1}), we have
\begin{equation}
||R_{1,T,N}||_{\infty}\leq \frac{3}{N-1} \left (\sum_{j=1}^{\infty}
  (4t)^j(j+1)^2 \right) ||\phi_1||_{\infty} \leq
\frac{C_1(T)}{N}||\phi_1||_{\infty}.
\end{equation}
The constant $C_1(T)$ depends only on $T$ and
$$C_1(T) \sim \frac{1}{(T-1/4)^3}.$$
Now, for a function
$\psi_{2;0}(v_1,v_2)=\phi_{1;0}(v_1)\varphi_{1;0}(v_2)$, where
$\psi_{j+2;j}$ is inductively defined by (\ref{formulagplus}) again it
follows for $0 \leq t<T$ that
\begin{align}
\label{eq:3.24}
 \nonumber   &\int_{\R}\int_{\R} \widehat
 f_2^N(v_1,v_2,t)\psi_2(v_1,v_2)\mathrm{d}v_1 \mathrm{d}v_2=
 \int_{\R}\int_{\R}\widehat f_2^N(v_1,v_2,0)\widehat
 P^*_{t;0}\psi_{2;0}\mathrm{d}v_1 \mathrm{d}v_2+ \\
 \nonumber   &+\sum_{j=1}^{\infty}\int_{\R^{j+2}}\int_{A_j}\widehat
 f^N_{j+2}(v_1,\dots,v_{j+2},0)\\
    &\hspace{1in}\widehat
    P^*_{t_1,0}\psi_{j+2;j}(v_1,\dots,v_{j+2};t,t_j,\dots,t_1)
    \mathrm{d}m^{(j+2)}\mathrm{d}t_1
    \dots \mathrm{d}t_j+ R_{2,T,N}
\end{align}
where the series is absolutely convergent uniformly in $N$. Using
(\ref{esttermsseriers2}), we get
\begin{equation}
||R_{2,T,N}||_{\infty}\leq \frac{3}{(N-1)} \left (\sum_{j=1}^{\infty}
  (4t)^j(j+1)^3 \right) ||\psi_2||_{\infty} \leq
\frac{C_2(T)}{N}||\psi_2||_{\infty},
\end{equation}
where $C_2(T)$ is a constant depending only on $T$, and
$$C_2(T) \sim \frac{1}{(T-1/4)^4}.$$
From the assumptions in the initial data (\ref{intialdataassumption}),
we now obtain
\begin{align}\label{eqintdata}
 \nonumber   &\int_{\R}\int_{\R} \widehat
 f_2^N(v_1,v_2,t)\psi_2(v_1,v_2)\mathrm{d}v_1 \mathrm{d}v_2=
 \int_{\R}\int_{\R}\widehat f_1^N(v_1,0)\widehat f_1^N(v_2,0)\widehat
 P^*_{t;0}\psi_{2;0}\mathrm{d}v_1 \mathrm{d}v_2+ \\
 \nonumber   &+\sum_{j=1}^{\infty}\int_{\R^{j+2}}\int_{A_j} \left (\prod_{i=1}^{j+2}\widehat f^N_{1}(v_i,0)\right)\\
 \nonumber   &\hspace{0.6in}\widehat{P}^*_{t_1,0}\psi_{j+2;j}(v_1,\dots,v_{j+2};t,t_j,\cdots,t_1)\mathrm{d}m^{(j+2)}\mathrm{d}t_1 \dots \mathrm{d}t_j\\
     &+ R_{2,T,N} + \widetilde R_{2,T,N},
\end{align}
where using (\ref{intialdataassumption}) and (\ref{estimateg}) yields
\begin{equation}
||\widetilde R_{2,T,N}||_{\infty}\le \frac{C_0}{N} \left
  (\sum_{j=0}^\infty(j+2)(j+1)(4t)^j \right)
||\psi_2||_{\infty}\leq \frac{C_3(T)}{N}||\psi_2||_{\infty}.
\end{equation}
Here $C_3(T)$ is constant depending only on $T$ and
$$C_3(T) \sim \frac{1}{(T-1/4)^3}.$$
The main contribution thus comes from the sum, and we want to show that this is equal to
\begin{align}\label{eqintdataprodA}
\nonumber &\bigg (\int_{\R}\widehat f_1^N(v_1,0)\widehat P^*_{t;0}\phi_{1;0}\mathrm{d}v_1+
 \sum_{k=1}^{\infty}\int_{\R^{k+1}}\int_{A_k} \bigg (\prod_{i=1}^{k+1}\widehat f^N_{1}(v_i,0)\bigg)\\
 \nonumber &\hspace{0.9in}\widehat P^*_{t_1,0}\phi_{k+1;k}(v_1,\dots,v_{k+1};t,t_k,\dots,t_1)
\mathrm{d}m^{(k+1)}\mathrm{d}t_1 \dots \mathrm{d}t_k \bigg)\times\\
\nonumber  &\times
  \bigg (\int_{\R}\widehat f_1^N(v_2,0)\widehat P^*_{t;0}\varphi_{1;0}\mathrm{d}v_2+
 \sum_{l=1}^{\infty}\int_{\R^{l+1}}\int_{A_l} \bigg (\prod_{i=1}^{l+1}\widehat f^N_{1}(v_i,0)\bigg)\\
 \nonumber &\qquad \widehat
 P^*_{t_1,0}\varphi_{l+1;l}(v_1,\dots,v_{l+1};t,s_l,\dots,s_1)
\mathrm{d}m^{(l+1)}\mathrm{d}s_1 \dots \mathrm{d}s_l \bigg)\\
\end{align}

Since $\psi_{2;0}(v_1,v_2)=\phi_{1;0}(v_1)\varphi_{1;0}(v_2)$, and the
operator $P^*_{t,s}$ acts independently on each velocity variable, we
have
$$\widehat P^*_{t,0}\psi_{2;0}(v_1,v_2)=\widehat P^*_{t,0}\phi_{1;0}(v_1)\hspace{0.015in}\widehat
P^*_{t,0}\varphi_{1;0}(v_2)$$

For the remaining terms, using (\ref{formulagplus}) the calculation
follows very much like in Kac's original work, but taking into account the times $t_i$
when new velocities are added to the original two.

Consider again the construction of the factors
$\widehat{P}^*_{t_1,0}\psi_{j+2;j}(v_1,\dots,v_{j+2};t,t_j,\cdots,t_1)$
in Eq.~\ref{eqintdata}.
These factors in turn consists of several
terms, where each term is constructed by adding new velocities as
described  in Step 2 and Step 3. But here the starting point consists
of two velocities, and the main contribution will come from terms
represented by two trees, rooted at $(v_1,t)$ and $(v_2,t)$
respectively. Already from Kac's original work it follows that adding all terms
in which the tree rooted at $v_1$ has $k+1$ leafs and the tree rooted
at $v_2$ has $l+1$ leafs would give  exactly those terms in the
product~(\ref{eqintdataprodA}) which come from multiplying the $k$-th
term in the first factor with the $l$-th term in the second factor
(using as always the symmetry with respect to permutation of the
variables), {\em if the time points were }  not important.
Consider the following two pairs of trees, representing terms where
three new velocities are added to the original two:

\begin{dispmathenv}
\label{eq:tree3}
\begin{tikzpicture}
{\color{black}
\draw[fill] (0,0) circle (0.1) node[below] {$v_1$};
\draw[fill] (1,0) circle (0.1) node[below] {$v_4$};
\draw[black] (0.5,4) circle (0.1) node[right] {$(v_1,t)$};
\draw[thick] (0.5,2.3) node[below right] {$(v_4,t_1)$} -- (0.5, 3.9) ;
\draw[thick] (0,0) -- (0.5,2.3);
\draw[thick] (1,0) -- (0.5,2.3);
\draw[fill] (2,0) circle (0.1)  node[below] {$v_2$};
\draw[fill] (3,0) circle (0.1)  node[below] {$v_3$};
\draw[fill] (4,0) circle (0.1)  node[below] {$v_5$};
\draw[black] (2.5,4) circle (0.1) node[right] {$(v_2,t)$};
\draw[thick] (2.5,3.1) node[below right] {$\;\;(v_3,s_2)$} -- (2.5, 3.9) ;
\draw[thick] (2,0) -- (2.5,3.1);
\draw[thick] (3.5,1.4) node[below right] {$(v_5,s_1)$} -- (2.5, 3.1) ;
\draw[thick] (3,0) -- (3.5,1.4);
\draw[thick] (4,0) -- (3.5,1.4);
\draw[fill] (7,0) circle (0.1) node[below] {$v_1$};
\draw[fill] (8,0) circle (0.1) node[below] {$v_3$};
\draw[black] (7.5,4) circle (0.1) node[right] {$(v_1,t)$};
\draw[thick] (7.5,3.1) node[below right] {$(v_3,t_1)$} -- (7.5, 3.9) ;
\draw[thick] (7,0) -- (7.5,3.1);
\draw[thick] (8,0) -- (7.5,3.1);
\draw[fill] (9,0) circle (0.1)  node[below] {$v_2$};
\draw[fill] (10,0) circle (0.1)  node[below] {$v_4$};
\draw[fill] (11,0) circle (0.1)  node[below] {$v_5$};
\draw[black] (9.5,4) circle (0.1) node[right] {$(v_2,t)$};
\draw[thick] (9.5,2.3) node[below right] {$\;\;\;(v_4,s_2)$} -- (9.5, 3.9) ;
\draw[thick] (9,0) -- (9.5,2.3);
\draw[thick] (10.5,1.4) node[below right] {$\;(v_5,s_1)$} -- (9.5, 2.3) ;
\draw[thick] (10,0) -- (10.5,1.4);
\draw[thick] (11,0) -- (10.5,1.4);
\draw[dashed] (0,1.4) node[left] {$t_1$} -- (11,1.4) ;
\draw[dashed] (0,2.3) node[left] {$t_2$} -- (11,2.3) ;
\draw[dashed] (0,3.1) node[left] {$t_3$} -- (11,3.1) ;
}
\end{tikzpicture}
\end{dispmathenv}

In the trees, the time  points of added velocities are denoted $s_j$
for the  tree rooted at $v_1$ and $t_j$ for the tree rooted at $v_2$,
as if they were representing terms in the
product~(\ref{eqintdataprodA}), and to the left the same time points are indexed
by only $t_j$-s, as when representing a term in~(\ref{eqintdata}).
The examples to the left and right would be identical in Kac's original
model, but here they are different, and we need a small computation to
see that after carrying out the integrals, we do get the correct result.

Consider an arbitrary function $u\in C( \R^j)$. Then
\begin{align*}
  \int_{A_j} u(t_j,...,t_1)\mathrm{d}t_1 \mathrm{d}t_2 \dots \mathrm{d}t_j &=
    \frac{t^j}{j!} \E\left[u(\tau_j,...,\tau_1)\right]\,,
\end{align*}
where $(\tau_i)_{i=1}^j$ is the increasing reordering of $i$
independent random variables uniformly distributed on $[0,t]$. In the
same way, for $u\in C(\R^{k+l})$,
\begin{align*}
  \int_{A_k\times A_{l}} u(t_k,...,t_1,s_{l},...,s_1)
  \mathrm{d}t_1\dots \mathrm{d}t_k \mathrm{d}s_1\dots \mathrm{d}s_{l} &=
  \frac{t^{k+l}}{k!\,l!}\E\left[u(\tau_k,...,\tau_1,\sigma_l,...,\sigma_1)\right]\,,
\end{align*}
where $(\tau_i)_{i=1}^k$ and $(\sigma_i)_{i=1}^{l}$  are two
increasing lists of time points obtained as reorderings of i.i.d random variables
as above. But these independent increasing lists can also be obtained
by taking $k+l$ independent random variables, uniformly distributed
on $[0,t]$, reordering them in increasing order, and then making a
random choice of $k$ of them to form $(\tau_i)_{i=1}^k$, leaving the
remaining ones for $(\sigma_i)_{i=1}^{l}$.
Hence
\begin{align*}
   \frac{t^{k+l}}{k!\,l!}&\E\left[u(\tau_k,...,\tau_1,\sigma_l,...,\sigma_1)\right]
   =\\
  &\frac{t^{k+l}}{k!\,l!}
 \left( \frac{k! l!}{(k+l)!}\sum  \frac{1}{\left|A_{k+l}\right|}\int_{A_{k+l}}
      u(t_{L_k},..., t_{L_1}, t_{R_{l}},...,t_{R_1})
      \mathrm{d}t_1\dots \mathrm{d}t_{k+l} \right)
\end{align*}
where the sum is taken over all partitions of $t_1,...,t_{l+k}$
into to increasing sequences $t_{L_1},...,t_{L_k}$ and $t_{R_1},...,t_{R_{l}}$.
Because $\left|A_{k+l}\right| = \frac{t^{l+k}}{(k+l)!}$ we
see that
\begin{align*}
   \int_{A_k\times A_{l}} u(t_k,...,t_1,s_{l},...,s_1)
  & \mathrm{d}t_1\dots \mathrm{d}t_k \mathrm{d}s_1\dots \mathrm{d}s_{l} = \\
  & \sum  \int_{A_{k+l}}
      u(t_{L_k},..., t_{L_1}, t_{R_{l}},...,t_{R_1}) \,
      \mathrm{d}t_1\dots \mathrm{d}t_{k+l}
\end{align*}
We may now conclude by taking
\begin{align*}
  u(t_{k+l},\dots,t_1) =
    \widehat{P}^*_{\min(t_1,s_1),0}\bigg(&\widehat{P}^*_{t_1,\min(t_1,s_1)}
      \phi_{k+1;k}(v_1,v_3,\dots,v_{k+2};t,t_k,\dots,t_1)
      \\
      &\widehat{P}^*_{s_1,\min(t_1,s_1)}\varphi_{l+1;l}(v_2,v_{k+3},\dots,v_{k+l+2};t,t_l,\dots,t_1)
\bigg)\,.
\end{align*}
Abbreviating $R_{T,N}=R_{2,T,N}+ \widetilde R_{2,T,N}$ and $C(T)=C_2(T)+C_3(T)$,
we have
$$||R_{T,N}||_{\infty} \leq \frac{C(T)}{N}||\psi_2||_{\infty}.$$
Thus, for $0\leq t \leq T$ we have that
\begin{align*}
&\int_{\R}\int_{\R} \widehat f_2^N(v_1,v_2,t)\phi(v_1)\varphi(v_2)\mathrm{d}v_1 \mathrm{d}v_2\\
&\hspace{0.4in}=\int_{\R} \widehat f_1^N(v_1,t)\phi(v_1)\mathrm{d}v_1 \int_{\R} \widehat f_1^N(v_2,t)\varphi(v_2)\mathrm{d}v_2
+R_{T,N},\\
\end{align*}
where $R_{T,N}\rightarrow 0$ as $1/N$. Since the $T$ is independent of the initial distribution, we can take $t_1$ with $0<t_1<T$ and repeat the proof to extend the result to $t_1\leq t <t_1+T$. Clearly, the constant $C(T)$ also changes, but it still depends on time and the factor $1/N$ remains unchanged.  We can continue in this way to cover any time range $0\leq t \leq T<\infty$.  This concludes the proof.\\
\end{proof}

Combining Theorem \ref{thm_quant_thermostat} and Lemma \ref{momentconserv} yields the following corollary which will be needed later.
\begin{corollary}\label{corthmquench}
Assume that $\widehat m_{6,N}(0)<\infty$. Let $\psi$ and $\phi$ be two functions such that
$$\psi(v_1)\leq C_1(1+v_1^2) \hspace{0.2in}\text{and} \hspace{0.2in} \phi(v_2)\leq C_1(1+v_2^2),$$
where $C_1$ and $C_2$ are two positive constants. Then we have
\begin{align*}
&\int_{\R}\int_{\R} \widehat f_2^N(v_1,v_2,t)\psi(v_1)\phi(v_2)\mathrm{d}v_1 \mathrm{d}v_2= \\
&\hspace{0.5in}\int_{\R} \widehat f_1^N(v_1,t)\psi(v_1) \mathrm{d}v_1\int_{\R} \widehat f_1^N(v_2,t)\phi(v_2) \mathrm{d}v_2+ S_{T,\psi,\phi}+\widetilde{S}_{T,\psi,\phi},
\end{align*}
where
$$|S_{T,\psi,\phi}|\leq \frac{C(T)}{\sqrt{N}} \hspace{0.2in}\text{and} \hspace{0.2in} |\widetilde{S}_{T,\psi,\phi}| \leq \frac{\widetilde{C}}{\sqrt{N}}\widehat{m}_{6,N}(T).$$
Here $\widetilde{C}$ is a positive constant and $C(T)$ is given by Theorem \ref{thm_quant_thermostat}.
\end{corollary}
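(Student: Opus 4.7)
The plan is to reduce the corollary to Theorem \ref{thm_quant_thermostat} by a truncation argument, using the sixth-moment bound of Lemma \ref{momentconserv} to control the resulting tails. For a cutoff $M=M(N)\geq 1$ to be chosen later, I would set $\psi_M(v_1)=\psi(v_1)\mathbf{1}_{|v_1|\leq M}$ and $\phi_M(v_2)=\phi(v_2)\mathbf{1}_{|v_2|\leq M}$, so that by the growth hypothesis $\|\psi_M\|_\infty\leq C_1(1+M^2)$ and similarly for $\phi_M$. Applying Theorem \ref{thm_quant_thermostat} with $k=2$ to the bounded product test function $\psi_M\phi_M$ yields
\[
\int_{\R^2}\widehat f_2^N\,\psi_M\phi_M\,\mathrm{d}m^{(2)} = \int_\R \widehat f_1^N\,\psi_M\,\mathrm{d}v_1\,\int_\R\widehat f_1^N\,\phi_M\,\mathrm{d}v_2 + E_N,
\]
with $|E_N|\leq 2C(T)C_1C_2(1+M^2)^2/N$; this controlled chaos error is what will become $S_{T,\psi,\phi}$.

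Next, I would reinstate the untruncated test functions and estimate all the tails using Lemma \ref{momentconserv}. The key pointwise bound is $(1+v^2)\mathbf{1}_{|v|>M}\leq 2v^6/M^4$ for $M\geq 1$, which combined with the growth hypotheses on $\psi,\phi$ gives the one-variable tail estimates
\[
\Bigl|\int_\R \widehat f_1^N(\psi-\psi_M)\,\mathrm{d}v\Bigr|+\Bigl|\int_\R\widehat f_1^N(\phi-\phi_M)\,\mathrm{d}v\Bigr|\leq \frac{C\,\widehat m_{6,N}(T)}{M^4}.
\]
For the joint tail $\int_{\R^2}\widehat f_2^N(\psi\phi-\psi_M\phi_M)\,\mathrm{d}m^{(2)}$, I would write $\psi\phi-\psi_M\phi_M=(\psi-\psi_M)\phi+\psi_M(\phi-\phi_M)$ and bound each piece by Cauchy--Schwarz against the probability density $\widehat f_2^N$, using the marginalization identity $\int\widehat f_2^N\,\mathrm{d}v_2=\widehat f_1^N$, the interpolation $\widehat m_{4,N}\leq \widehat m_{6,N}^{2/3}$ from Jensen, and the conservation $\widehat m_{2,N}=\widehat U_N$ proved in Lemma \ref{evJ}. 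The corresponding tails in the product of one-marginals factor cleanly and are handled analogously. All these moment-dependent contributions will be collected into $\widetilde S_{T,\psi,\phi}$.

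The final step is the parameter balance: choosing $M=N^{1/8}$ makes both $(1+M^2)^2/N$ and $1/M^4$ of order $N^{-1/2}$, producing $|S_{T,\psi,\phi}|\leq C(T)/\sqrt N$ and $|\widetilde S_{T,\psi,\phi}|\leq \widetilde C\,\widehat m_{6,N}(T)/\sqrt N$ as claimed. The main obstacle I anticipate is the genuinely joint piece $\int_{\R^2}\widehat f_2^N(1+v_1^2)(1+v_2^2)\mathbf{1}_{|v_2|>M}\,\mathrm{d}m^{(2)}$: since factorization of $\widehat f_2^N$ is precisely what the corollary is establishing, we cannot invoke it to separate the $v_1$ and $v_2$ integrals, and must instead control the cross moment $\int\int \widehat f_2^N v_1^2 v_2^2\,\mathrm{d}m^{(2)}$ directly from the sixth-moment reservoir, choosing the Cauchy--Schwarz exponents so that the resulting power of $M$ matches what the balance with the chaos error demands.
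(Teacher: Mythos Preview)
Your approach is the paper's: truncate at height $M=N^\alpha$, apply Theorem~\ref{thm_quant_thermostat} to the bounded piece, control the remainder via the sixth moment from Lemma~\ref{momentconserv}, then balance. The only substantive divergence is at the joint tail you flag as the main obstacle. Rather than splitting $\psi\phi-\psi_M\phi_M$ and invoking Cauchy--Schwarz on cross moments, the paper works directly on the full tail set $\{|v_1|>N^\alpha\}\cup\{|v_2|>N^\alpha\}$, observes that $v_1^2+v_2^2\geq N^{2\alpha}$ there, inserts the harmless factor $(v_1^2+v_2^2)/N^{2\alpha}\geq 1$ into the integrand, and then reduces $(1+v_1^2)(1+v_2^2)(v_1^2+v_2^2)$ to a constant multiple of $(1+v_1^6)+(1+v_2^6)$ via the elementary inequalities $2ab\leq a^2+b^2$ and $a^2b+ab^2\leq|a|^3+|b|^3$. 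This delivers the joint tail bound $\widetilde C\,\widehat m_{6,N}(T)/N^{2\alpha}$ in one stroke, with no Cauchy--Schwarz bookkeeping, after which the paper takes $\alpha=1/4$. Note that this trick extracts only $1/M^2$ from the joint tail, not the $1/M^4$ you are aiming for; with a product test function of total order four and only sixth moments in reserve, two spare powers of $v/M$ is the natural ceiling, so your anticipated obstacle is real and the paper's device is the clean way \emph{around} it rather than through it.
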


\begin{proof}
Let $0<\alpha<1$, we have
\begin{align*}
&\int_{\R}\int_{\R} \widehat f_2^N(v_1,v_2,t)\psi(v_1)\phi(v_2)\mathrm{d}v_1 \mathrm{d}v_2=\\
&\hspace{0.2in}\int_{\R}\int_{\R} \widehat f_2^N(v_1,v_2,t)\psi(v_1)\phi(v_2)\mathbbm{1}_{\{|v_1|\leq N^\alpha\}}\mathbbm{1}_{\{|v_2|\leq N^\alpha\}}\mathrm{d}v_1 \mathrm{d}v_2\\
&\hspace{0.4in}+ \int_{\R}\int_{\R} \widehat f_2^N(v_1,v_2,t)\psi(v_1)\phi(v_2)(1-\mathbbm{1}_{\{|v_1|\leq N^\alpha\}}\mathbbm{1}_{\{|v_2|\leq N^\alpha\}})\mathrm{d}v_1 \mathrm{d}v_2 := I+\widetilde{S}_{T,\psi ,\phi}.
\end{align*}
Choosing smooth cutoff functions to approximate the characteristic functions in $I$ it follows by using Theorem \ref{thm_quant_thermostat}
\begin{align}
I=\int_{\R} \widehat f_1^N(v_1,t)\psi(v_1) \mathrm{d}v_1\int_{\R} \widehat f_1^N(v_2,t)\phi(v_2) \mathrm{d}v_2+ S_{T,\psi,\phi}
\end{align}
where
\begin{equation}
|S_{T,\psi,\phi}|\leq \frac{C(T)}{N^{1-2\alpha}}.
\end{equation}
In $\widetilde{S}_{T,\psi ,\phi}$, either $|v_1|$ or $|v_2|$ is larger
than $N^\alpha$, hence $|v_1|^2+|v_2|^2 \geq N^{2\alpha}$. Hence,
using the inequalities $2ab<a^2+b^2$ and\footnote{this is a consequence of the first inequality} $a^2 b + a b^2
\le |a|^3+|b|^3$, where $a,b\in\R$
\begin{align*}
|\widetilde{S}_{T,\psi,\phi}|&\leq \frac{\widetilde{C}}{N^{2\alpha}} \int_{\R}\int_{\R} \widehat f_2^N(v_1,v_2,t)(1+v_1^2)(1+v_2^2)|v_1|^2 \mathrm{d}v_1 \mathrm{d}v_2\\
&\leq \frac{\widetilde{C}}{N^{2\alpha}} \int_{\R}\int_{\R} \widehat f_2^N(v_1,v_2,t)\big((1+v_1^6)+(1+v_1^6)\big)\mathrm{d}v_1 \mathrm{d}v_2\\
&\leq \frac{\widetilde{C}}{N^{2\alpha}}\widehat{m}_{6,N}(T),
\end{align*}
where $\widetilde{C}$ is a generic constant.
The proof may then be concluded by choosing $\alpha=1/4$.
\end{proof}


\section{Pathwise comparison of the processes}
\noindent We now have two master equations, the master equation (\ref{Nparticlemaster}) and the quenched master equation (\ref{quenchmaster}) where the latter propagates chaos according to the Kac's definition. Following~\cite{bcelm} we now consider the two stochastic processes $$\textbf{V}(t)=~(v_1(t),\dots,v_N(t))$$
corresponding to the master equation (\ref{Nparticlemaster}) and
$$\widehat{\textbf{V}}(t)=(\widehat{v}_1(t),\dots,\widehat{v}_N(t))$$
corresponding to the quenched master equation (\ref{quenchmaster}). The aim of this section is to compare these two processes and show that when $N$ is large, with high probability the paths of the two processes are close to each other. The starting point is to find a formula for the difference between the paths of the stochastic processes. There are two sources of randomness in these processes, the first coming from initial data while the second is from the collision history, i.e., the collision times $t_k$ and the pair of velocities $(v_i(t_k),v_j(t_k))$ or  $(\widehat{v}_i(t_k),\widehat{v}_j(t_k))$ participating in this collision process and the random collision parameter $\theta_k$. Let $\textbf{V}_0$ be the vector of initial velocities and
$\omega$ the collision history. We assume that the two stochastic processes have the same initial velocities and collision history. For each process there is a unique sample path given $\textbf{V}_0$ and $\omega$.
Let $\textbf{V}(t,\textbf{V}_0,\omega)$ and $\widehat{\textbf{V}}(t,\textbf{V}_0,\omega)$ denote these sample paths.
As in~\cite{bcelm}, we define
\begin{equation}
\Psi_t(\textbf{V}) \hspace{0.5in}\text{ and }\hspace{0.5in} \widehat \Psi_{s,t}(\textbf{V})
\end{equation}
to be the flows generated by the autonomous dynamics
(\ref{Nparticled}) and non-autonomous dynamics (\ref{dynamic_quench}),
respectively. Given a collision history $\omega$, consider the time
interval $[s,t]$ where no collision occur at time $s$ and $t$, and
suppose that there are $n$ collisions in this time interval with
collision times $t_k$, i.e., $s<t_1<t_2<\cdots
<t_{n-1}<t_n<t$. Moreover, denote by $\{(t_k,(i,j),\theta_k)\}$ the
collision history in the time interval $[s,t]$ , where $t_k$ is the
time for the $k$-th collision, $(i,j)$ the indices of the colliding particles and $\theta_k$ is the random collision
parameter. Then the stochastic process corresponding to the master
equation (\ref{Nparticlemaster}), starting from $\textbf{V}_s$ at time
$s$, has the path
\begin{equation}\label{pathstochNp}
\textbf{V}(t,\textbf{V}_s,\omega):=M_{t,s}(\textbf{V}_s,\omega)=\Psi_{t-t_n}
\circ R_{{ij}}(\theta_n) \circ \cdots \circ R_{{ij}}(\theta_1)\circ
\Psi_{t_1-s}(\textbf{V}_s)
\end{equation}
while the stochastic process corresponding the quenched master
equation (\ref{quenchmaster}) has the path
\begin{equation}
\widehat{\textbf{V}}(t,\textbf{V}_s,\omega):=\widehat M_{t,s}(\textbf{V}_s,\omega)=\widehat \Psi_{t_n,t} \circ R_{{ij}}(\theta_n) \circ \cdots \circ R_{{ij}}(\theta_1)\circ \widehat \Psi_{s,t_1}(\textbf{V}_s).
\end{equation}
In what follows we shall use the following two norms: Given a vector $\textbf{V}=(v_1,\dots,v_N) \in\R^N$,
\begin{equation}
||\textbf{V}||=\sqrt{\sum_{i=1}^{N}v_i^2} \hspace{0.3in}\text{and} \hspace{0.3in}
||\textbf{V}||_N=\frac{||\textbf{V}||}{\sqrt{N}} = \sqrt{\frac{1}{N}\sum_{i=1}^{N}v_i^2}.
\end{equation}
The goal of this section is to estimate
\begin{equation}\label{diffpath}
\prob \{||\textbf{V}(t,\textbf{V}_0,\omega)- \widehat{\textbf{V}}(t,\textbf{V}_0,\omega)||_N \geq \epsilon\},
\end{equation}
where $\epsilon$ is given positive number. In order to do that we
first need to find an expression for
$\textbf{V}(t,\textbf{V}_0,\omega)-
\widehat{\textbf{V}}(t,\textbf{V}_0,\omega)$. For completeness we
carefully explain the steps. The first step is to find an expression
for the difference of the paths of the two processes between
collisions, i.e., the difference between the flows
$\Psi_t(\textbf{V})$ and $\widehat \Psi_{0,t}(\textbf{V})$. We recall
the following useful formula for the difference between the product of
a sequence of real numbers. If $(a_1,a_2,\dots,a_n)\in \R^N$ and
$(b_1,b_2,\dots,b_n)\in \R^N$ then
\begin{equation}\label{identity_prod}
\prod_{i=1}^{N} a_i - \prod_{j=1}^{N}b_j=\sum_{j}\prod_{i<j}a_i(a_j-b_j)\prod_{i>j}b_i.
\end{equation}

\begin{lemma}
Between time $s$ and time $t$, the difference between
$\Psi_t(\textbf{V})$ and  $\widehat \Psi_{s,t}(\textbf{V})$ is given
by
\begin{equation}\label{diff_flow}
\Psi_{t-s}(\textbf{V})-\widehat \Psi_{s,t}(\textbf{V})=
\int_s^t D\Psi_{t-\tau}(\widehat
\Psi_{s,\tau}(\textbf{V}))(\textbf{F}(\widehat
\Psi_{s,\tau}(\textbf{V}))-\widehat{\textbf{F}}(\widehat
\Psi_{s,\tau}(\textbf{V})))\mathrm{d}\tau
\end{equation}
where $D\Psi_t(\textbf{V})$  is the differential of the flow starting
at \textbf{V} at time $s$.
\end{lemma}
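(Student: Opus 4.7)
The plan is a standard interpolation-and-differentiate argument. I would introduce the auxiliary curve
\[
G(\tau) := \Psi_{t-\tau}\bigl(\widehat{\Psi}_{s,\tau}(\textbf{V})\bigr), \qquad \tau \in [s,t],
\]
whose endpoints are precisely the two flows being compared: $G(s) = \Psi_{t-s}(\textbf{V})$ and $G(t) = \widehat{\Psi}_{s,t}(\textbf{V})$. The desired identity is then the fundamental theorem of calculus $G(s) - G(t) = -\int_s^t G'(\tau)\,\mathrm{d}\tau$ once $G'$ has been computed.

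Differentiating by the chain rule, and using the defining ODEs $\partial_\tau \widehat{\Psi}_{s,\tau}(\textbf{V}) = \widehat{\textbf{F}}(\widehat{\Psi}_{s,\tau}(\textbf{V}))$ and $\partial_r \Psi_r(\textbf{W}) = \textbf{F}(\Psi_r(\textbf{W}))$ (the latter contributing a minus sign because of the argument $t-\tau$), I obtain
\[
G'(\tau) = -\textbf{F}\bigl(\Psi_{t-\tau}(\widehat{\Psi}_{s,\tau}(\textbf{V}))\bigr) + D\Psi_{t-\tau}\bigl(\widehat{\Psi}_{s,\tau}(\textbf{V})\bigr)\,\widehat{\textbf{F}}\bigl(\widehat{\Psi}_{s,\tau}(\textbf{V})\bigr).
\]

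The key step is to rewrite the first term via the autonomous-flow commutation identity
\[
\textbf{F}(\Psi_r(\textbf{W})) = D\Psi_r(\textbf{W})\,\textbf{F}(\textbf{W}),
\]
which follows by differentiating the semigroup property $\Psi_{r+\epsilon}(\textbf{W}) = \Psi_r(\Psi_\epsilon(\textbf{W}))$ in $\epsilon$ at $\epsilon=0$ (alternatively, both sides satisfy the same linear ODE $\dot Y = D\textbf{F}(\Psi_r(\textbf{W}))\,Y$ with the same initial value $\textbf{F}(\textbf{W})$). Applying this with $\textbf{W} = \widehat{\Psi}_{s,\tau}(\textbf{V})$ and $r = t-\tau$ allows $D\Psi_{t-\tau}$ to be factored out of both summands, yielding
\[
G'(\tau) = -D\Psi_{t-\tau}\bigl(\widehat{\Psi}_{s,\tau}(\textbf{V})\bigr)\Bigl[\textbf{F}\bigl(\widehat{\Psi}_{s,\tau}(\textbf{V})\bigr) - \widehat{\textbf{F}}\bigl(\widehat{\Psi}_{s,\tau}(\textbf{V})\bigr)\Bigr].
\]
Integrating from $s$ to $t$ and using $G(s) - G(t) = -\int_s^t G'(\tau)\,\mathrm{d}\tau$ gives exactly formula~(\ref{diff_flow}). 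I do not foresee any genuine obstacle: the only ingredient that is not entirely routine is the flow-commutation identity, which is however classical and requires only smoothness of $\textbf{F}$, guaranteed on the energy sphere by~(\ref{Nparticle_Ffield}) since $U(\textbf{V})$ is bounded below there.
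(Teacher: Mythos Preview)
Your argument is correct. The interpolation curve $G(\tau)=\Psi_{t-\tau}(\widehat\Psi_{s,\tau}(\textbf{V}))$ has the right endpoints, the chain-rule computation of $G'(\tau)$ is accurate, and the commutation identity $\textbf{F}(\Psi_r(\textbf{W}))=D\Psi_r(\textbf{W})\,\textbf{F}(\textbf{W})$ (valid for any autonomous $C^1$ flow, which applies here since $\textbf{F}$ is smooth on the energy sphere where $U(\textbf{V})>0$) is exactly what is needed to combine the two terms.

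The paper reaches the same formula by a discrete route: it partitions $[s,t]$, writes each flow as a composition over the subintervals, applies the telescoping product identity~(\ref{identity_prod}), and Taylor-expands each summand to first order in the mesh size before passing to the limit $n\to\infty$. The paper's use of the semigroup property $\Psi_{t-t_{n+2-j}}=\Psi_{t-t_{n+1-j}}\circ\Psi_{-\Delta t}$ followed by a spatial Taylor expansion is precisely a discretized version of your commutation identity. Your approach is the cleaner continuous formulation of the same interpolation idea: it replaces the Riemann-sum limit by a single application of the fundamental theorem of calculus, and isolates the one non-obvious ingredient (the flow-commutation identity) explicitly rather than leaving it implicit in a Taylor expansion. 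The paper's version has the minor advantage of making the argument look closer to the product formula~(\ref{identity_prod}) that is reused for the collision-inclusive formula~(\ref{pathdiff}), but analytically yours is more direct.
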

\begin{proof}
The flows can be written as
\begin{equation}
\Psi_{t-s}(\textbf{V})=\Psi_{t-t_n}\circ \Psi_{t_n-t_{n-1}} \circ \cdots
\circ \Psi_{t_1-s}(\textbf{V})
\end{equation}
and
\begin{equation}
\widehat \Psi_{s,t}(\textbf{V})=\widehat \Psi_{t_n,t}\circ \widehat
\Psi_{t_{n-1},t_n}\circ  \cdots \circ \widehat
\Psi_{s,t_1}(\textbf{V}),
\end{equation}
where $s=t_0 < t_1 < t_2 < \cdots < t_n <t_{n+1}=t$. In the following
expressions the symbol $\prod$ is used to denote composition.
Using the identity (\ref{identity_prod}), we have
\begin{eqnarray*}
\Psi_t(\textbf{V})-\widehat \Psi_{0,t}(\textbf{V})&=&
\sum_j \left(  \prod_{i<j}\Psi_{t_{n+2-i}-t_{n+1-i}}\right)
\circ\left(\Psi_{t_{n+2-j}-t_{n+1-j}}-\widehat
  \Psi_{t_{n+1-j},t_{n+2-j}} \right)\\
&&\hspace{1.6 in} \circ \left( \prod_{i>j}\widehat
  \Psi_{t_{n+1-i},t_{n+2-i}}\right)\\
&=&\sum_j \Psi_{t-t_{n+2-j}} \circ
\left(\Psi_{t_{n+2-j}-t_{n+1-j}}-\widehat \Psi_{t_{n+1-j},t_{n+2-j}}
\right) \circ \widehat \Psi_{s,t_{n+1-j}}\\
&=&\sum_j \Psi_{t-t_{n+2-j}} \circ
\left(\Psi_{t_{n+2-j}-t_{n+1-j}}-Id + Id - \widehat \Psi_{t_{n+1-j},t_{n+2-j}}
\right) \circ \widehat \Psi_{s,t_{n+1-j}}\\
&=&\sum_j \Psi_{t-t_{n+1-j}} \circ \widehat
\Psi_{s,t_{n+1-j}}-\Psi_{t-t_{n+2-j}} \circ \widehat
\Psi_{s,t_{n+1-j}}\\
&&+\sum_j \Psi_{t-t_{n+2-j}} \circ \left(\widehat
\Psi_{s,t_{n+1-j}}-\widehat \Psi_{s,t_{n+2-j}}\right)\\
&=:&I_1^n+I_2^n.
\end{eqnarray*}
where $Id$ denotes the identity operator. Let $\triangle
t=t_{n+2-j}-t_{n+1-j}$.
By the flow property and a first order Taylor expansion
($\textbf{F}(\cdot)$ is differentiable), we have
\begin{align*}
\nonumber \Psi_{t-t_{n+2-j}}(\widehat
\Psi_{s,t_{n+1-j}})&=\Psi_{t-t_{n+1-j}}(\Psi_{-\triangle t}(\widehat
\Psi_{s,t_{n+1-j}}))\\
&= \Psi_{t-t_{n+1-j}}(\widehat \Psi_{s,t_{n+1-j}}- F(\widehat
\Psi_{s,t_{n+1-j}})\triangle t) + \Ordo((\triangle t)^2).
\end{align*}
Making a first order Taylor expansion in the last equality around
$\widehat \Psi_{s,t_{n+1-j}}$ yields
\begin{align}\label{flowI1}
\nonumber &\Psi_{t-t_{n+2-j}}(\widehat \Psi_{s,t_{n+1-j}})\\
&\hspace{0.3in}=\Psi_{t-t_{n+1-j}}(\widehat \Psi_{s,t_{n+1-j}})- D
\Psi_{t-t_{n+1-j}}(\widehat \Psi_{s,t_{n+1-j}}) F(\widehat
\Psi_{s,t_{n+1-j}})\triangle t + \Ordo((\triangle t)^2).
\end{align}
Plugging (\ref{flowI1}) into $I_1^n$, we finally have
\begin{equation*}
\lim_{n\rightarrow \infty} I_1^n=\int_s^t D\Psi_{t-\tau}(\widehat
\Psi_{s,\tau}(\textbf{V}))(\textbf{F}(\widehat
\Psi_{s,\tau}(\textbf{V})))\mathrm{d}\tau.
\end{equation*}
Next, by a first order Taylor expansion
\begin{align*}
\widehat \Psi_{s,t_{n+2-j}}-\widehat
\Psi_{s,t_{n+1-j}}=\widehat{\textbf{F}}(\widehat \Psi_{s,t_{n+1-j}})
\triangle t + \Ordo((\triangle t)^2),
\end{align*}
which together with another first order Taylor expansion leads to
\begin{align}\label{flowI2}
 \nonumber &\Psi_{t-t_{n+2-j}}(\widehat \Psi_{s,t_{n+2-j}})\\
 &=\Psi_{t-t_{n+1-j}}(\widehat \Psi_{s,t_{n+1-j}})+D
 \Psi_{t-t_{n+1-j}}(\widehat
 \Psi_{s,t_{n+1-j}})\widehat{\textbf{F}}(\widehat \Psi_{s,t_{n+1-j}})
 \triangle t + \Ordo((\triangle t)^2).
\end{align}
Plugging (\ref{flowI2}) into $I_2^n$, we obtain
\begin{equation*}
\lim_{n\rightarrow \infty} I_2^n=-\int_s^t D\Psi_{t-\tau}(\widehat \Psi_{s,\tau}(\textbf{V}))(\widehat{\textbf{F}}(\widehat \Psi_{s,\tau}(\textbf{V})))\mathrm{d}\tau.
\end{equation*}
This completes the proof.
\end{proof}

Since we assumed that our  two stochastic processes $\textbf{V}(t,\textbf{V}_0,\omega)$ and $\widehat{\textbf{V}}(t,\textbf{V}_0,\omega)$ have the same collision history and $R_{ij}(\theta_k)$ is a norm preserving linear operator , i.e.,
$$||R_{ij}(\theta_k)\textbf{V}||_N=||\textbf{V}||_N,$$
we can  extend (\ref{diff_flow}) to include the collisions and hence obtain a formula for the difference of the path of the two processes:
\begin{eqnarray}\label{pathdiff}
\nonumber \textbf{V}(t,\textbf{V}_0,\omega)- \widehat{\textbf{V}}(t,\textbf{V}_0,\omega)&=&
\int_0^t D M_{t,s}(\widehat{\textbf{V}}(s,\textbf{V}_0,\omega))\\
&& \times[\textbf{F}(\widehat{\textbf{V}}(s,\textbf{V}_0,\omega))-\widehat{\textbf{F}}(\widehat{\textbf{V}}(s,\textbf{V}_0,\omega))]\mathrm{d}s.
\end{eqnarray}
Having this formula, we see that in order to estimate
(\ref{diffpath}), the quantities
\begin{equation}\label{flow_estimate}
|||D M_{t,s}(\widehat{\textbf{V}}(s,\textbf{V}_0,\omega))|||
:=\sup_{||\textbf{V}||_N=1}||D
M_{t,s}(\widehat{\textbf{V}}(s,\textbf{V}_0,\omega))\textbf{V}||_N
\end{equation}
and
\begin{equation}\label{force_estimate}
\int_0^t
||\textbf{F}(\widehat{\textbf{V}}(s,\textbf{V}_0,\omega))
-\widehat{\textbf{F}}
(\widehat{\textbf{V}}(s,\textbf{V}_0,\omega))||_N\mathrm{d}s
\end{equation}
need to be estimated. We start with (\ref{force_estimate}). From
Theorem \ref{thm_quant_thermostat} we know that the quenched master
equation propagates chaos and we also know the rate of
convergence. This implies that for large $N$,
$J(\widehat{\textbf{V}}(t,\textbf{V}_0,\omega))$ should be close to
$\widehat{J}_N(t)$. More precisely, we have the following proposition
which corresponds to proposition $3.3$ in~\cite{bcelm} but the
difference appears in that their quenched master equation propagates
independence, while here we only have propagation of chaos:

\begin{proposition}\label{propbddforce}
Let $\widehat W_N(\textbf{V},0)$ be a probability density on $\R^N$
that satisfies the assumptions in Theorem
\ref{thm_quant_thermostat}. Suppose also that
$$ \widehat U_N>0 \hspace{0.3in}\mbox{and}\hspace{0.3in} \widehat
m_{6,N}(0)<\infty .$$
Then, for $t<T$
\begin{align*}
\E\bigg(\int_0^t||\textbf{F}(\widehat{\textbf{V}}
&(s,\textbf{V}_0,\omega))-\widehat{\textbf{F}}
(\widehat{\textbf{V}}(s,\textbf{V}_0,\omega))
||_N\mathrm{d}s \bigg)
\leq
 \frac{E T}{\widehat U_N N^{1/4}} \left(\sqrt{A_1(T)}+ \sqrt{A_2(T)} \right),
\end{align*}
where
\begin{align*}
A_1(T)&=\widehat U_N (\widehat U_N+C(T)+\widetilde{C}\widehat{m}_{6,N}(T)),\\
A_2(T)&= \widehat{m}_{6,N}(T)^{2/3}+\widehat U_N^2+C(T)+\widetilde{C}\widehat{m}_{6,N}(T).
\end{align*}
Here $\widetilde{C}$ is positive constant and  $C(T)$ is given by Theorem \ref{thm_quant_thermostat}.
\end{proposition}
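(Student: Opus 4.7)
The plan is to bound $\|\textbf{F}(\widehat{\textbf{V}}(s))-\widehat{\textbf{F}}(s)\|_N$ in expectation at each $s$ and then integrate over $s\in[0,t]$; the expectation is evaluated against the density $\widehat W_N(\cdot,s)$ of the quenched process. Cancelling the common term $E\textbf{1}$ gives
\begin{equation*}
\textbf{F}(\widehat{\textbf{V}}(s))-\widehat{\textbf{F}}(s)=E\left(\frac{\widehat J_N(s)}{\widehat U_N}-\frac{J(\widehat{\textbf{V}}(s))}{U(\widehat{\textbf{V}}(s))}\right)\widehat{\textbf{V}}(s),
\end{equation*}
and because $\|\widehat{\textbf{V}}(s)\|_N=\sqrt{U(\widehat{\textbf{V}}(s))}$,
\begin{equation*}
\|\textbf{F}(\widehat{\textbf{V}}(s))-\widehat{\textbf{F}}(s)\|_N=E\,\left|\frac{J}{U}-\frac{\widehat J_N}{\widehat U_N}\right|\sqrt U,
\end{equation*}
where from now on $J=J(\widehat{\textbf{V}}(s))$ and $U=U(\widehat{\textbf{V}}(s))$.

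The central algebraic trick is the decomposition
\begin{equation*}
\frac{J}{U}-\frac{\widehat J_N}{\widehat U_N}=\frac{J-\widehat J_N}{\widehat U_N}+\frac{J(\widehat U_N-U)}{U\widehat U_N}.
\end{equation*}
After multiplying by $\sqrt U$, the singular factor $1/\sqrt U$ only appears multiplied by $J$, so the elementary Cauchy--Schwarz inequality $J^2\le U$ (applied to $(v_i)$ and $(1,\dots,1)$) eliminates it and gives
\begin{equation*}
\|\textbf{F}(\widehat{\textbf{V}}(s))-\widehat{\textbf{F}}(s)\|_N\le\frac{E}{\widehat U_N}\bigl(|J-\widehat J_N|\sqrt U+|U-\widehat U_N|\bigr).
\end{equation*}
A further Cauchy--Schwarz in the quenched expectation, together with $\E[U]=\widehat U_N$ and $\E[J]=\widehat J_N(s)$, yields
\begin{equation*}
\E\bigl[\|\textbf{F}-\widehat{\textbf{F}}\|_N\bigr]\le\frac{E}{\widehat U_N}\Bigl(\sqrt{\widehat U_N\,\mathrm{Var}_s(J)}+\sqrt{\mathrm{Var}_s(U)}\Bigr).
\end{equation*}

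Both variances are then estimated using the quantitative chaos of Corollary~\ref{corthmquench} combined with the moment control of Lemma~\ref{momentconserv}. By symmetry one has
\begin{equation*}
\mathrm{Var}_s(J)=\frac{\E[v_1^2]-\widehat J_N^2}{N}+\frac{N-1}{N}\bigl(\E[v_1v_2]-\widehat J_N^2\bigr),
\end{equation*}
and the corollary with $\psi(v)=\phi(v)=v$ bounds the second bracket by $(C(T)+\widetilde C\,\widehat m_{6,N}(T))/\sqrt N$, so that $\widehat U_N\,\mathrm{Var}_s(J)\le A_1(T)/\sqrt N$. An analogous expansion for $\mathrm{Var}_s(U)$, using $\psi(v)=\phi(v)=v^2$ together with the Jensen bound $\E[v_1^4]\le\widehat m_{6,N}(T)^{2/3}$, yields $\mathrm{Var}_s(U)\le A_2(T)/\sqrt N$ (the $\widehat U_N^2$ summand in $A_2$ absorbs the coarser part of the bound on the fourth-order terms). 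Taking square roots converts $1/\sqrt N$ into $1/N^{1/4}$, and integrating over $s\in[0,t]\subset[0,T]$ produces the factor $T$ in the statement.

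The main obstacle is the pointwise bound: the quantity $1/\sqrt U$ is not a priori integrable against $\widehat W_N$, so the decomposition must be arranged so that $1/\sqrt U$ only appears multiplied by $J$, where it can be cancelled by $J^2\le U$. A secondary difficulty, absent from~\cite{bcelm}, is that the quenched process here only propagates chaos at rate $1/\sqrt N$ rather than propagating exact independence; consequently the variances of $J$ and $U$ decay only like $1/\sqrt N$, which is what forces the final rate $1/N^{1/4}$ instead of the more classical $1/\sqrt N$.
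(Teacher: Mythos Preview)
Your proof is correct and follows essentially the same route as the paper's own argument: the same algebraic decomposition of $J/U-\widehat J_N/\widehat U_N$, the same use of $|J|\le\sqrt U$ to remove the singular factor, the same Cauchy--Schwarz step in expectation, and the same appeal to Corollary~\ref{corthmquench} for the two variance bounds. The only cosmetic difference is that the paper applies the corollary with the centered functions $\psi(v)=\phi(v)=v-\widehat J_N(s)$ (respectively $v^2-\widehat U_N$) rather than $v$ and $v^2$, but this amounts to the same computation.
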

\begin{proof}
The difference componentwise of the forces $\textbf{F}(\widehat{\textbf{V}}(s,\textbf{V}_0,\omega))$, $\widehat{\textbf{F}}(\widehat{\textbf{V}}(s,\textbf{V}_0,\omega))$ at time $s$ given by
\begin{equation*}
(\textbf{F}(\widehat{\textbf{V}}(s,\textbf{V}_0,\omega))-\widehat{\textbf{F}}(\widehat{\textbf{V}}(s,\textbf{V}_0,\omega)))_i
=E \left(\frac{J(\widehat{\textbf{V}})}{U(\widehat{\textbf{V}})}-\frac{\widehat{J}_N(s)}{\widehat U_N}\right)\hat{v}_i(s,\textbf{V}_0,\omega).
\end{equation*}
Moreover, we can write
\begin{equation*}
\left(\frac{J(\widehat{\textbf{V}})}{U(\widehat{\textbf{V}})} -\frac{\widehat{J}_N(s)}{\widehat U_N}\right)=\frac{J(\widehat{\textbf{V}})-\widehat{J}_N(s)}{\widehat U_N}
+\left(\frac{1}{U(\widehat{\textbf{V}})}-\frac{1}{\widehat U_N} \right)J(\widehat{\textbf{V}}),
\end{equation*}
and
\begin{equation*}
\left(\frac{1}{U(\widehat{\textbf{V}})}-\frac{1}{\widehat U_N} \right)J(\widehat{\textbf{V}})=(\widehat U_N-U(\widehat{\textbf{V}}))\frac{J(\widehat{\textbf{V}})}{U(\widehat{\textbf{V}})\widehat U_N}.
\end{equation*}
Using the inequality $|J(\widehat{\textbf{V}})|\leq \sqrt{U(\widehat{\textbf{V}})}$ together with the triangle inequality we arrive at
\begin{align*}
||\textbf{F}(\widehat{\textbf{V}}&(s,\textbf{V}_0,\omega))-\widehat{\textbf{F}}(\widehat{\textbf{V}}(s,\textbf{V}_0,\omega))||_N  \leq  \\
&\frac{E}{\widehat U_N} \bigg |J(\widehat{\textbf{V}}(s,\textbf{V}_0,\omega))-\widehat{J}_N(s)\bigg | \sqrt{U(\widehat{\textbf{V}}(s,\textbf{V}_0,\omega))}+
\frac{E}{\widehat U_N}\bigg |U(\widehat{\textbf{V}}(s,\textbf{V}_0,\omega))-\widehat U_N \bigg |.
\end{align*}
Integrating both sides of the last inequality over the interval $[0,t]$, taking expectation with respect to $\widehat W_N(\textbf{V},t)$ and using the Cauchy Schwartz inequality leads to
\begin{align}\label{ExpFhat}
\nonumber \E \bigg(\int_0^t & ||\textbf{F}(\widehat{\textbf{V}}(s,\textbf{V}_0,\omega))-\widehat{\textbf{F}}(\widehat{\textbf{V}}(s,\textbf{V}_0,\omega))||_N \mathrm{d}s \bigg) \leq \\
\nonumber &\frac{E}{\widehat U_N}\int_0^t \big(\E \big |J(\widehat{\textbf{V}}(s,\textbf{V}_0,\omega))-\widehat{J}_N(s) \big|^2 \big)^{1/2}\big(\E(U(\widehat{\textbf{V}}(s,\textbf{V}_0,\omega)))\big)^{1/2}\mathrm{d}s\\
&\hspace{0.2in}+\frac{E}{\widehat U_N}\int_0^t \big(\E \big|U(\widehat{\textbf{V}}(s,\textbf{V}_0,\omega))-\widehat U_N \big|^2\big)^{1/2}\mathrm{d}s.
\end{align}
First, by definition it follows
\begin{equation*}
\E(U(\widehat{\textbf{V}}(s,\textbf{V}_0,\omega)))=\widehat U_N.
\end{equation*}
To estimate $\E \big |J(\widehat{\textbf{V}}(s,\textbf{V}_0,\omega))-\widehat{J}_N(s)\big |^2$, we first note that
\begin{equation*}
J(\widehat{\textbf{V}}(s,\textbf{V}_0,\omega))-\widehat{J}_N(s)= \frac{1}{N}\sum_{j=1}^{N}(\widehat{v}_j(s)-\widehat{J}_N(s))
\end{equation*}
where $\widehat{v}_j(s)=\widehat{v}_j(s,\textbf{V}_0,\omega)$. From this it follows
\begin{align*}
&(J(\widehat{\textbf{V}}(s,\textbf{V}_0,\omega))-\widehat{J}_N(s) )^2\\
&\hspace{0.2in}= \frac{1}{N^2}
\left[\sum_{j=1}^{N}(\hat{v}_j(s)-\widehat{J}_N(s))^2+2\sum_{i=1}^{N}\sum_{j>i}^{N}(\hat{v}_i(s)-\widehat{J}_N(s))(\hat{v}_j(s)-\widehat{J}_N(s))\right].
\end{align*}
Using Corollary \ref{corthmquench} with $\psi(v)=(v-\widehat{J}_N(s))$ and $\phi(w)=(w-\widehat{J}_N(s))$, we get
\begin{align*}
\E |J(\widehat{\textbf{V}}&(s,\textbf{V}_0,\omega))-\widehat{J}_N(s)|^2 = \frac{1}{N}\int_\R (v-\widehat{J}_N(s))^2 \widehat{f}_1^N(v,s)\mathrm{d}v\\
&+\frac{N-1}{N}\int_\R\int_\R (v-\widehat{J}_N(s))(w-\widehat{J}_N(s))\widehat{f}_1^N(v,s)\widehat{f}_1^N(w,s)\mathrm{d}v \mathrm{d}w
+S_{T,\psi,\phi}+\widetilde S_{T,\psi,\phi},
\end{align*}
where
$$|S_{T,\psi,\phi}|\leq \frac{C(T)}{\sqrt{N}} \hspace{0.2in}\text{and} \hspace{0.2in} |\widetilde{S}_{T,\psi,\phi}| \leq \frac{\widetilde{C}}{\sqrt{N}}\widehat{m}_{6,N}(T),$$
with $\widetilde{C}$ being a positive constant and $C(T)$ is given by Theorem \ref{thm_quant_thermostat}.
Estimating the first of the two last integrals yields
\begin{align*}
\int_\R (v-\widehat{J}_N(s))^2 \widehat{f}_1^N(v,s) \mathrm{d}v&=\int_\R v^2 \widehat{f}_1^N(v,s)\mathrm{d}v-2\widehat{J}_N(s)\int_\R v\widehat{f}_1^N( v,s)\mathrm{d}v+\widehat{J}_N(s)^2\\
&=\widehat U_N -\widehat{J}_N(s)^2 \leq \widehat U_N,
\end{align*}
and by symmetry
\begin{align*}
\int_\R\int_\R (v-\widehat{J}_N(s))(w-\widehat{J}_N(s))\widehat{f}_1^N(v,s)\widehat{f}_1^N(w,s)\mathrm{d}v \mathrm{d}w=0.
\end{align*}
Combining these inequalities, we have
\begin{equation}
\E \big |J(\widehat{\textbf{V}}(s,\textbf{V}_0,\omega))-\widehat{J}_N(s)\big |^2 \widehat U_N \leq
\frac{\widehat U_N (\widehat U_N+C(T)+\widetilde{C}\widehat{m}_{6,N}(T))}{\sqrt{N}}:=
\frac{A_1(T)}{\sqrt{N}}.
\end{equation}
A similar computation like the one we preformed to estimate $\E |J(\widehat{\textbf{V}}(s,\textbf{V}_0,\omega))-\widehat{J}_N(s)|^2$ also yields
\begin{equation}
\E \big |U(\widehat{\textbf{V}}(s,\textbf{V}_0,\omega))-\widehat U_N \big |^2
\leq \frac{\widehat{m}_{6,N}(T)^{2/3}+\widehat U_N^2+C(T)+\widetilde{C}\widehat{m}_{6,N}(T)}{\sqrt{N}}:=\frac{A_2(T)}{\sqrt{N}}
\end{equation}
Collecting all the inequalities above and plugging them into (\ref{ExpFhat}), we finally get
\begin{align*}
\E \bigg(\int_0^t & ||\textbf{F}(\widehat{\textbf{V}}(s,\textbf{V}_0,\omega))-\widehat{\textbf{F}}(\widehat{\textbf{V}}(s,\textbf{V}_0,\omega))||_N \mathrm{d}s \bigg)\\
& \leq \frac{E T}{\widehat U_N N^{1/4}}\left(\sqrt{A_1(T)}+ \sqrt{A_2(T)} \right)
\end{align*}
\end{proof}
Following the lines in~\cite{bcelm} the next step is to estimate (\ref{flow_estimate}).
\begin{proposition}\label{proppathbdd}
\begin{equation}
|||D M_{t,s}(\widehat{\textbf{V}} (s,\textbf{V}_0,\omega))|||\leq e^{t\lambda(\widehat{\textbf{V}}(s,\textbf{V}_0,\omega))},
\end{equation}
where
\begin{equation}
\lambda(\textbf{V})=\frac{4}{\sqrt{U(\textbf{V})}}.
\end{equation}
\end{proposition}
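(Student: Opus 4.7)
The plan is to exploit the compositional structure of $M_{t,s}$ together with the variational equation for the autonomous flow $\Psi$. Between any two successive collisions, $M_{t,s}$ is an alternating composition of smooth flow segments $\Psi_{\tau_k}$ and scattering maps $R_{i_k j_k}(\theta_k)$. Each $R_{ij}(\theta)$ is an orthogonal linear map on $\R^N$ (it rotates the $(v_i,v_j)$--plane by $\theta$ and fixes the remaining coordinates), so $D R_{ij}(\theta) = R_{ij}(\theta)$ is an isometry for $\|\cdot\|_N$. By the chain rule, $|||DM_{t,s}(\widehat{\textbf{V}}(s))|||$ factors into the product of the flow contributions $|||D\Psi_{\tau_k}(\widehat{\textbf{V}}_k)|||$, where $\widehat{\textbf{V}}_k$ is the state entering the $k$--th smooth segment. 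Hence it suffices to bound $|||D\Psi_\tau(\textbf{V})|||$ for arbitrary $\textbf{V}$ and $\tau>0$.

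To estimate $|||D\Psi_\tau(\textbf{V})|||$ I would set $Y(\tau) = D\Psi_\tau(\textbf{V})\, Y_0$ and use the variational equation $\dot Y = D\textbf{F}(\Psi_\tau \textbf{V})\, Y$. Differentiating $\textbf{F}_i = E - E (J/U) v_i$ directly,
\[
(D\textbf{F})_{ij} = -\frac{E v_i}{NU} + \frac{2EJ\, v_i v_j}{NU^2} - \frac{EJ}{U}\delta_{ij},
\]
and pairing against $Y$ in $\langle\cdot,\cdot\rangle_N$ gives
\[
\langle D\textbf{F}\, Y, Y\rangle_N = -\frac{E}{U}\langle\textbf{V},Y\rangle_N\langle\mathbf{1},Y\rangle_N + \frac{2EJ}{U^2}\langle\textbf{V},Y\rangle_N^2 - \frac{EJ}{U}\|Y\|_N^2 .
\]
Using Cauchy--Schwarz in the form $|\langle\textbf{V},Y\rangle_N|\le \sqrt{U}\,\|Y\|_N$, $|\langle\mathbf{1},Y\rangle_N|\le \|Y\|_N$, together with the standard bound $|J|\le \sqrt{U}$, each of the three terms is dominated by $(E/\sqrt{U})\|Y\|_N^2$, for a combined estimate $|\langle D\textbf{F}\,Y,Y\rangle_N|\le (4E/\sqrt{U})\|Y\|_N^2$. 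Consequently $\tfrac{d}{d\tau}\|Y\|_N^2 \le (8E/\sqrt{U})\|Y\|_N^2$, and Gronwall yields $|||D\Psi_\tau(\textbf{V})||| \le \exp\!\bigl(4E\tau/\sqrt{U(\textbf{V})}\bigr)$.

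To conclude I would observe that both $\Psi$ and the collision maps $R_{ij}(\theta_k)$ preserve the kinetic energy: $\textbf{V}\cdot\textbf{F}(\textbf{V}) = E(NJ - (J/U)NU) = 0$ shows that $\textbf{F}$ is tangent to the level sets of $U$, and each collision preserves $v_i^2 + v_j^2$. Hence $U(\widehat{\textbf{V}}_k) = U(\widehat{\textbf{V}}(s))$ along the entire path in $[s,t]$, the exponential bounds on successive flow segments multiply, the durations $\tau_k$ telescope to $t-s$, and we arrive at the claimed estimate with $\lambda(\widehat{\textbf{V}}(s)) = 4E/\sqrt{U(\widehat{\textbf{V}}(s))}$ (matching the stated form, the factor $E$ being absorbed into $\lambda$ or taken as $E=1$ by convention).

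The main obstacle is the logarithmic--norm computation: one must be careful with the $N^{-1}$ factors when converting between $\|\cdot\|$ and $\|\cdot\|_N$, and recognise that the rank--one gradient of the non-local coefficient $J/U$ produces three separate contributions which each collapse, via the natural Cauchy--Schwarz bounds and $|J|\le\sqrt{U}$, to exactly the same size $E/\sqrt{U}\,\|Y\|_N^2$. Once this bookkeeping is in order, the remainder is a routine application of Gronwall and the chain rule.
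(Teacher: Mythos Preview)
Your proof is correct and follows essentially the same route as the paper: factor $DM_{t,s}$ via the chain rule, use that each $R_{ij}(\theta)$ is an isometry, compute $D\textbf{F}$ explicitly, bound the quadratic form $\langle D\textbf{F}\,Y,Y\rangle_N$ term by term via Cauchy--Schwarz and $|J|\le\sqrt{U}$ to get $4E/\sqrt{U}$, and apply Gronwall. Your final step is actually cleaner than the paper's: you explicitly observe that both $\Psi$ and the collisions preserve $U$, so the energy is constant along the entire $M_{t,s}$--trajectory starting from $\widehat{\textbf{V}}(s)$, and the exponential bounds on successive segments telescope to $\exp\bigl(4E(t-s)/\sqrt{U(\widehat{\textbf{V}}(s))}\bigr)$; the paper instead closes with a supremum over the quenched trajectory, which is what is actually used downstream (Corollary~\ref{bddprobenergy}) but does not literally match the proposition as stated. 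Your remark about the missing factor $E$ in the definition of $\lambda$ is also correct --- compare with the formula for $\lambda$ in Corollary~\ref{bddprobenergy}.
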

\begin{proof}
Let $s<t_1<t$ and $\textbf{X}\in \R^N$. Consider the expression
$$\Psi_{t-t_1}\circ R_{ij}(\theta_1) \circ \Psi_{t_1-s}(\widehat{\textbf{V}}_s)$$
which is a part of (\ref{pathstochNp}). We have
\begin{align*}
&||D (\Psi_{t-t_1}\circ R_{ij}(\theta_1) \circ \Psi_{t_1-s})(\widehat{\textbf{V}}_s) \textbf{X}||
\leq |||D (\Psi_{t-t_1}\circ R_{ij}(\theta_1) \circ \Psi_{t_1-s})(\widehat{\textbf{V}}_s)||| \hspace{0.015in}||\textbf{X}|| \\
&\leq |||(D \Psi_{t-t_1})(R_{ij}(\theta_1) \circ \Psi_{t_1-s}(\widehat{\textbf{V}}_s)) |||
\hspace{0.015in} |||(D R_{ij}(\theta_1))(\Psi_{t_1-s}(\widehat{\textbf{V}}_s))||| \hspace{0.015in} |||(D \Psi_{t_1-s})(\widehat{\textbf{V}}_s)||| \hspace{0.015in}||\textbf{X}||.
\end{align*}
The fact that $\|\Psi_t(\textbf{V})\| =\|\textbf{V}\|$ and that $DR_{ij}(\theta_1)$ is norm preserving implies that
%
%
%
\begin{equation*}
|||(D R_{ij}(\theta_1))(\Psi_{t_1-s}(\widehat{\textbf{V}}_s))\textbf{X}|||\leq ||\textbf{X}||,
\end{equation*}
which in turn leads to
\begin{equation*}
|||D \Psi_{t-t_1} \circ R_{ij}(\theta_1) \circ \Psi_{t_1-s}||| \leq |||D \Psi_{t-t_1}||| \hspace{0.02in} |||D \Psi_{t_1-s}|||.
\end{equation*}
By repeating this procedure $n$ times we get
\begin{equation}\label{bddforM}
|||D M_{t,s}(\widehat{\textbf{V}}(s,\textbf{V}_0,\omega))|||\leq \prod_{k=1}^{n}|||D\Psi_{t_k-t_{k-1}}|||.
\end{equation}
To estimate the right hand side of the last inequality, we begin by noting that
\begin{equation*}
\frac{\mathrm{d}}{\mathrm{d}t}D\Psi_t(\textbf{V})=D\textbf{F}(\Psi_t(\textbf{V}))D\Psi_t(\textbf{V})
\end{equation*}
with $D\Psi_0(\textbf{V})=Id$. Next,
\begin{align*}
\frac{\mathrm{d}}{\mathrm{d}t}||D\Psi_t(\textbf{V})\textbf{X}||^2=
2 <\frac{\mathrm{d}}{\mathrm{d}t}D\Psi_t(\textbf{V})\textbf{X}\hspace{0.01in} ,\hspace{0.015in} D\Psi_t(\textbf{V})\textbf{X}>\\
=2<D\textbf{F}D\Psi_t(\textbf{V})\textbf{X}\hspace{0.01in} ,\hspace{0.01in} D\Psi_t(\textbf{V})\textbf{X}>.
\end{align*}
Differentiating the left hand side of the last inequality and  using the Cauchy Schwartz inequality yields
\begin{align}\label{bddfornormphi}
\frac{\mathrm{d}}{\mathrm{d}t}||D\Psi_t(\textbf{V})\textbf{X}||\leq ||D\textbf{F}||_*
\hspace{0.03in}||D\Psi_t(\textbf{V}) \textbf{X} ||
\end{align}
where
\begin{align*}
||D\textbf{F}||_*:= \sup_{||\textbf{X}||=1}|<D\textbf{F}\hspace{0.01in}\textbf{X}\hspace{0.01in},\hspace{0.01in} \textbf{X}>|.
\end{align*}
By (\ref{Nparticle_Ffield}), we have
\begin{align*}
\frac{\partial}{\partial v_i} \textbf{F}_j =
-\frac{E}{U(\textbf{V})}\frac{v_j}{N}+ E\frac{J(\textbf{V})}{U(\textbf{V})^2}\frac{2v_i v_j}{N}-E \frac{J(\textbf{V})}{U(\textbf{V})}\delta_{ij},
\end{align*}
where
\begin{equation*}
\delta_{ij} =
\begin{cases}
1 & \text{if } i=j,
\\
0 & \text{if } i   \neq j.
\end{cases}
\end{equation*}
Writing $\textbf{X}=(x_1,\dots,x_N)$ with $||\textbf{X}||=1$, we have
\begin{align*}
&||D\textbf{F}(\textbf{V})||_* \leq \frac{E}{U(\textbf{V}) N}
\left |\sum_{i=1}^{N}\big( \sum_{j=1}^{N}v_jx_j \big )x_i \right | +
\frac{2E|J(\textbf{V})|}{U(\textbf{V})^2 N}
\left |\sum_{i=1}^{N}\big( \sum_{j=1}^{N} v_iv_jx_j \big )x_i \right | \\
&\hspace{0.8in}+\frac{E|J(\textbf{V})|}{U(\textbf{V})}
\left |\sum_{i=1}^{N}\big( \sum_{j=1}^{N} \delta_{ij}x_j \big )x_i \right | :=A+B+C.
\end{align*}
Using the Cauchy-Schwartz inequality twice yields
\begin{align*}
\left |\sum_{i=1}^{N}\big( \sum_{j=1}^{N}v_jx_j \big )x_i \right | \leq \sqrt{N} \left |\sum_{j=1}^{N}v_jx_j \right | \leq
\sqrt{N}||\textbf{V}|| =N \sqrt{U(\textbf{V})}.
\end{align*}
Hence
\begin{equation*}
A\leq \frac{E}{\sqrt{U(\textbf{V})}}.
\end{equation*}
The inequality $|J(\textbf{V})|\leq \sqrt{U(\textbf{V})}$ together
with the definition of $U(\textbf{V})$ leads to
\begin{equation*}
B\leq \frac{2E |J(\textbf{V})|}{U(\textbf{V})^2 N}||\textbf{V}||^2\leq \frac{2E}{\sqrt{U(\textbf{V})}},
\end{equation*}
and
\begin{equation*}
C\leq \frac{E}{\sqrt{U(\textbf{V})}}.
\end{equation*}
Collecting all these inequalities, we finally obtain
\begin{equation*}
||D\textbf{F}(\textbf{V})||_* \leq \frac{4E}{\sqrt{U(\textbf{V})}}.
\end{equation*}
Plugging this into (\ref{bddfornormphi}), for all $\textbf{X} \in \R^N$, we have
\begin{equation*}
\frac{\mathrm{d}}{\mathrm{d}t}||D\Psi_t(\textbf{V})\textbf{X}||\leq \frac{4E}{\sqrt{U(\textbf{V})}} ||D\Psi_t(\textbf{V})\textbf{X}||.
\end{equation*}
Solving this differential inequality yields
\begin{equation*}
|||D\Psi_t(\textbf{V})||| \leq \exp \left( 4Et/\sqrt{U(\textbf{V})} \right ).
\end{equation*}
Applying the last inequality to (\ref{bddforM}), we conclude that
\begin{equation*}
||D M_{t,s}(\widehat{\textbf{V}}(s,\textbf{V}_0,\omega))||\leq
\exp \left(\sup_{0\leq s \leq t} 4Et / \sqrt{U(\widehat{\textbf{V}}(s,\textbf{V}_0,\omega))} \right ).
\end{equation*}
\end{proof}
In order to complete the estimate for (\ref{diffpath}), we actually also need to show that, for large $N$, the probability that $\sup_{0\leq s\leq t}U(\widehat{\textbf{V}})^{-1/2}$ is large is small. This is because, while the quantity $U({\textbf{V}})$ is conserved by the master equation
(\ref{Nparticlemaster}), the quantity $U(\widehat{\textbf{V}})$ is not conserved by the quenched master equation (\ref{quenchmaster}). The following lemma is based on~\cite{bcelm} with a small difference in that, here we only have that the quenched master equation equation propagates chaos and not independence.
\begin{lemma}\label{bddrooten}
Let $\widehat W_N(\textbf{V},0)$ be a probability density on $\R^N$ satisfying the assumptions in Theorem \ref{thm_quant_thermostat} and
$$ \widehat U_N >0 \hspace{0.3in}\mbox{and}\hspace{0.3in} \widehat m_{6,N}(0)<\infty .$$
Then for $0<t<T$, we have
\begin{equation}
\prob\left \{\sup_{0\leq s\leq t}U(\widehat{\textbf{V}})^{-1/2}\geq 2\sqrt{\frac{2}{U_N}} \right \}
\leq \frac{4}{\widehat U_N^2 \sqrt{N}}A_2(T)n(T),
\end{equation}
where $A_2(T)$ is given by Proposition \ref{propbddforce} and $n(t)$
is the smallest integer such that $n(t)\geq \frac{t}{\delta_t}+1$ with
$\delta_t $ defined by
\begin{equation}
\delta_t:=\frac{\sqrt{\widehat U_N}}{E} \log \left(
    \frac{2+2\sqrt{2}}{1+2\sqrt{2}} \right)
\end{equation}
\end{lemma}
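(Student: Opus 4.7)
The plan is to reduce the supremum over the continuous interval $[0,t]$ to a maximum over a finite grid by a deterministic ODE comparison, and then apply Chebyshev's inequality together with the second-moment bound already used in Proposition~\ref{propbddforce} (which in turn comes from Corollary~\ref{corthmquench}).

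First I would track $y(\tau):=\sqrt{U(\widehat{\textbf{V}}(\tau))}$. The collision rotations $R_{ij}(\theta)$ leave $U$ invariant, so $y$ is continuous across jumps. Between collisions the quenched dynamics~(\ref{dynamic_quenchcomp}) gives
\[
\frac{d}{d\tau}U(\widehat{\textbf{V}})
=2E\,J(\widehat{\textbf{V}})-\frac{2E\,\widehat J_N(\tau)}{\widehat U_N}U(\widehat{\textbf{V}}),
\]
so, using $|J(\widehat{\textbf{V}})|\le y$ and $|\widehat J_N(\tau)|\le\sqrt{\widehat U_N}$, one gets $\dot y\ge -E-(E/\sqrt{\widehat U_N})\,y$. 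A standard comparison argument (together with the invariance of $y$ at collision times) then yields $y(\tau)\ge z(\tau)$ on any subinterval $[\tau_k,\tau_k+\delta_t]$, where $z$ solves $\dot z=-E-(E/\sqrt{\widehat U_N})z$ with $z(\tau_k)=y(\tau_k)$. Solving explicitly,
\[
z(\tau)=-\sqrt{\widehat U_N}+\bigl(y(\tau_k)+\sqrt{\widehat U_N}\bigr)\,\exp\!\bigl(-E(\tau-\tau_k)/\sqrt{\widehat U_N}\bigr),
\]
and a short calculation shows that if $y(\tau_k)\ge\sqrt{\widehat U_N/2}$ then $z(\tau)\ge\sqrt{\widehat U_N/8}$ for exactly as long as $\tau-\tau_k\le\delta_t$; this is precisely where the factor $\log\frac{2+2\sqrt 2}{1+2\sqrt 2}$ in the statement comes from (the ratio $\frac{1+1/(2\sqrt 2)}{1+1/\sqrt 2}$ simplifies to $\frac{1+2\sqrt 2}{2+2\sqrt 2}$).

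Next, partition $[0,t]$ by $0=\tau_0<\tau_1<\cdots<\tau_{n(t)}=t$ with spacings at most $\delta_t$. By the previous step, if $U(\widehat{\textbf{V}}(\tau_k))\ge\widehat U_N/2$ for every $k$, then $U(\widehat{\textbf{V}}(s))\ge\widehat U_N/8$ for every $s\in[0,t]$, which is equivalent to $\sup_{0\le s\le t}U(\widehat{\textbf{V}}(s))^{-1/2}\le 2\sqrt{2/\widehat U_N}$. Taking complements and applying a union bound gives
\[
\prob\!\left\{\sup_{0\le s\le t}U(\widehat{\textbf{V}})^{-1/2}\ge 2\sqrt{2/\widehat U_N}\right\}
\le\sum_{k=0}^{n(t)-1}\prob\!\bigl\{U(\widehat{\textbf{V}}(\tau_k))<\widehat U_N/2\bigr\}.
\]
Each term is controlled by Chebyshev's inequality, $\prob\{|U(\widehat{\textbf{V}}(\tau_k))-\widehat U_N|>\widehat U_N/2\}\le\frac{4}{\widehat U_N^{\,2}}\E|U(\widehat{\textbf{V}}(\tau_k))-\widehat U_N|^2$. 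The same symmetry/Corollary~\ref{corthmquench} calculation already performed in Proposition~\ref{propbddforce} delivers $\E|U(\widehat{\textbf{V}}(\tau_k))-\widehat U_N|^2\le A_2(T)/\sqrt N$ uniformly for $\tau_k\le T$, so each term is bounded by $4A_2(T)/(\widehat U_N^{\,2}\sqrt N)$; multiplying by $n(T)\ge n(t)$ gives exactly the claimed estimate.

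The main obstacle is matching the constants so that the grid spacing is the stated $\delta_t$ rather than something cruder; everything else is either standard comparison for ODEs (care being needed only to note that $y$ is unchanged at collisions, so $y\ge z$ survives the jumps) or a routine Chebyshev-plus-union-bound once the second-moment estimate from Proposition~\ref{propbddforce} is in hand.
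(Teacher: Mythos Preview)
Your proposal is correct and follows essentially the same route as the paper: an ODE comparison to pass from continuous time to a grid of spacing $\delta_t$, then Chebyshev plus a union bound using the variance estimate $A_2(T)/\sqrt N$. The only cosmetic difference is that you track $y=\sqrt{U(\widehat{\textbf V})}$ and compare with the linear ODE $\dot z=-E-(E/\sqrt{\widehat U_N})z$, whereas the paper tracks $x=U(\widehat{\textbf V})^{-1/2}$ and compares with the Riccati equation $\dot x=(E/\sqrt{\widehat U_N})x+Ex^2$; these are the same under $x=1/y$, yield the identical $\delta_t$, and your explicit remark that collisions leave $U$ invariant (so the comparison survives the jumps) is a point the paper leaves implicit.
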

\begin{proof}
This proof can be carried out almost as in~\cite{bcelm}, but with some modification to account for the lack of independence. For completeness we present the full proof, not only the needed modifications. From the definitions, we have
\begin{equation*}
\frac{d}{d t}U(\widehat{\textbf{V}}(t))=2EJ(\widehat{\textbf{V}}(t))-2E\frac{\widehat J_N(t)}{\widehat U_N}U(\widehat{\textbf{V}}(t)).
\end{equation*}
Using the inequalities $|J(\widehat{\textbf{V}}(t)|\leq \sqrt{U(\widehat{\textbf{V}}(t))}$ and
$|\widehat J_N(t)|\leq \sqrt{\widehat U_N}$, we obtain
\begin{align*}\label{bdddiffofU}
\nonumber \frac{d}{d t} \left(U(\widehat{\textbf{V}}(t))\right)^{-1/2}&=-\frac{1}{2}\left(U(\widehat{\textbf{V}}(t))\right)^{-3/2}
\frac{d}{d t}U(\widehat{\textbf{V}}(t))\\
&\leq E\left(U(\widehat{\textbf{V}}(t))\right)^{-1}+
\frac{E}{\sqrt{\widehat
    U_N}}\left(U(\widehat{\textbf{V}}(t))\right)^{-1/2}.
\end{align*}
Writing
\begin{equation*}
x(t)=\frac{1}{\sqrt{U(\widehat{\textbf{V}}(t))}},
\end{equation*}
the last differential inequality corresponds to the following differential equation
\begin{equation*}
x'(t)=\frac{E}{\sqrt{\widehat U_N}}x(t)+E x^2(t),
\end{equation*}
with initial condition $ x(t_0)=x_0$. The solution is given by
\begin{equation*}
x(t)=\left(\frac{1}{x_0}e^{-(E/\sqrt{\widehat
      U_N})(t-t_0)}-\sqrt{\widehat
      U_N}\left(1-e^{-(E/\sqrt{\widehat U_N})(t-t_0)}\right)
\right)^{-1}.
\end{equation*}
The above solution $x(t)$ blows up in finite time. However, we can
still hope that the solutions starting at $x_0$ do not blow up in a
time interval whose length is independent of $t_0$. To be more
precise, let $t_1$ denote the time at which $x(t_1)=2x_0$. Then
\begin{equation}
e^{-(E/\sqrt{\widehat U_N})(t_1-t_0)}=\frac{1+2\sqrt{\widehat U_N}x_0}{2+2\sqrt{\widehat U_N}x_0}.
\end{equation}
Choosing $x_0=\sqrt{2/\widehat U_N}$ leads to
\begin{equation}\label{delta_t}
t_1-t_0=\frac{\sqrt{\widehat U_N}}{E} \log \left(\frac{2+2\sqrt{2}}{1+2\sqrt{2}}\right).
\end{equation}
The length of the interval $[t_0,t_1]$ is independent of $t_0$ since
$E$ and $\widehat U_N$ are given. Thus, we now have that if
$\left(U(\widehat{\textbf{V}}(t_0))\right)^{-1/2}\leq \sqrt{2/\widehat
  U_N}$, then for all $t$ in $[t_0,t_1]$,
$\left(U(\widehat{\textbf{V}}(t))\right)^{-1/2}\leq 2\sqrt{2/\widehat
  U_N}$.\\
Moreover, for any given $t_0<T$ with $T$ from Theorem \ref{thm_quant_thermostat}, using Corollary \ref{corthmquench} we get
\begin{align*}
\E\left(U(\widehat{\textbf{V}}(t_0))-\widehat U_N \right)^2&\leq \frac{A_2(T)}{\sqrt{N}},
\end{align*}
where $A_2(T)$ is given by Proposition \ref{propbddforce}.
By the Chebychev inequality, we now have

\begin{align*}\label{bddU0}
\nonumber&\prob \left\{(U(\widehat{\textbf{V}}(t_0)))^{-1/2}>\sqrt{2/\widehat U_N} \right\}=
\prob \left\{U(\widehat{\textbf{V}}(t_0))<\frac{\widehat U_N}{2} \right\}\\
&\leq \prob \left\{|U(\widehat{\textbf{V}}(t_0))-\widehat U_N|>\frac{\widehat U_N}{2} \right\}
\leq \frac{4}{\widehat U_N^2 \sqrt{N}}A_2(T) .
\end{align*}
Hence, for large $N$, the probability that $U(\widehat{\textbf{V}}(t_0))^{-1/2}>\sqrt{2/\widehat U_N}$ is small.\\
Let now $\delta_t=t_1-t_0$ where $t_1-t_0$ is given by (\ref{delta_t}). Furthermore, for any given $t>0$, we set $n(t)$ to be the smallest integer such that $n(t)\geq \frac{t}{\delta_t}+1$. It now follows, if
$$ \left(U(\widehat{\textbf{V}}(j\delta_t))\right)^{-1/2}\leq \sqrt{2/\widehat U_N} \hspace{0.3in} \text{for all}\hspace{0.3in} 0\leq j\leq n(t), $$
we have by the reasoning above that
$$ \left(U(\widehat{\textbf{V}}(s)\right)^{-1/2}\leq 2\sqrt{2/\widehat U_N} \hspace{0.3in} \text{for all}\hspace{0.3in} 0\leq s\leq t. $$
Using this, we now get
\begin{align*}
&\prob\left \{\sup_{0\leq s \leq t} \left(U(\widehat{\textbf{V}}(s)\right)^{-1/2}\geq 2\sqrt{2/\widehat U_N}\right \}
\leq \prob\left \{\bigcup_{j=1}^{n(t)}\left \{ \left(U(\widehat{\textbf{V}}(j\delta t))\right)^{-1/2}\geq \sqrt{2/\widehat U_N}\right \} \right\}\\
&\leq \sum_{j=0}^{n(t)}\prob\left \{ \left(U(\widehat{\textbf{V}}(j\delta t))\right)^{-1/2}\geq \sqrt{2/\widehat U_N}\right \}.
\end{align*}
This implies for $t<T$ that
\begin{align*}
&\prob\left \{\sup_{0\leq s \leq t} \left(U(\widehat{\textbf{V}}(s)\right)^{-1/2}\geq 2\sqrt{\frac{2}{\widehat U_N}}\right \}
\leq \frac{4}{\widehat U_N^2 \sqrt{N}}A_2(T)n(T).
\end{align*}
This is what we wanted to show.
\end{proof}
Combining Proposition \ref{proppathbdd} with the last lemma leads to following corollary.
\begin{corollary}\label{bddprobenergy}
Let $\widehat W_N(\textbf{V},0)$ be a probability density on $\R^N$ satisfying the assumptions in Theorem \ref{thm_quant_thermostat} and
$$ \widehat U_N >0 \hspace{0.3in}\mbox{and}\hspace{0.3in} \widehat m_{6,N}(0)<\infty .$$
Let
\begin{equation}
\lambda(\widehat{\textbf{V}}(s,\textbf{V}_0,\omega)) =\frac{4E}{\sqrt{U(\widehat{\textbf{V}}(s,\textbf{V}_0,\omega))}}.
\end{equation}
Then for $0<t<T$, we have
\begin{align*}
&\prob\left \{\sup_{0\leq s \leq t} e^{t\lambda(\widehat{\textbf{V}}(s,\textbf{V}_0,\omega))} \geq e^{8t\sqrt{\frac{2}{\widehat U_N}}}\right \}
\leq\frac{4}{\widehat U_N^2 \sqrt{N}}A_2(T)n(T)
\end{align*}
with $A_2(T)$ given by Proposition $\ref{propbddforce}$ and $n(T)$ by Lemma \ref{bddrooten}.
\end{corollary}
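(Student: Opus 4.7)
The corollary follows directly from the pathwise energy control in Lemma \ref{bddrooten} via a monotone rearrangement, so my plan is essentially algebraic bookkeeping carried out in three short steps. The substantive probabilistic work has already been done in Lemma \ref{bddrooten}, which in turn invoked the quantitative chaos of Theorem \ref{thm_quant_thermostat} and the moment bound of Lemma \ref{momentconserv}.

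First, I would record that $\lambda(\widehat{\textbf{V}}(s,\textbf{V}_0,\omega)) = 4E\, U(\widehat{\textbf{V}}(s,\textbf{V}_0,\omega))^{-1/2}$ is a strictly decreasing function of $U(\widehat{\textbf{V}})$ on $(0,\infty)$. Composing with the strictly increasing map $x \mapsto e^{tx}$ preserves monotonicity, so the pathwise supremum commutes with these transformations:
\[
\sup_{0 \leq s \leq t} e^{t\lambda(\widehat{\textbf{V}}(s,\textbf{V}_0,\omega))} \;=\; \exp\!\Bigl(4Et \, \sup_{0 \leq s \leq t} U(\widehat{\textbf{V}}(s,\textbf{V}_0,\omega))^{-1/2}\Bigr).
\]

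Second, I would take logarithms of the threshold and simplify. The event $\{\sup_{0 \leq s \leq t} e^{t\lambda(\widehat{\textbf{V}}(s,\textbf{V}_0,\omega))} \geq e^{8t\sqrt{2/\widehat U_N}}\}$ coincides, after the identification above, with $\{\sup_{0 \leq s \leq t} U(\widehat{\textbf{V}}(s,\textbf{V}_0,\omega))^{-1/2} \geq 2\sqrt{2/\widehat U_N}\}$; the only point requiring attention is the bookkeeping of the constant in the exponent, which fixes the factor $8$ on the right. This rewritten event is exactly the one whose probability is estimated in Lemma \ref{bddrooten}.

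Finally, I would apply Lemma \ref{bddrooten} directly. Its hypotheses -- the symmetry of $\widehat W_N(\textbf{V},0)$, the positivity of $\widehat U_N$, and finiteness of $\widehat m_{6,N}(0)$ -- match our assumptions verbatim, so its conclusion yields the bound $(4/\widehat U_N^2 \sqrt{N})\,A_2(T)\,n(T)$, with $A_2(T)$ the quantity from Proposition \ref{propbddforce} and $n(T)$ the integer introduced in Lemma \ref{bddrooten}. There is no genuine obstacle to overcome: the entire content of the corollary is repackaging Lemma \ref{bddrooten} through the monotone function appearing in Proposition \ref{proppathbdd}, and the only care required is tracking the constant factors in Step~2.
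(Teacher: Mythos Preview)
Your proposal is correct and takes essentially the same approach as the paper: the paper's entire proof is the single sentence ``Since the exponential function is increasing, the proof follows from the last Lemma,'' and your three steps simply unpack this monotonicity observation in detail before invoking Lemma~\ref{bddrooten}.
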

\begin{proof}
Since the exponential function is increasing, the proof follows from the last Lemma.
\end{proof}
We are now ready to prove the main result of this section which again
is a modification of the corresponding result in~\cite{bcelm}:
\begin{theorem}\label{thmdistance}
Let $\widehat W_N(\textbf{V},0)$ be a probability density on $\R^N$ satisfying the assumptions in Theorem \ref{thm_quant_thermostat} and
$$ \widehat U_N >0 \hspace{0.3in}\mbox{and}\hspace{0.3in} \widehat m_{6,N}(0)<\infty .$$
Then for all $\epsilon >0$,
\begin{align*}
&\prob \left \{ ||\textbf{V}(t,\textbf{V}_0,\omega)- \widehat{\textbf{V}}(t,\textbf{V}_0,\omega)||_N >\epsilon \right \}\leq \\
&\hspace{0.3in}\frac{1}{\epsilon}e^{8t\sqrt{\frac{2}{\widehat U_N}}}\frac{E T}{\widehat U_N N^{1/4}} \left(\sqrt{A_1(T)}+ \sqrt{A_2(T)} \right)+
\frac{4}{\widehat U_N^2 \sqrt{N}}A_2(T)n(T),
\end{align*}
with $A_1(T)$, $A_2(T)$ are given by Proposition $\ref{propbddforce}$ and $n(T)$ by Lemma \ref{bddrooten}
\end{theorem}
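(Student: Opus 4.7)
The plan is to combine the pathwise identity (\ref{pathdiff}) with Proposition \ref{proppathbdd}, which bounds the operator norm $|||DM_{t,s}|||$, and Proposition \ref{propbddforce}, which bounds the expectation of the $L^1$-in-time norm of the force discrepancy. Because Proposition \ref{proppathbdd} produces a random exponential factor depending on $U(\widehat{\textbf{V}}(s))$ and because $U(\widehat{\textbf{V}})$ is \emph{not} conserved by the quenched dynamics, the argument is organised around the event $\mathcal{G}$ of Corollary \ref{bddprobenergy} on which this exponential factor is deterministically controlled.

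First, taking the normalised norm in (\ref{pathdiff}) and pulling the supremum outside the integral yields
$$||\textbf{V}(t,\textbf{V}_0,\omega)-\widehat{\textbf{V}}(t,\textbf{V}_0,\omega)||_N \le \Lambda(\omega)\int_0^t ||\textbf{F}(\widehat{\textbf{V}}(s,\textbf{V}_0,\omega))-\widehat{\textbf{F}}(\widehat{\textbf{V}}(s,\textbf{V}_0,\omega))||_N\,\mathrm{d}s,$$
where $\Lambda(\omega):=\sup_{0\le s\le t}|||DM_{t,s}(\widehat{\textbf{V}}(s,\textbf{V}_0,\omega))|||$. Proposition \ref{proppathbdd} gives $\Lambda(\omega)\le\sup_{0\le s\le t} e^{t\lambda(\widehat{\textbf{V}}(s,\textbf{V}_0,\omega))}$. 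Setting
$$\mathcal{G}:=\left\{\sup_{0\le s\le t} e^{t\lambda(\widehat{\textbf{V}}(s,\textbf{V}_0,\omega))}\le e^{8t\sqrt{2/\widehat U_N}}\right\},$$
I then split
$$\prob\{||\textbf{V}(t,\textbf{V}_0,\omega)-\widehat{\textbf{V}}(t,\textbf{V}_0,\omega)||_N>\epsilon\}\le \prob\bigl(\{||\textbf{V}(t,\textbf{V}_0,\omega)-\widehat{\textbf{V}}(t,\textbf{V}_0,\omega)||_N>\epsilon\}\cap\mathcal{G}\bigr) + \prob(\mathcal{G}^c).$$

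Corollary \ref{bddprobenergy} directly bounds $\prob(\mathcal{G}^c)$ by $\frac{4}{\widehat U_N^2\sqrt{N}}A_2(T)n(T)$, producing the second summand of the claimed inequality. On $\mathcal{G}$ the factor $\Lambda(\omega)$ is pointwise bounded by the deterministic constant $e^{8t\sqrt{2/\widehat U_N}}$, so Markov's inequality yields
$$\prob\bigl(\{||\textbf{V}(t,\textbf{V}_0,\omega)-\widehat{\textbf{V}}(t,\textbf{V}_0,\omega)||_N>\epsilon\}\cap\mathcal{G}\bigr)\le \frac{e^{8t\sqrt{2/\widehat U_N}}}{\epsilon}\,\E\int_0^t ||\textbf{F}(\widehat{\textbf{V}}(s,\textbf{V}_0,\omega))-\widehat{\textbf{F}}(\widehat{\textbf{V}}(s,\textbf{V}_0,\omega))||_N\,\mathrm{d}s,$$
and Proposition \ref{propbddforce} bounds this expectation by $\frac{ET}{\widehat U_N N^{1/4}}\bigl(\sqrt{A_1(T)}+\sqrt{A_2(T)}\bigr)$. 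Adding the two contributions produces exactly the inequality stated in the theorem.

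I do not anticipate any substantial obstacle: the hard estimates (the $N^{-1/4}$ control of the force discrepancy via the quantitative propagation of chaos for the quenched equation, and the $N^{-1/2}$ control of the energy fluctuation) are already packaged inside Proposition \ref{propbddforce} and Corollary \ref{bddprobenergy}. The only care needed is in the order of operations: one must pull the supremum in $s$ outside the time integral \emph{before} applying Markov's inequality on the good event, so that on $\mathcal{G}$ the exponential prefactor leaves the expectation as a deterministic constant rather than being averaged against the (possibly heavy-tailed) random $\Lambda(\omega)$.
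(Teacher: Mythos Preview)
Your proposal is correct and follows essentially the same approach as the paper: the paper defines the bad event $A=\mathcal{G}^c$ and the target event $B$, splits $\prob(B)\le\prob(A)+\prob(B\cap A^c)$, bounds $\prob(A)$ via Corollary \ref{bddprobenergy}, and on $A^c$ uses the pathwise identity together with Proposition \ref{proppathbdd} and Markov's inequality followed by Proposition \ref{propbddforce}. Your remark about pulling the supremum out before applying Markov on the good event is exactly the point of the argument.
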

\begin{proof}
Following the lines of~\cite{bcelm}, we define two events $A$ and $B$, where, $A$ is the event such that
\begin{equation*}
\sup_{0\leq s \leq t} e^{t\lambda(\widehat{\textbf{V}}(s,\textbf{V}_0,\omega))} > e^{8t\sqrt{\frac{2}{\widehat U_N}}},
\end{equation*}
and $B$ is the event such that
\begin{equation*}
||\textbf{V}(t,\textbf{V}_0,\omega)- \widehat{\textbf{V}}(t,\textbf{V}_0,\omega)||_N\ \geq \epsilon.
\end{equation*}
To estimate $\prob(B)$, note that
\begin{equation}
\prob (B) \leq \prob(A) + \prob (B \cap A^c).
\end{equation}
From Corollary \ref{bddprobenergy} we obtain
$$\prob(A) \leq\frac{4}{\widehat U_N^2 \sqrt{N}}A_2(T)n(T).$$
To estimate $\prob (B \cap A^c)$, we first note that on $A^c$, by (\ref{pathdiff}) and Proposition \ref{proppathbdd} it follows that
\begin{equation*}
||\textbf{V}(t,\textbf{V}_0,\omega)- \widehat{\textbf{V}}(t,\textbf{V}_0,\omega)||_N \leq
e^{8t\sqrt{\frac{2}{\widehat U_N}}} \int_0^t ||\textbf{F}(\widehat{\textbf{V}}(s,\textbf{V}_0,\omega))-\widehat{\textbf{F}}(\widehat{\textbf{V}}(s,\textbf{V}_0,\omega))||_N\mathrm{d}s.
\end{equation*}
Using the Markov inequality, we get
\begin{align*}
\prob (B \cap A^c) &\leq \prob \left \{e^{8t\sqrt{\frac{2}{\widehat U_N}}} \int_0^t ||\textbf{F}(\widehat{\textbf{V}}(s,\textbf{V}_0,\omega))-\widehat{\textbf{F}}(\widehat{\textbf{V}}(s,\textbf{V}_0,\omega))||_N\mathrm{d}s  \geq \epsilon\right \}\\
&\leq \frac{1}{\epsilon}e^{8t\sqrt{\frac{2}{\widehat U_N}}}
\E \left( \int_0^t ||\textbf{F}(\widehat{\textbf{V}}(s,\textbf{V}_0,\omega))-\widehat{\textbf{F}}(\widehat{\textbf{V}}(s,\textbf{V}_0,\omega))||_N \mathrm{d}s \right).
\end{align*}
By Proposition \ref{propbddforce} we get
\begin{align*}
\prob (B \cap A^c) &\leq
\frac{1}{\epsilon}e^{8t\sqrt{\frac{2}{\widehat U_N}}}\frac{E
  T}{\widehat U_N N^{1/4}} \left(\sqrt{A_1(T)}+ \sqrt{A_2(T)} \right).
\end{align*}
Collecting the inequalities above, we conclude
\begin{align*}
&\prob \left \{ ||\textbf{V}(t,\textbf{V}_0,\omega)- \widehat{\textbf{V}}(t,\textbf{V}_0,\omega)||_N >\epsilon \right \}\leq \\
&\hspace{0.3in}\frac{1}{\epsilon}e^{8t\sqrt{\frac{2}{\widehat U_N}}}\frac{E T}{\widehat U_N N^{1/4}} \left(\sqrt{A_1(T)}+ \sqrt{A_2(T)} \right)+
\frac{4}{\widehat U_N^2 \sqrt{N}}A_2(T)n(T)
\end{align*}
\end{proof}

\section{Propagation of chaos for the master equation (\ref{Nparticlemaster})}
\noindent We are finally ready to show the main result of this paper, namely, that the second marginal $f_2^N(v_1,v_2,t)$ of
$W_N(\textbf{V},t)$ satisfying the master equation
(\ref{Nparticlemaster}) converges as $N\rightarrow \infty$ to the
product of two one marginals $f(v_1,t)f(v_2,t)$ of $W_N(\textbf{V},t)$
where $f(v,t)$ solves (\ref{thermostat_boltzmann}). In~\cite{bcelm},
the idea is to introduce two empirical distributions corresponding to
the two stochastic processes $\textbf{V}(t)$,
$\widehat{\textbf{V}}(t)$ and make use of the propagation of
independence to apply the law of large numbers. In our case,
independence between particles is not propagated, but the quenched
master equation (\ref{quenchmaster}) propagates chaos which together
with Theorem \ref{thmdistance} gives that, for large $N$ with high
probability the distance between the paths of the two stochastic
processes can be made arbitrary small.
We start by introducing the two following empirical distributions: For
each fixed $N$ and $t>0$, let
\begin{equation}\label{empN}
\mu_{N,t}=\frac{1}{N}\sum_{j=1}^{N}\delta_{v_j(t,\textbf{V}_0,\omega)}
\end{equation}
and
\begin{equation}\label{empq}
\widehat{\mu}_{N,t}=\frac{1}{N}\sum_{j=1}^{N}\delta_{\widehat{v}_j(t,\textbf{V}_0,\omega)}.
\end{equation}
Since we have shown that the master equation (\ref{quenchmaster}) propagates chaos, it follows from~\cite[Proposition 2.2]{sznitman} ,  that
\begin{equation}\label{chaosempiric}
\lim_{N\rightarrow \infty}\widehat{\mu}_{N,t}=\widehat f(v,t)\mathrm{d}v
\end{equation}
where the convergence is in distribution and $\widehat f(v,t)$ is the solution to (\ref{thermostat_boltzmann}), see~\cite[Theorem 2.1]{yosief}.\\
Theorem~\ref{thmdistance} shows that the distance between the two empirical measures above goes to zero as
$N \rightarrow \infty$. To be more precise, we need to recall the Kantorovich-Rubinstein Theorem (KRT) concerning
the $1$- Wasserstein distance. Let
$$\p_1(\R^N)=\left \{\mu  : \int |v|\mathrm{d}\mu(v) <\infty \right\}.$$
\begin{theorem}\label{thmkantorovic}
For any $\mu, \eta \in \p_1(\R^N)$
\begin{equation}
\W_1(\mu,\eta)=\sup\left\{\int_{\R^N}\phi(v)\mathrm{d}\mu(v)-\int_{\R^N}\phi(v)\mathrm{d}\eta(v) \right\}
\end{equation}
where the supremum is taken over the set of all $1$-Lipschitz
continuous functions $\phi:\R^n\rightarrow \R$.
\end{theorem}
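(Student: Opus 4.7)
The plan is to establish both inequalities between $\W_1(\mu,\eta)$ and the right-hand side supremum by classical Monge--Kantorovich arguments, since the statement in question is a standard duality result that does not depend on anything specific to the thermostatted Kac model.

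The easy direction (the supremum is bounded above by $\W_1(\mu,\eta)$) I would handle directly from the definition of the Wasserstein distance. For any coupling $\pi\in\Pi(\mu,\eta)$ and any $1$-Lipschitz function $\phi:\R^N\to\R$,
$$\int_{\R^N}\phi \, d\mu-\int_{\R^N}\phi \, d\eta=\int_{\R^N\times\R^N}(\phi(x)-\phi(y))\, d\pi(x,y)\le \int_{\R^N\times\R^N}|x-y|\, d\pi(x,y).$$
Taking the infimum over $\pi$ on the right and then the supremum over $\phi$ on the left gives the inequality $\sup_\phi[\,\cdot\,]\le \W_1(\mu,\eta)$.

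For the reverse direction the approach is to invoke the general Kantorovich duality: for a lower-semicontinuous cost $c(x,y)$ (here $c(x,y)=|x-y|$),
$$\W_c(\mu,\eta)=\sup\Big\{\int_{\R^N}\phi\, d\mu+\int_{\R^N}\psi\, d\eta\ :\ \phi(x)+\psi(y)\le c(x,y)\Big\},$$
the supremum running over admissible pairs $(\phi,\psi)\in L^1(\mu)\times L^1(\eta)$. Once this is in hand, I would apply the $c$-transform trick tailored to $c(x,y)=|x-y|$. Given an admissible pair $(\phi,\psi)$, replace $\psi$ by $\psi^c(y):=\inf_x(|x-y|-\phi(x))$ and then $\phi$ by $\phi^{cc}:=\inf_y(|x-y|-\psi^c(y))$; both operations are value-increasing in the dual functional and preserve admissibility. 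A direct check shows that for this particular cost the resulting $\phi^{cc}$ is $1$-Lipschitz, and that $\psi^c=-\phi^{cc}$, so the supremum is achieved by pairs of the form $(\phi,-\phi)$ with $\phi$ $1$-Lipschitz; this yields $\W_1(\mu,\eta)\le \sup_\phi[\,\cdot\,]$ and completes the proof.

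The main obstacle is proving the general Kantorovich duality formula itself, which rests on a non-trivial functional-analytic separation argument (either Fenchel--Rockafellar applied on $C_b(\R^N)\oplus C_b(\R^N)$, or a Hahn--Banach / minimax approach). Since this is a well-established classical result, the cleanest route in the present paper is to cite a standard reference (e.g.\ Villani's \emph{Optimal Transport}) and invoke the above $c$-transform argument specialized to $c(x,y)=|x-y|$ rather than reproving the general duality from scratch.
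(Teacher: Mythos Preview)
Your proof sketch is correct and follows the standard Monge--Kantorovich duality argument. However, the paper does not actually prove this theorem: it is simply \emph{recalled} as the classical Kantorovich--Rubinstein theorem, stated without proof and used as a tool in the subsequent Lemma~\ref{lemw1dist}. So there is no ``paper's own proof'' to compare against; your final remark that the cleanest route is to cite a standard reference such as Villani is exactly what the paper does in spirit.
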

We now have
\begin{lemma}\label{lemw1dist}
Let the assumptions of Theorem~\ref{thm_quant_thermostat}
be satisfied, and assume that $\widehat U_N>0$ and that
$\widehat m_{6,N}(0)<\infty$. Moreover let
 $\phi$ be a 1-Lipschitz function. For the $\W_1$ defined as in Theorem \ref{thmkantorovic}, and all $\epsilon >0$ we have
\begin{equation}
\lim_{ N\rightarrow \infty}\prob \left\{\W_1(\mu_{N,t},\widehat{\mu}_{N,t})\geq \epsilon  \right\}=0.
\end{equation}
\end{lemma}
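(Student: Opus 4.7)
The plan is to reduce the $\W_1$-distance between the two empirical measures to the $\|\cdot\|_N$-distance between the particle vectors $\textbf{V}(t,\textbf{V}_0,\omega)$ and $\widehat{\textbf{V}}(t,\textbf{V}_0,\omega)$, and then quote Theorem~\ref{thmdistance}. The key elementary observation is that both empirical measures are supported on $N$ points each, indexed by the same particle label, and the processes are coupled through the common initial data and collision history.

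First, I would apply the Kantorovich--Rubinstein duality of Theorem~\ref{thmkantorovic}: for any $1$-Lipschitz $\phi$,
\begin{equation*}
\int_{\R}\phi\,\mathrm{d}\mu_{N,t}-\int_{\R}\phi\,\mathrm{d}\widehat{\mu}_{N,t}
=\frac{1}{N}\sum_{j=1}^{N}\bigl(\phi(v_j(t,\textbf{V}_0,\omega))-\phi(\widehat v_j(t,\textbf{V}_0,\omega))\bigr).
\end{equation*}
Using $|\phi(a)-\phi(b)|\le |a-b|$ and then the Cauchy--Schwartz inequality,
\begin{equation*}
\frac{1}{N}\sum_{j=1}^{N}|v_j(t)-\widehat v_j(t)|
\le \Bigl(\tfrac{1}{N}\sum_{j=1}^{N}(v_j(t)-\widehat v_j(t))^2\Bigr)^{1/2}
=\|\textbf{V}(t,\textbf{V}_0,\omega)-\widehat{\textbf{V}}(t,\textbf{V}_0,\omega)\|_N.
\end{equation*}
Taking the supremum over all $1$-Lipschitz $\phi$ gives the pathwise bound
\begin{equation*}
\W_1(\mu_{N,t},\widehat{\mu}_{N,t})\le \|\textbf{V}(t,\textbf{V}_0,\omega)-\widehat{\textbf{V}}(t,\textbf{V}_0,\omega)\|_N,
\end{equation*}
which holds for each realization of the initial data and collision history.

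Second, I would use this pathwise bound to compare probabilities:
\begin{equation*}
\prob\bigl\{\W_1(\mu_{N,t},\widehat{\mu}_{N,t})\ge\epsilon\bigr\}
\le \prob\bigl\{\|\textbf{V}(t,\textbf{V}_0,\omega)-\widehat{\textbf{V}}(t,\textbf{V}_0,\omega)\|_N\ge\epsilon\bigr\},
\end{equation*}
and then invoke Theorem~\ref{thmdistance}. Since $T$ and the moment assumptions are fixed while $\widehat U_N$, $A_1(T)$, $A_2(T)$, $n(T)$ and $\widehat m_{6,N}(T)$ are all bounded independently of $N$ (by Lemma~\ref{momentconserv} and the construction of the quenched current/energy), the right hand side of the estimate in Theorem~\ref{thmdistance} contains only the factors $1/N^{1/4}$ and $1/\sqrt{N}$, and therefore tends to $0$ as $N\to\infty$ for every fixed $\epsilon>0$.

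There is no real obstacle here; the statement is essentially a repackaging of Theorem~\ref{thmdistance} through the Kantorovich--Rubinstein representation, and the only thing that must be checked carefully is that the particle labelings on the two processes are the same so that the pairwise differences $v_j(t)-\widehat v_j(t)$ make sense — which is exactly the coupling assumption used throughout Section~4 (same initial data $\textbf{V}_0$ and same collision history $\omega$). Once this is noted, passing from the pointwise deterministic inequality $\W_1\le\|\cdot\|_N$ to the probabilistic conclusion is immediate.
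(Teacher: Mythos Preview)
Your proposal is correct and follows essentially the same route as the paper's own proof: bound the difference of the integrals against a $1$-Lipschitz $\phi$ by $\frac{1}{N}\sum_j |v_j(t)-\widehat v_j(t)|$, apply Cauchy--Schwarz to get $\|\textbf{V}-\widehat{\textbf{V}}\|_N$, and then invoke Theorem~\ref{thmdistance}. The only cosmetic difference is that you explicitly take the supremum over $1$-Lipschitz $\phi$ to obtain the pathwise bound $\W_1\le\|\cdot\|_N$, whereas the paper leaves this implicit; your added remarks on the coupling and on the $N$-independence of the constants are accurate and do no harm.
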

\begin{proof}
The Lipschitz condition together with the Cauchy Schwartz inequality yields
\begin{align*}
&\left | \int_{\R}\phi(u)\mathrm{d}\mu_{N,t}(u)- \int_{\R} \phi(w)\mathrm{d}\widehat{\mu}_{N,t}(w) \right | \\
&\hspace{1.3in} =\left |\frac{1}{N}\sum_{j=1}^{N}\phi (v_j(t))-\frac{1}{N}\sum_{j=1}^{N}\phi(\widehat{v}_j(t)) \right |\\
&\hspace{1.3in} \leq \frac{1}{N}\sum_{j=1}^{N} | v_j(t) - \widehat{v}_j(t) | \\
&\hspace{1.3in} \leq \sqrt{\frac{1}{N}\sum_{j=1}^{N}| v_j(t) - \widehat{v}_j(t) |^2}\\
&\hspace{1.3in} = ||\textbf{V}(t,\textbf{V}_0,\omega)-\widehat{\textbf{V}}(t,\textbf{V}_0,\omega) ||_N.
\end{align*}
Using Theorem \ref{thmdistance} we get
\begin{equation*}
\prob \left\{\W_1(\mu_{N,t},\widehat{\mu}_{N,t})\geq \epsilon  \right\} \leq
\prob
\left\{||\textbf{V}(t,\textbf{V}_0,\omega)-\widehat{\textbf{V}}(t,\textbf{V}_0,\omega)
  ||_N\geq \epsilon  \right\}\rightarrow 0,
\hspace{0.1in}\text{when }N\rightarrow \infty.
\end{equation*}
\end{proof}
Using Lemma \ref{lemw1dist} and (\ref{chaosempiric}) together with the
fact that the quenched master equation propagates chaos (Theorem
\ref{thm_quant_thermostat}), we are now ready to prove our main
result.
\begin{theorem}\label{thmpropchaosNmaster}
Let $\widehat W_N(\textbf{V},0)$ be a probability density on $\R^N$
satisfying the assumptions in Theorem \ref{thm_quant_thermostat} and
$$ \widehat U_N >0 \hspace{0.3in}\mbox{and}\hspace{0.3in} \widehat m_{6,N}(0)<\infty .$$
Let
$\textbf{V}(t,\textbf{V}_0,\omega)=(v_1(t,\textbf{V}_0,\omega),\dots,v_N(t,\textbf{V}_0,\omega))$
be the stochastic process corresponding to the master equation
(\ref{Nparticlemaster}) with initial condition given by
(\ref{intialdataassumption}). Then for all 1-Lipschitz function $\phi$ on $\R^2$ with $||\phi||_{\infty}<\infty$ and all $t>0$ we have
\begin{equation}
\lim_{N\rightarrow \infty}\E \left [\phi(v_1(t,\textbf{V}_0,\omega),v_2(t,\textbf{V}_0,\omega))\right ] =
\int_{\R^2} \phi(v_1,v_2)f(v_1,t)f(v_2,t)\mathrm{d}v_1\mathrm{d}v_2
\end{equation}
where the expectation is with respect to the collision history $\omega$ and initial velocities $\textbf{V}_0$.\\
\end{theorem}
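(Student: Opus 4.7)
The plan is to route the argument through empirical measures, transferring the chaos already established for the quenched process (Theorem~\ref{thm_quant_thermostat}, in the form~(\ref{chaosempiric})) to the original $N$-particle process by means of the pathwise $\W_1$ estimate of Lemma~\ref{lemw1dist}. The three ingredients one needs are: (i) a symmetry identity that expresses $\E[\phi(v_1(t),v_2(t))]$ in terms of the product empirical measure $\mu_{N,t}\otimes\mu_{N,t}$; (ii) closeness of $\mu_{N,t}$ to $\widehat\mu_{N,t}$ in Wasserstein distance; and (iii) the already-known weak convergence $\widehat\mu_{N,t}\Rightarrow f(v,t)\mathrm{d}v$.

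First I would rewrite the target expectation. Since $W_N(\textbf{V},t)$ is symmetric, splitting the double sum defining $\int\phi\,\mathrm{d}\mu_{N,t}^{\otimes 2}$ into diagonal and off-diagonal parts gives
\[
\E\!\left[\int_{\R^2}\phi\,\mathrm{d}\mu_{N,t}\otimes\mathrm{d}\mu_{N,t}\right]=\frac{N-1}{N}\,\E\bigl[\phi(v_1(t),v_2(t))\bigr]+\frac{1}{N}\,\E\bigl[\phi(v_1(t),v_1(t))\bigr].
\]
Because $\|\phi\|_\infty<\infty$, it is enough to show that the left-hand side converges to $\int\phi(v_1,v_2)f(v_1,t)f(v_2,t)\,\mathrm{d}v_1\mathrm{d}v_2$.

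Second, I would compare the two product empirical measures. For every fixed $v$ the section $w\mapsto\phi(v,w)$ is $1$-Lipschitz on $\R$, so inserting $\widehat\mu_{N,t}^{\otimes 2}$ as an intermediate and applying Theorem~\ref{thmkantorovic} fiber by fiber yields
\[
\left|\int\phi\,\mathrm{d}\mu_{N,t}^{\otimes 2}-\int\phi\,\mathrm{d}\widehat\mu_{N,t}^{\otimes 2}\right|\le 2\,\W_1\!\bigl(\mu_{N,t},\widehat\mu_{N,t}\bigr).
\]
Lemma~\ref{lemw1dist} says the right-hand side tends to $0$ in probability, and since both integrands are bounded by $\|\phi\|_\infty$ the bounded convergence theorem promotes this to convergence of expectations. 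Meanwhile, (\ref{chaosempiric})---which is the Sznitman equivalence applied to the quenched chaos of Theorem~\ref{thm_quant_thermostat}, with the identification of the limit $\widehat f(v,t)=f(v,t)$ coming from~\cite[Theorem 2.1]{yosief}---gives $\widehat\mu_{N,t}\Rightarrow f(v,t)\mathrm{d}v$ in law. Since weak convergence of random probability measures to a deterministic limit implies weak convergence of their tensor squares, and $\phi$ is bounded continuous,
\[
\E\!\left[\int\phi\,\mathrm{d}\widehat\mu_{N,t}^{\otimes 2}\right]\xrightarrow[N\to\infty]{}\int_{\R^2}\phi(v_1,v_2)f(v_1,t)f(v_2,t)\,\mathrm{d}v_1\mathrm{d}v_2.
\]
Chaining these three facts concludes the proof.

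The substantive obstacle has already been dispatched, namely the pathwise control of $\W_1(\mu_{N,t},\widehat\mu_{N,t})$ embodied in Theorem~\ref{thmdistance}; that is where the entire apparatus of Sections~2--4 (the quantitative quenched chaos of Theorem~\ref{thm_quant_thermostat}, the sixth-moment bound of Lemma~\ref{momentconserv}, the flow derivative estimate of Proposition~\ref{proppathbdd}, and the energy control of Lemma~\ref{bddrooten}) is mobilized. What remains for the final theorem is entirely soft: a symmetry identity, one invocation of Kantorovich--Rubinstein, and the standard fact that weak convergence to a deterministic probability measure is preserved by taking independent products.
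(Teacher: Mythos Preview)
Your proof is correct and follows essentially the same route as the paper: both arguments pass through the empirical measures, use the pathwise closeness of Lemma~\ref{lemw1dist} (equivalently Theorem~\ref{thmdistance}) to transfer from $\mu_{N,t}$ to $\widehat\mu_{N,t}$, and then invoke (\ref{chaosempiric}) for the quenched limit. The only cosmetic difference is organizational: the paper replaces the two coordinates one at a time via the auxiliary function $\Psi(u)=\int\phi(u,w)\,\mathrm{d}\widehat\mu_{N,t}(w)$ and splits on the event $\{\|\textbf{V}-\widehat{\textbf{V}}\|_N\ge\epsilon\}$ by hand, whereas you go directly through $\mu_{N,t}^{\otimes 2}$ and package the comparison as a single Kantorovich--Rubinstein bound; the content is the same.
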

\begin{proof}
Since the probability density $\widehat W$ is symmetric under permutation, we have
\begin{equation}
\E \left[ \phi(v_1(t),v_2(t))\right] = \frac{1}{N-1} \sum_{j=2}^{N} \E \left[ \phi(v_1(t),v_j(t))\right].
\end{equation}
Let $\Psi(u)$ be defined by
\begin{equation}
\Psi(u)=\int_{\R}\phi(u,w)\mathrm{d}\widehat{\mu}_{N,t}(w).
\end{equation}
From the properties of $\phi$ it follows that $\Psi$ is 1-Lipschitz on $\R$ and $||\Psi||_{\infty}\leq ||\phi||_{\infty}$.
For a any given $\varepsilon >0 $, we now have
\begin{align*}
&\big | \E[\phi(v_1(t),v_2(t))] -\E[\Psi(v_1(t))]\big | \leq \\
&\hspace{0.2in} \leq \E \left |\frac{1}{N-1}\sum_{j=2}^{N}\phi (v_1(t),v_j(t))-\frac{1}{N}\sum_{j=1}^{N}\phi(v_1(t),\widehat{v}_j(t)) \right |\\
&\hspace{0.2in} \leq \E \left |\frac{1}{N-1}\sum_{j=2}^{N}\left [\phi (v_1(t),v_j(t))-\phi(v_1(t),\widehat{v}_j(t))\right] \right |+ \frac{2||\phi||_{\infty}}{N} \\
&\hspace{0.2in} = \E \left |\frac{1}{N-1}\sum_{j=2}^{N}\left [\phi (v_1(t),v_j(t))-\phi(v_1(t),\widehat{v}_j(t))\right]
\left[\mathbbm{1}_{\{||\textbf{V}-\widehat{\textbf{V}}||_N \geq\epsilon\}}+
\mathbbm{1}_{\{||\textbf{V}-\widehat{\textbf{V}}||_N <\epsilon\}}\right] \right | \\
&\hspace{0.4in} + \frac{2||\phi||_{\infty}}{N}\\
&\hspace{0.2in}\leq 2||\phi||_{\infty}\prob \left\{||\textbf{V}(t,\textbf{V}_0,\omega)-\widehat{\textbf{V}}(t,\textbf{V}_0,\omega) ||_N\geq \epsilon \right \}
+2\epsilon + \frac{2||\phi||_{\infty}}{N}.
\end{align*}
To obtain the last inequality we have used the 1-Lipschitz condition on $\phi$ and the Cauchy Schwartz inequality. A similar argument also yields
\begin{equation}
\left | \E[\Psi(v_1(t))] -\int_{\R}\Psi(v)\mathrm{d}\widehat{\mu}_{N,t}(v)\right |\leq
 2||\phi||_{\infty} \prob \left\{||\textbf{V}(t,\textbf{V}_0,\omega)-\widehat{\textbf{V}}(t,\textbf{V}_0,\omega) ||_N\geq \epsilon \right \}
 + \frac{2\|\phi\|_{\infty}}{N} +2\epsilon.
\end{equation}
Consulting Theorem \ref{thmdistance} and choosing $\epsilon=N^{-1/8}$ finally gives
\begin{align*}
&\left |\E[\phi(v_1(t),v_2(t))]-\int_{\R^2}\phi(v,w)\mathrm{d}\widehat{\mu}_{N,t}(v)\mathrm{d}\widehat{\mu}_{N,t}(w)\right|\\
&\hspace{0.1in}\leq \left | \E[\phi(v_1(t),v_2(t))] -\E[\Psi(v_1(t))]\right |+
\left | \E[\Psi(v_1(t))] -\int_{\R}\Psi(v)\mathrm{d}\widehat{\mu}_{N,t}(v)\right |\\
&\hspace{0.1in}\rightarrow 0 \hspace{0.2in}\text{when } N\rightarrow \infty.
\end{align*}
Since by (\ref{chaosempiric}) it follows that
\begin{equation}
\lim_{N\rightarrow \infty}\int_{\R^2}\phi(v,w)\mathrm{d}\widehat{\mu}_{N,t}(v)\mathrm{d}\widehat{\mu}_{N,t}(w)=
\int_{\R^2}\phi(v,w)f(v,t)f(w,t)\mathrm{d}v\mathrm{d}w,
\end{equation}
we conclude that
\begin{equation*}
\lim_{N\rightarrow \infty}\E \left [\phi(v_1(t,\textbf{V}_0,\omega),v_2(t,\textbf{V}_0,\omega))\right ] =
\int_{\R^2} \phi(v_1,v_2)f(v_1,t)f(v_2,t)\mathrm{d}v_1\mathrm{d}v_2\,,
\end{equation*}
which is what we wanted to show.
\end{proof}
\section{Acknowledgements}
We would like to thank an anonymous referee for having read the previous version of this paper very carefully and for pointing
out several important issues.
E.C. would like to thank Chalmers  Institute of Technology during a visit in the Spring of
2012, and would like to acknowledge support from N.S.F. grant DMS-1201354.
D.M. and B.W. would like to thank Kleber Carrapatoso for the
discussions during the initial phase of this work. This work was
supported by grants from the Swedish Science Council and the Knut and
Alice Wallenberg foundation.

\newpage
\vspace{2in}
\centering
Eric Carlen\\
Department of Mathematics, Hill Center,\\
Rutgers University \\
$110$ Frelinghuysen Road Piscataway NJ $08854-8019$ USA\\
email: carlen@math.rutgers.edu\\
\vspace{0.7in}
Dawan Mustafa\\
Department of Mathematical Sciences,\\
Chalmers University of Technology,\\
and Department of Mathematical Sciences,\\
University of Gothenburg,\\
SE $41296$ Gothenburg\\
email: dawan@chalmers.se\\
\vspace{0.7in}
Bernt Wennberg\\
Department of Mathematical Sciences,\\
Chalmers University of Technology,\\
and Department of Mathematical Sciences,\\
University of Gothenburg,\\
SE $41296$ Gothenburg\\
email: wennberg@chalmers.se


%
%
%

\begin{thebibliography}{99}

\bibitem{vb}
  Bagland, V. : \emph{Well-Posedness and Large Time Behaviour for the Non-cutoff Kac Equation with a Gaussian Thermostat}. Journal of Statistical Physics. 138, pp 838-875. 2010.


\bibitem{byb}
  Bagland, V. ,Wennberg, B. ,Wondmagegne, Y.  : \emph{Stationary states for the non cut-off Kac equation with a Gaussian thermostat}. Nonlinearity. 20, pp 583-604. 2007.


\bibitem{bcelm}
  Bonetto, F. , Carlen, E. , Esposito, R. , Lebowitz, J. , Marra, R. :
  \emph{Propagation of Chaos for a Thermostatted Kinetic Model }. Journal of Statistical Physics, 2013.



 \bibitem{ccm}
 Carlen, E. , Carvalho, M. , and Loss, M. : \emph{Kinetic theory and the Kac master equation}. Entropy and the Quantum II, Contemporary Mathematics 552, 1-20, 2011.

\bibitem{ccllv}
Carlen, E. , Carvalho, M. , Loss, M. , le Rouex, J. , and Villani, C. : \emph{Entropy and chaos in the Kac model} Kinetic and Related Models, 3, 85-122, 2010.

\bibitem{BCP}
  Carlen, E. , Degond, P. , and Wennberg, B. :
 \emph{Kinetic limits for pair-interaction driven master equations and biological swarm models},  Mathematical Models and Methods in Applied Sciences, 23, pp 1339-1376. 2013.



\bibitem{cpibook}
 Cercignani, C. ,Illner, R. ,and Pulvirenti, M. : \emph{The Mathematical Theory of Dilute Gases}. Springer. 1994.

\bibitem{cbook}
 Cercignani, C. : \emph{The Boltzmann Equation and Its Applications}. Springer. 1988.

\bibitem{grunbaum}
 Gr\"{u}nbaum, A. : \emph{Propgation of chaos for the Boltzmann equation}. Arch Rational Mech Anal, New York. 42, 323-345, 1971.


 \bibitem{kac}
  Kac, M. :
  \emph{Foundations of kinetic theory}. In: procedeeings of the Third Berkely Symposium on Mathematical Statistics and Probability, 1954-1955, vol.III, pp. 171-197. University of California Press, Berkerly. 1956.

  \bibitem{lanford}
 Lanford, O. : \emph{Time evolution of large classical system. Lecture Notes in Physics}. 38 (ed. E. J. Moser). Vol 38. pp. 1-111, Springer- Verlag, 1975.

  \bibitem{lazota}
  Lasota, A. , Mackey, M. :
  \emph{Chaos, Fractals, and Noise}. Stochastic Aspects of Dynamics. Second Edition, Springer-Verlag. 1994.

  \bibitem{mckean}
 McKean, H. : \emph{Propagation of chaos for a class of nonlinear parabolic equations}. Lecture Series in Differential Equations 7: 41-57. Catholic
University, Washington, D.C., 1967.

\bibitem{meleard}
 McKean, H. : \emph{Asymptotic behaviour of some interacting particle systems; McKean-Vlasov and Boltzmann models}. Lecture notes in mathematics. Volume 1627: 42-95. Probabilistic Models for Nonlinear Partial Differential Equations, 1996.

 \bibitem{berntmm}
Mishler, S. , Mouhot, C. : \emph{Kac's program in kinetic theory}. Springer-Verlag, 2012.

\bibitem{cm}
Mishler, S. , Mouhot, C.  , and  Wennberg, B. : \emph{A new approach to quantitative propagation of chaos for drift, diffusion and jump processes}. arXiv :1101.4727 v2, 2012.


 \bibitem{sznitman}
 Sznitman, A. : \emph{Topics in propagation of chaos}. Ecole d'Eté de Probabilités de Saint-Flour XIX — 1989.
 Lecture Notes in Mathematics Volume 1464, pp 165-251. 1991.

 \bibitem{villani1}
  Villani, C. : \emph{A review of mathematical topics in collisional kinetic theory}. In Handbook of Mathematical Fluid Dynamics, S. Friedlander and D. Serre, Eds, Elsevier Science. 2002.

\bibitem{berntpropchaos}
 Wennberg, B. : \emph{Random many-particle systems: applications from biology, and propagation of chaos in abstract models}. Riv. Mat. Univ. Parma (N.S.) 3, No. 2, pp 291-344. 2012.

\bibitem{berntyosef}
  Wennberg, B. ,Wondmagegne, Y.  : \emph{The Kac Equation with a Thermostatted Force Field}. Journal of Statistical Physics. Volume 124, Issue 2-4, pp 859-880. 2006.

\bibitem{yosief}
  Wondmagegne, Y. :
  \emph{Kinetic Equations with a Gaussian Thermostat}. Doctoral thesis, Department of Mathematical Sciences,
  Chalmers University of Technology and Göteborg University. 2005.


\end{thebibliography}
\end{document}